\def\qed{\rule{2mm}{2mm}}
\newtheorem{theorem}{Theorem}[section]
\newtheorem{lemma}{Lemma}[section]
\newtheorem{definition}{Definition}[section]
\newtheorem{corollary}{Corollary}[section]
\newtheorem{example}{Example}[section]
\newtheorem{remark}{Remark}[section]
\newtheorem{assumption}{Assumption}[section]
\newtheorem{algorithm}{Algorithm}[section]
\begin{document}

\title{\Large Testing homogeneity in dynamic discrete games in finite samples\thanks{
We thank St\'ephane Bonhomme and two anonymous referees for comments and suggestions that have greatly improved the manuscript. We also greatly benefited from helpful comments and suggestions from participants of various seminars and conferences where this paper was presented. Of course, all errors are our own. Takuya Ura acknowledges Small Grants from UC Davis. Portions of this research were conducted with the advanced computing resources provided by Texas A\&M High Performance Research Computing.
}
}
\author{Federico A. Bugni\\ Northwestern University\\ \url{federico.bugni@northwestern.edu}
\and
Jackson Bunting\\ Texas A\&M University\\ \url{jbunting@tamu.edu}
\and
Takuya Ura\\ University of California, Davis\\ \url{takura@ucdavis.edu}}
\maketitle
\vspace{-.5cm}
\begin{abstract}
The literature on dynamic discrete games often assumes that the conditional choice probabilities and the state transition probabilities are homogeneous across markets and over time. We refer to this as the ``homogeneity assumption'' in dynamic discrete games. This assumption enables empirical studies to estimate the game's structural parameters by pooling data from multiple markets and from many time periods. In this paper, we propose a hypothesis test to evaluate whether the homogeneity assumption holds in the data. Our hypothesis test is the result of an approximate randomization test, implemented via a Markov chain Monte Carlo (MCMC) algorithm. We show that our hypothesis test becomes valid as the (user-defined) number of MCMC draws diverges, for any fixed number of markets, time periods, and players. We apply our test to the empirical study of the U.S.\ Portland cement industry in \cite{ryan:2012}.
\begin{description}
\item[Keywords and phrases:]
Dynamic discrete choice problems, dynamic games, Markov decision problems, randomization tests, Markov chain Monte Carlo (MCMC).
\item[JEL classification:] C12, C57, C63
\end{description}
\end{abstract}
\newpage

\linespread{1.5}
\section{Introduction}
In applications of dynamic discrete games, practitioners often assume that the conditional choice probabilities and the state transition probabilities are invariant across time and markets.\footnote{In this paper, we use ``market'' to denote a cross-sectional unit.} We refer to this as the ``homogeneity assumption'' in dynamic discrete games. This is a convenient assumption, as it allows the estimation of the model's structural parameters by pooling data from multiple markets and from many time periods. 

Despite the widespread use of the homogeneity assumption in dynamic discrete games, it is plausible for this condition to fail in applications. We now provide a few examples. First, a game could suffer from a structural break in the model, which would invalidate the homogeneity assumption across time. Second, markets could be affected by persistent heterogeneity that is observed by the players but not by the researcher (e.g., \cite{arcidiacono/miller:2011}). This would invalidate the homogeneity assumption across markets. Third and relatedly, there may be multiplicity of equilibria, and different markets could be playing different equilibria. The literature has considered hypothesis testing for the multiplicity of equilibria in games. In particular, \cite{depaula/tang:2012} propose a test for the multiplicity of equilibria across markets in static games, while \cite{otsu/pesendorfer/takayashi:2016} do this in the context of dynamic games. 

In this paper, we propose a hypothesis test for the homogeneity assumption. That is, our test is designed to capture the various possible violations of the homogeneity assumption described in the previous paragraph (both across markets and over time). Our test is implemented via Markov chain Monte Carlo (MCMC) methods, and it is justified by the theory of randomization tests (cf.\ \citealp[Section 15.2]{lehmann/romano:2005}). While our MCMC test is not a standard randomization test, we establish its validity by coupling it with an underlying randomization test that is valid in finite samples yet computationally infeasible in practically relevant applications. Our contribution is to show that the distribution generated by our MCMC algorithm approximates the underlying randomization test as the number of (user-defined) MCMC draws diverges. In this sense, we interpret our proposed MCMC algorithm as a computationally feasible way to implement a computationally infeasible underlying randomization test. Moreover, our results are {\it valid in finite samples} in the sense that they hold for any fixed and finite number of players, markets, and time periods. This is an important aspect of our contribution, as the datasets used in empirical applications often have a small number of time periods and markets. For example, our empirical application is based on \cite{ryan:2012}, and has only $n=23$ markets and either $T=9$ or $T=10$ time periods.

Our methodology is especially well suited for our testing problem in dynamic discrete games. As with standard randomization tests, the quality of our test depends on the richness of possible transformations exploited by our MCMC algorithm. There are three aspects of the typical framework in dynamic discrete games that allow us to generate a rich set of such transformations. First, dynamic discrete game models often impose independence across markets and have a Markovian structure. This provides the basis for our randomization-based methodology. Second, the data used in dynamic discrete games are usually either naturally discrete or discretized by the researcher. This is important for our test, as it relies on transformations defined by ``conditional'' permutations of data, i.e., permutations of the value of the state and the action across markets or time periods for which other related states coincide. Third, dynamic discrete game models often assume that actions are independent conditional on the state. This allows us to simplify the implementation of the transformations that generate our randomization-based methodology. It is worth pointing out that our inference methodology should apply to economic problems beyond the dynamic discrete game setup provided that these three aspects hold.

The econometric framework considered in this paper is arguably very general. It includes the single-agent dynamic discrete choice model (e.g., \cite{rust:1987,hotz/miller:1993,hotz/miller/sanders/smith:1994,aguirregabiria/mira:2002}) and the Markov equilibrium dynamic game model (e.g., \cite{pakes/ostrovsky/berry:2007,aguirregabiria/mira:2007,bajari/benkard/levin:2007,pesendorfer/schmidt-dengler:2008,pesendorfer/schmidt-dengler:2010}).  Furthermore, it includes the Markov dynamic game model of \cite{aguirregabiria/magesan:2020}, which allows some players to have biased beliefs. Importantly, our methodology does not impose functional forms on the primitive structure of the dynamic game (e.g., utility functions, state transition functions, etc.). We consider this advantageous, as functional form assumptions are hard to justify solely based on economic arguments, and are thus prone to misspecification.

In a recent paper, \cite{otsu/pesendorfer/takayashi:2016} propose several hypothesis tests for dynamic discrete games. Two of their proposals are directly related to the problem considered in our paper.\footnote{The other two testing methodologies are less related to our paper. One test assumes that the state distribution is in its steady state. This condition is not commonly imposed in the literature, and our test does not require it. The other test they propose is based on the frequencies of states conditional on the state distribution in the first period.} Specifically, they consider a method to test the homogeneity across markets of the conditional choice probabilities and the state transition probabilities, under the maintained assumption that these functions are time-homogeneous. Their inference method is based on the bootstrap, and its validity is shown in an asymptotic framework in which the number of time periods $T$ diverges to infinity. However, $T$ is often small in applications. Besides the aforementioned application of \cite{ryan:2012} with $T=9$ or $T=10$, we can mention \cite{sweeting:2011} with $T=4$, \cite{collard-wexler:2013} with $T=24$, and \cite{dunne/klimek/roberts/xu:2013} with $T=5$. When $T$ is small, inference methods based on large-$T$ asymptotics, like the one provided by \cite{otsu/pesendorfer/takayashi:2016}, may yield inaccurate results. In contrast, our methodology is valid for any of the data dimensions, including $T$.

The most critical step of our MCMC algorithm is based on the so-called Euler Algorithm, described in \cite{kandel/yossi/unger/winkler:1996}. In related work, \cite{besag/mondal:2013} uses the Euler Algorithm to test whether a time series of data has a time-homogeneous Markov structure. Relative to this work, our paper incorporates several essential features of dynamic Markov discrete games. First, we recognize that the dataset in a typical dynamic game has information about actions and states. Second, our construction exploits the typical economic structure imposed in dynamic games, such as the conditional independence assumption (i.e., conditional on the current state variable, the current action variable is independent of the past information). Finally, while \cite{besag/mondal:2013} mainly focuses on data from a single market (i.e., a single sequence in their setup), our MCMC algorithm exploits the possibility that the data includes observations from multiple markets.\footnote{Section 5 in \cite{besag/mondal:2013} briefly describes a few alternative ways to extend their methodology to the case of the multiple sequences of Markov chains (i.e., multiple markets). Their description does not include the formal statistical properties. In this paper, we propose a methodology that differs from theirs, describe its implementation in detail, and prove its validity by connecting it with the theory of randomization tests.} This is a valuable aspect of our contribution, as the datasets used in empirical applications usually include from data multiple markets, e.g., \cite{ryan:2012} with $n=23$, \cite{sweeting:2011} with $n=102$, \cite{collard-wexler:2013} with $n=1,600$, and \cite{dunne/klimek/roberts/xu:2013} with $n=639$.

We explore the performance of our hypothesis test in Monte Carlo simulations. Our results show that our method provides excellent size control even in small samples, and can successfully detect relatively small deviations from the homogeneity hypothesis. In these two accounts, our test appears to work favorably in comparison with the bootstrap-based test in \cite{otsu/pesendorfer/takayashi:2016}. These favorable results appear to extend even when the discrete data have many support points, which is typical in applications. In our empirical example, we investigate the homogeneity of the decisions in the U.S.\ Portland cement industry data used in \cite{ryan:2012}. This is a key assumption in \cite{ryan:2012}, as it allows him to pool data from multiple markets to estimate the model's parameters. Unlike \cite{otsu/pesendorfer/takayashi:2016}'s test, our test finds no evidence against the homogeneity hypothesis in the data. We implement our test using the Julia package HomogeneityTestBBU, which is publicly available at the GitHub repository.\footnote{Use \texttt{Pkg.add("HomogeneityTestBBU")} to install the package in Julia. Package documentation is available in \href{https://jacksonbunting.github.io/HomogeneityTestBBU.jl/dev/}{https://jacksonbunting.github.io/HomogeneityTestBBU.jl/dev/}.}

The rest of the paper is organized as follows. 
Section \ref{sec:model} describes the dynamic discrete choice model and the hypothesis test. Section \ref{sec:MCMC} specifies our hypothesis test and its implementation via our MCMC algorithm. Section \ref{sec:validity} establishes the main theoretical results of the paper. The main technical insight is that our hypothesis test is an approximate way of implementing an underlying randomization test that is finite-sample valid yet computationally infeasible. Section \ref{sec:empiric} provides the empirical application. In Section \ref{sec:MC}, we evaluate the performance of our test in finite samples via Monte Carlo simulations. Section \ref{sec:concl} concludes. The paper's appendix collects all of the proofs, auxiliary results, and computational details related to our proposed MCMC algorithm.

\section{Setup}\label{sec:model}

\subsection{The econometric model}
We begin by describing the dynamic discrete game under consideration. We observe the outcome of $n$ markets in which $J$ players choose actions over $T $ time periods. Our setup allows for $J=1$, i.e., single-agent problems, or $J>1$, i.e., multiple-agent games. This paper's inference results are valid for all finite $n$, $ T $, and $ J $. 

We consider a setup in which the observed actions and state variables are discretely distributed. This is common in the dynamic discrete choice literature, where the state and action variables are often naturally discrete or discretized by the researcher. For every market $i=1,\ldots ,n$ and period $t=1,\ldots ,T$, let $A_{i,t}$ be the random variable that specifies the actions chosen by the players in market $i$ and period $t$, and let $S_{i,t}$ be the random variable that specifies the state variable of market $i$ and period $t$. We use $\mathcal{S}$ to denote the common support of $S_{i,t}$ and $\mathcal{A}$ to denote the common support of $A_{i,t}$. We define the following $n\times T$ matrices: 
\begin{align*}
S &~\equiv~(S_{i,t}:~i=1,\ldots ,n,~t=1,\ldots ,T),\\
A &~\equiv~(A_{i,t}:~i=1,\ldots ,n,~t=1,\ldots ,T).
\end{align*}
In this notation, the data are then given by
$$
X~\equiv~(S,A).
$$
By definition, the support of $X$ is given by $\mathcal{X} \equiv\mathcal{S}^{nT} \times \mathcal{A}^{nT}$.

\begin{remark}
We assume a balanced panel (i.e., all $n$ markets are fully observed over $T$ time periods) only for the simplicity of notation and exposition. Our arguments extend immediately to the case in which each market $i=1,\dots,n$ is observed for $T_i$ time periods. 
\end{remark}

The following assumption is standard in much of the literature on dynamic discrete games.
\begin{assumption}\label{ass:M} 
The following conditions hold:
\begin{enumerate}[(a)]
\item $((S_{i,t},A_{i,t}):t=1,\ldots ,T) $ and $((S_{j,t},A_{j,t}):t=1,\ldots ,T) $ are independent for all $i,j =1,\dots,n$ with $i\neq j$.
\item $(S_{i,t},A_{i,t})$ and $(S_{i,1},A_{i,1},\ldots,S_{i,t-2},A_{i,t-2})$ are conditionally independent given $(S_{i,t-1},A_{i,t-1})$ for every $i=1,\ldots ,n$ and $ t=3,\ldots, T$.
\item $A_{i,t}$ and $(S_{i,t-1},A_{i,t-1})$ are conditionally independent given $S_{i,t}$ for every $i=1,\ldots ,n$ and $t=2,\ldots, T$.
\end{enumerate}
\end{assumption}

Assumption \ref{ass:M} has three parts. Assumption \ref{ass:M}(a) imposes that markets are independently distributed. Assumption \ref{ass:M}(b) indicates that the observations of state and actions are a Markov process. Assumption \ref{ass:M}(c) imposes that the current actions are independent of past information once we condition on the current state. Assumptions \ref{ass:M}(b)-(c) are high-level restrictions that are typically imposed on the equilibrium strategies used by the players. In particular, they follow from the assumption that players use Markov strategies (e.g., \cite{maskin/tirole:2001}), as assumed in \cite{pakes/ostrovsky/berry:2007,aguirregabiria/mira:2007,bajari/benkard/levin:2007,pesendorfer/schmidt-dengler:2008}. These conditions are imposed even in models in which the players' beliefs are allowed to be out of equilibrium, i.e., do not coincide with the true equilibrium probabilities (e.g., \cite{aguirregabiria/magesan:2020}). Finally, we clarify that Assumption \ref{ass:M} refers to {\it observed} states and actions. As such, it may fail if there are state or action variables that are unobserved to the researcher and influence the distribution of their observed counterparts.

Assumption \ref{ass:M} is the only maintained assumption to study the validity of our hypothesis test. Notably, we do not impose functional forms on the primitive structure of the dynamic game, such as the utility functions, the state transition functions, or the discount factors. As explained earlier, we view this as a virtue of our methodology, as these restrictions can be hard to justify based only on economic arguments. Relatedly, while one could improve the statistical power of our test by exploiting functional form restrictions on the primitives of the dynamic game, they inevitably carry the risk of producing invalid inference when these are misspecified.

We now introduce the necessary notation to express our hypothesis of interest. We use $\sigma_{i,t}$ to denote the conditional choice probability for market $i$ and period $t$, i.e., for every $(s,a)\in\mathcal{S\times A}$,
$$
\sigma_{i,t}(a|s)~\equiv ~P(A_{i,t}=a|S_{i,t}=s).
$$
We use $f_{i,t+1}$ to denote the state transition probability from period $t$ to $t+1$ for market $i$, i.e., for every $(s,a,s^{\prime})\in\mathcal{S}\times\mathcal{A}\times\mathcal{S}$ ,
$$
f_{i,t+1}(s^{\prime}|a,s)~\equiv ~P(S_{i,t+1}=s^{\prime}|(S_{i,t},A_{i,t})=(s,a)).
$$
Finally, we use $m_{i}(s)$ to denote the marginal state distribution for market $i$ in period 1, i.e., for every $s\in\mathcal{S}$,
$$
m_{i}(s)~\equiv ~P(S_{i,1}=s).
$$
With this notation in place, we specify our hypothesis testing problem in the next section.

\subsection{The hypothesis testing problem}

Our goal is to test whether the ``homogeneity assumption'' holds in the data, i.e., whether the conditional choice probabilities and state transition probabilities are homogeneous across time and markets. That is,
\begin{equation}
H_{0}:\sigma_{i,t}(a|s)=\sigma(a|s)~~\text{and}~~f_{i,t+1}(s^{\prime}|a,s)=f(s^{\prime}|a,s). \label{eq:HT}
\end{equation}

Note that $H_{0}$ in \eqref{eq:HT} represents two types of homogeneity: time and market homogeneity, and involves two functions: conditional choice probabilities and state transition probabilities. In this sense, our hypothesis test evaluates four homogeneity conditions: time homogeneity of the conditional choice probabilities, market homogeneity of the conditional choice probabilities, time homogeneity of the state transition probabilities, and market homogeneity of the state transition probabilities. Under Assumption \ref{ass:M}, a rejection of $H_{0}$ would be indicative that one or more of these homogeneity conditions is violated, suggesting it is not appropriate to pool data across markets and time periods.

As discussed in the introduction, there may be many possible reasons for the homogeneity assumption to fail and, in general, it may be difficult to distinguish among the possible reasons. For example, given the recent literature on separately identifying equilibrium selection and market-specific permanent unobserved heterogeneity \citep{aguirregabiria/mira:2019,luo/xiao/xiao:2022}, it may be difficult to distinguish between these two possible causes for failure of the homogeneity assumption. Nevertheless, in certain applications, one may feel comfortable that some of the conditions are satisfied and should be part of our maintained assumptions. For example, in a given application, one may be confident that the conditional choice probability and state transition probability are time-homogeneous. Then, one could reinterpret $H_{0}$ as testing the market homogeneity of the conditional choice probabilities and state transition probabilities. Also, if one is confident that market and time homogeneity holds for some subsets of the time periods (e.g., before and after a policy change), then $H_{0}$ may be reinterpreted as testing homogeneity of the conditional choice probabilities and state transition probabilities across the subsets of periods.

Under Assumption \ref{ass:M} and $H_{0}$, Lemma \ref{lem:Lik1} in the appendix shows that the likelihood of the data ${X}=({S},{A})$ evaluated at any realization $\tilde{X}=(\tilde{S},\tilde{A})\in\mathcal{X}$ is as follows:
\begin{equation}
P(X=\tilde{X})~=~\prod_{i=1}^{n}\left(m_{i}(\tilde{S}_{i,1})~\left(\prod_{t=1}^{T}\sigma(\tilde{A}_{i,t}|\tilde{S}_{i,t})\right)~\left(\prod_{{t}=1}^{T-1}f(\tilde{S}_{i,{t} +1}|\tilde{S}_{i,{t}},\tilde{A}_{i,{t}})\right)\right). \label{eq:lik1}
\end{equation}
This expression reveals that the markets are independently distributed (Assumption \ref{ass:M}(a)), but they are not necessarily identically distributed because $m_i(\cdot)$ may depend on $i$. That is, even though the conditional choice probabilities and state transition probabilities are homogeneous under $H_{0}$, markets can still be heterogeneous due to differences in their initial state values. This is a desired feature in our testing problem, as the dynamic discrete choice literature usually allows the initial state distribution to be market-specific.

We conclude the section with an observation about the type of economic models considered in this paper. From an econometric viewpoint, our goal is to evaluate the homogeneity assumption (i.e., $H_{0}$ in \eqref{eq:HT}) using discrete data that satisfies Assumption \ref{ass:M}. The discreteness of the data is necessary for our MCMC algorithm to find data transformations that can deliver non-trivial power. We motivated this problem using dynamic discrete choice games because they are an important class of models ideally suited to this econometric framework. However, it is worth highlighting that our methodology applies to any other discrete panel-data model that satisfies Assumption \ref{ass:M}.\footnote{We thank an anonymous referee for this observation.} To our knowledge, the main roadblock to applying our test beyond dynamic discrete games and single-agent problems is the requirement that the data be discrete.

\section{Our hypothesis test}\label{sec:MCMC}

In this paper, we propose to reject $H_{0}$ in \eqref{eq:HT} whenever the significance level $\alpha$ is larger than or equal to our $p$-value, which we denote by $\hat{p}_{K}$. That is,
\begin{equation}
 \phi_{K}(X)~\equiv~ 1\{\hat{p}_{K}\leq\alpha\}.
 \label{eq:testDefn}
\end{equation}
In turn, our $p$-value $\hat{p}_{K}$ is the result of constructing $K$ transformations of the data via our MCMC algorithm, which is specified in Section \ref{sec:def_algorith}. This MCMC algorithm produces $K$ sequential transformations of the data $X$, denoted by $(X^{(1)},\ldots,X^{(K)})$. Our $p$-value is then computed as follows
\begin{equation}
\hat{p}_{K}~\equiv ~\frac{1}{K}\sum_{k=1}^{K}1\{\tau(X^{(k)})\geq \tau(X)\},
\label{eq:pvalue}
\end{equation}
where $\tau:\mathcal{X} \to \mathbb{R}$ denotes the test statistic designed to detect departures from $H_{0}$ in the data. 

One notable feature of our hypothesis test is that its validity does not depend on the choice of the test statistic (see Theorem \ref{thm:sizeControl}). However, the power of our test depends on this choice. Example \ref{ex:tau_examples} below specifies several test statistics considered in the related literature and describes the types of heterogeneity they are designed to detect. In practice, the choice of the test statistic should be guided by the type of heterogeneity one is most interested in detecting.\footnote{While our test is valid for any test statistic, this does not extend to when multiple instances of our test are implemented with several test statistics. This can create a multiple-testing problem, thereby invalidating the conclusions.}

\begin{example}[Examples of test statistic $\tau(X)$]\label{ex:tau_examples}
\citet[Eqs.\ (4), (7)]{otsu/pesendorfer/takayashi:2016} propose the following test statistics:
\begin{align}
\tau_{1}(X)&~\equiv~ \sum_{i=1}^{n}\sum_{(a,s)\in\mathcal{A}\times\mathcal{S}}(\hat{\sigma}_{i}(a|s)-\hat{\sigma}(a|s))^{2}\left(\frac{\sum_{t=1}^{T}1\{S_{i,t}=s\} }{\hat{\sigma}(a|s)}\right)\notag \\
\tau_{2}(X)&~\equiv~ 2\sum_{i=1}^{n}\sum_{(a,s)\in\mathcal{A}\times\mathcal{S}}\hat{\sigma}_{i}(a|s)\ln \left(\frac{ \hat{\sigma}_{i}(a|s)}{\hat{\sigma}(a|s)}\right)\sum_{t=1}^{T}1\{S_{i,t}=s\} ,\label{eq:testStatOtsu}
\end{align}
where we interpret $0/0 =0$ and $0\times \ln(0)=0$, and, for each $i=1,\dots,n$ and ${(a,s)\in\mathcal{A}\times\mathcal{S}}$, we define
\begin{align*}
\hat{\sigma}_{i}(a|s)&~\equiv~ \frac{\sum_{t=1}^{T}1\{ A_{i,t}=a,S_{i,t}=s\} }{\sum_{t=1}^{T}1\{S_{i,t}=s\} } \\
\hat{\sigma}(a|s)&~\equiv~ \frac{\sum_{i=1}^{n}\sum_{t=1}^{T}1\{ A_{i,t}=a,S_{i,t}=s\} }{\sum_{i=1}^{n}\sum_{t=1}^{T}1\{S_{i,t}=s\} }.
\end{align*}
\citet[pages 531-2]{otsu/pesendorfer/takayashi:2016} argue that $\tau_{2}(X)$ has optimal power properties in the context of their asymptotic framework. Be this as it may, the optimality result based on asymptotic arguments may be inaccurate in empirical situations in which the number of markets $n$ and periods $T$ is small relative to other features of the problem, such as the number of players $J$, the state space $\mathcal{S}$, and the action space $\mathcal{A}$. To explore the finite sample properties of the two test statistics in \eqref{eq:testStatOtsu}, we entertain both in our Monte Carlo simulations and empirical application.

The test statistics in \eqref{eq:testStatOtsu} compare market-specific conditional choice probabilities with their pooled counterpart, so they are specifically designed to detect heterogeneity across markets. By construction, the resulting tests are ineffective in detecting the presence of time heterogeneity, such as a structural break. If one were specifically interested in detecting time heterogeneity, one might consider test statistics that compare time-specific conditional choice probabilities with their pooled counterpart, such as
\begin{align}
\tilde\tau_{1}(X)&~\equiv~ \sum_{t=1}^{T}\sum_{(a,s)\in\mathcal{A}\times\mathcal{S}}(\hat{\sigma}_{t}(a|s)-\hat{\sigma}(a|s))^{2}\left(\frac{\sum_{i=1}^{n}1\{S_{i,t}=s\} }{\hat{\sigma}(a|s)}\right)\notag \\
\tilde\tau_{2}(X)&~\equiv~ 2\sum_{t=1}^{T}\sum_{(a,s)\in\mathcal{A}\times\mathcal{S}}\hat{\sigma}_{t}(a|s)\ln \left(\frac{ \hat{\sigma}_{t}(a|s)}{\hat{\sigma}(a|s)}\right)\sum_{i=1}^{n}1\{S_{i,t}=s\} ,\label{eq:testStatOtsuB}
\end{align}
where, for each $t=1,\dots,T$, we define
\begin{align*}
\hat{\sigma}_{t}(a|s)~\equiv~ \frac{\sum_{i=1}^{n}1\{ A_{i,t}=a,S_{i,t}=s\} }{\sum_{i=1}^{n}1\{S_{i,t}=s\} }.
\end{align*}
These test statistics are specialized in detecting time heterogeneity but, for analogous reasons as before, are ineffective in detecting heterogeneity across markets. If one were interested in simultaneously detecting both sources of heterogeneity, it would be natural to combine the test statistics in \eqref{eq:testStatOtsu} and \eqref{eq:testStatOtsuB} into non-specialized statistics, such as
\begin{align}
 \tau_{1}(X)+\tilde\tau_{1}(X)~~~~~~~\text{ or }~~~~~~~\tau_{2}(X)+\tilde\tau_{2}(X).\label{eq:testStatOtsuC}
\end{align}
Naturally, the statistics specifically designed to detect a specific source of heterogeneity  (i.e., \eqref{eq:testStatOtsu} or \eqref{eq:testStatOtsuB}) will be more powerful in detecting it than the non-specialized statistics in \eqref{eq:testStatOtsuC}. As mentioned earlier, the choice of the test statistic should be guided by the type of heterogeneity one is most interested in detecting. Our hypothesis test is valid for any choice of test statistic, including all of those mentioned within this example.
\end{example}

\subsection{The MCMC algorithm}\label{sec:def_algorith}

Our MCMC algorithm requires some notation. Let $I= (I_1,I_2)$ denote an arbitrary pair of markets $I_1$ and $I_2$ in the data, i.e., $I_1,I_2 \in \{1,2,\dots,n\}$. We allow for $I_1=I_2$. We use $\mathcal{I}$ to denote the collection of all such pairs of markets, i.e., $|\mathcal{I}|=n^2$. We also define several sets.

\begin{definition}[$R_{S}(I,\breve{S})$]\label{def:RS}
For any $I = (I_1,I_2)\in\mathcal{I} $ and $ \breve{S}\in\mathcal{S}^{nT}$, $R_{S}(I,\breve{S})$ is the set of all $ \tilde{S}\in\mathcal{S}^{nT}$ satisfying the following conditions for all $s,s^{\prime}\in\mathcal{S}$:
\begin{enumerate}[(a)]
\item $\tilde{S}_{i,1}=\breve{S}_{i,1}$ for all $i=1,\ldots ,n$,
\item $\sum_{t=1}^{T-1}1\{ \tilde{S}_{i,t}=s,\tilde{S}_{i,t+1}=s^{\prime}\} =\sum_{t=1}^{T-1}1\{ \breve{S}_{i,t}=s,\breve{S}_{i,t+1}=s^{\prime}\} $ for all $i \not\in \{I_1,I_2\}$,
\item $\sum_{i= I_1,I_2}\sum_{t=1}^{T-1}1\{ \tilde{S}_{i,t}=s,\tilde{S}_{i,t+1}=s^{\prime}\} =\sum_{i= I_1,I_2}\sum_{t=1}^{T-1}1\{ \breve{S}_{i,t}=s,\breve{S}_{i,t+1}=s^{\prime}\} $.
\end{enumerate}
\end{definition}

In words, $R_{S}(I,\breve{S})$ is the set of all state configurations that result from permuting the state data $\breve{S}$ subject to conditions (a)-(c), which we now interpret. First, condition (a) indicates that the initial value of the state variable must remain unchanged across markets. The reason behind this restriction is that our framework does not restrict the initial state distribution (i.e., $\{m_i(\tilde{S}_{i,1})\}_{i=1}^{n}$ in \eqref{eq:lik1}). In turn, conditions (b)-(c) imply that the aggregate state transition frequencies across all markets $i=1,\dots,n$ must remain constant. This restriction is achieved by requiring the state transition frequencies to remain invariant for each market $i \not\in \{I_1,I_2\}$ (by condition (b)) and on aggregate for markets $i \in \{I_1,I_2\}$ (by condition (c)). The main reason behind breaking an aggregate restriction into conditions (b) and (c) is computational tractability. Under Assumption \ref{ass:M} and $H_{0}$, conditions (a)-(c) imply that each state configuration in $R_{S}(I,\breve{S})$ has the same value of the likelihood function, provided that it is paired with a suitable action configuration. These suitable action configurations are precisely those in next definition.

\begin{definition}[$R_{A}(\tilde{S},(\breve{S},\breve{A}))$]\label{def:RA}
For any $\tilde{S},\breve{S}\in\mathcal{S}^{nT}$ and $ \breve{A}\in\mathcal{A}^{nT}$, $R_{A}(\tilde{S},(\breve{S},\breve{A}))$ is the set of all $\tilde{A}\in\mathcal{A}^{nT}$ satisfying the following conditions for all $s,s^{\prime}\in\mathcal{S}$ and $ a\in\mathcal{A}$:
\begin{enumerate}[(a)]
\item $\sum_{i=1}^{n}\sum_{t=1}^{T-1}1\{ \tilde{S}_{i,t}=s,\tilde{A}_{i,t}=a,\tilde{S}_{i,t+1}=s^{\prime}\} =\sum_{i=1}^{n} \sum_{t=1}^{T-1}1\{ \breve{S}_{i,t}=s,\breve{A}_{i,t}=a,\breve{S}_{i,t+1}=s^{\prime}\} $,
\item $\sum_{i=1}^{n}1\{ \tilde{S}_{i,T}=s,\tilde{A}_{i,T}=a\} =\sum_{i=1}^{n}1\{ \breve{S}_{i,T}=s,\breve{A}_{i,T}=a\} $.
\end{enumerate}
\end{definition}

By definition, $R_{A}(\tilde{S},(\breve{S},\breve{A}))$ is the set of action configurations that result from permuting the action data $\breve{A}$ subject to conditions (a)-(b), which we explain next. Condition (a) implies that the aggregate state and action transition frequencies across all markets $i=1,\dots,n$ remain constant. Condition (b) imposes an analogous requirement for the terminal period. Under Assumption \ref{ass:M} and $H_{0}$, these restrictions imply that the hypothetical data $(\breve{S},\breve{A})$ has the same likelihood as the state configuration $\tilde{S}$ paired with any action configuration in $R_{A}(\tilde{S},(\breve{S},\breve{A}))$.

Before explaining how $R_{S}(I,\breve{S})$ and $R_{A}(\tilde{S},(\breve{S},\breve{A}))$ are used in our MCMC algorithm, we illustrate their computation in a relatively simple example. While the conditions in Definitions \ref{def:RS} and \ref{def:RA} are not conceptually complicated, the example reveals that computing these sets explicitly requires thoughtful consideration, even in a relatively simple case.

\begin{example}[Computing $R_{S}(I,\breve{S})$ and $R_{A}(\tilde{S},(\breve{S},\breve{A}))$ in a simple case]\label{ex:Rsets}
    Consider a setup with $n=3$ markets, $T=4$ periods, supports $\mathcal{S} = \mathcal{A} = \{1,2,3,4\}$, and state and action data equal to
        \begin{equation}
       \breve{S} = \left(\begin{array}{cccc}
            1&2&4&3  \\
            2&1&4&3\\
            3&1&3&4  
       \end{array}\right)~~~~~~\text{and}~~~~~~  \breve{A} =  \left(\begin{array}{cccc}
            2&2&1&4  \\
            2&2&3&1\\
            1&3&3&1  
       \end{array}\right).\label{eq:exampleA_and_S}
    \end{equation}
    
    We now compute $R_{S}(I,\breve{S})$ when $I = (1,3)$ and $\breve{S}$ in \eqref{eq:exampleA_and_S}. By Definition \ref{def:RS}, $R_{S}(I,\breve{S})$ is composed of matrices of size $3\times 4$ formed by restricted permutations of $\breve{S}$. Condition (a) says that any matrix in $R_{S}(I,\breve{S})$ has a first column equal to the first column of $\breve{S}$, i.e., $(1,2,3)'$. Condition (b) implies that any matrix in $R_{S}(I,\breve{S})$ has its second row (i.e., $i \not\in \{I_1,I_2\}=\{1,3\}$) equal to the second row of $\breve{S}$, i.e., $(2,1,4,3)$. To see why, note that any other configuration of the second row would alter the value of  $\sum_{t=1}^{T-1}1\{ \tilde{S}_{2,t}=s,\tilde{S}_{2,t+1}=s^{\prime}\}$ for some $s,s' \in \mathcal{S} = \{1,2,3,4\}$. Finally, condition (c) implies two possible configurations of the first and third rows of any matrix in $R_{S}(I,\breve{S})$. Either these coincide with the corresponding rows of $\breve{S}$ (i.e., $(1,2,4,3)$  and $(3,1,3,4)$), or they are interchange information across these markets, and are equal to $(1,3,4,3)$ and $(3,1,2,4)$. Once again, any other configuration of the first and third rows would alter the value of $\sum_{i= 1,3}\sum_{t=1}^{T-1}1\{ \tilde{S}_{i,t}=s,\tilde{S}_{i,t+1}=s^{\prime}\} =\sum_{i= 1,3}\sum_{t=1}^{T-1}1\{ \breve{S}_{i,t}=s,\breve{S}_{i,t+1}=s^{\prime}\} $ for some $s,s' \in \mathcal{S} = \{1,2,3,4\}$. In conclusion, $R_{S}(I,\breve{S})$ only has two elements:
        \begin{equation*}
       \left(\begin{array}{cccc}
            1&2&4&3  \\
            2&1&4&3\\
            3&1&3&4  
       \end{array}\right)~~~~\text{and}~~~         \left(\begin{array}{cccc}
            1&3&4&3  \\
            2&1&4&3\\
            3&1&2&4  
       \end{array}\right).
    \end{equation*}
    This example also reveals that restricting condition (c) to a pair of markets (in this case, $i \in \{I_1,I_2\}=\{1,3\}$) helps to reduce the possible number of elements in $R_{S}(I,\breve{S})$. In particular, we are not considering exchanges of information between markets 1 and 2, or markets 2 and 3.
    
Next, we compute $R_{A}(\tilde{S},(\breve{S},\breve{A}))$ with $\tilde{S}=\breve{S}$ and $\breve{A}$ given by \eqref{eq:exampleA_and_S}. By Definition \ref{def:RA}, $R_{A}(\tilde{S},(\breve{S},\breve{A}))$ is composed of matrices of size $3\times 4$ formed by restricted permutations of $\breve{A}$. Condition (a) provides a rule to interchange action data within the first three columns of $\breve{A}$, i.e., $t< T=4$. The idea here is to scan the columns of $\tilde{S}$ for repetitions of pairs of consecutive states $(\tilde{S}_{i,t},\tilde{S}_{i,t+1})$ with $t\leq T=4$. It is not hard to verify that the only repetition occurs in the first two rows with $(\tilde{S}_{i,t},\tilde{S}_{i,t+1})=(4,3)$ for $i=1,2$ and $t=3$. This implies that condition (a) allows to interchange $\breve{A}_{1,3}=1$ and $\breve{A}_{2,3}=3$. In turn, condition (b) provides a rule to interchange action data within the last columns of $\breve{A}$, i.e., $T=4$. In this case, the idea is to scan repeated entries in the last column of $\breve{S}$. In this case, we have $S_{i,t}=3$ for $i=1,2$ and $t=4$. By condition (b), we can then interchange $\breve{A}_{1,4}=4$ and $\breve{A}_{2,4}=1$. Combining both possibilities, we conclude that  $R_{A}(\tilde{S},(\breve{S},\breve{A}))$ has four elements:
        \begin{equation*}
       \left(\begin{array}{cccc}
            2&2&1&4  \\
            2&2&3&1\\
            1&3&3&1
       \end{array}\right),~~         \left(\begin{array}{cccc}
            2&2&3&4  \\
            2&2&1&1\\
            1&3&3&1
       \end{array}\right),~~         \left(\begin{array}{cccc}
            2&2&1&1  \\
            2&2&3&4\\
            1&3&3&1
       \end{array}\right),~~\text{and}~~         \left(\begin{array}{cccc}
            2&2&3&1  \\
            2&2&1&4\\
            1&3&3&1
       \end{array}\right).
    \end{equation*}
Enumerating these sets is achievable with some consideration in this example, but the task becomes significantly more challenging in more complex data setups. In this respect, it is important to stress that implementing our test does not require enumerating these sets.
\end{example}

Having introduced and illustrated Definitions \ref{def:RS} and \ref{def:RA}, we now specify our MCMC algorithm.

\begin{algorithm}[MCMC algorithm]\label{alg:MCMC}
Let $(X^{(1)},\ldots,X^{(K)})$ denote the following Markov chain.
\begin{itemize}
    \item[(a)] Initiation. Initiate the chain with $X^{(1)}=X$.
    \item[(b)] Iteration. The rest of the chain is iteratively generated as follows. For any $k=2,\ldots ,K$ and given $(X^{(1)},\ldots,X^{(k-1)})$, $X^{(k)}=(S^{(k)},A^{(k)})$ is randomly generated as follows:
\begin{itemize}
\item Step 1: Draw $I^{(k)}$ uniformly from $\mathcal{I}$, independently of $(X^{(1)},\ldots,X^{(k-1)})$.
\item Step 2: Given $(X^{(k-1)},I^{(k)})$, draw $S^{(k)}$ uniformly from $ R_{S}(I^{(k)},S^{(k-1)})$.
\item Step 3: Given $(X^{(k-1)},I^{(k)},S^{(k)})$, draw $A^{(k)}$ uniformly from $ R_{A}(S^{(k)},X^{(k-1)})$.\hfill $\blacksquare$
\end{itemize}
\end{itemize}
\end{algorithm}

At each step $k=2,\dots,K$, our MCMC algorithm randomly permutes actions and states in the data. By construction, the algorithm implies the following transition probabilities for all $k=2,\dots,K$, $X^{(1)},\ldots,X^{(k-1)}\in\mathcal{X}$, $I\in\mathcal{I}$, and $\tilde{X}= (\tilde{S},\tilde{A})\in\mathcal{X}$,\begin{align}
P(I^{(k)}=I\mid X^{(1)},\ldots,X^{(k-1)})
&~=~\frac{1}{|\mathcal{I}|},\label{eq:dist_0}\\
P(S^{(k)}=\tilde{S}\mid I^{(k)},X^{(1)},\ldots,X^{(k-1)})
&~=~\frac{1\{\tilde{S}\in R_{S}(I^{(k)},{S}^{(k-1)})\}}{|R_{S}(I^{(k)},{S}^{(k-1)})|}, \label{eq:dist_1}\\
P(A^{(k)}=\tilde{A}\mid S^{(k)},I^{(k)},X^{(1)},\ldots,X^{(k-1)})
&~=~\frac{1\{ \tilde{A}\in R_{A}(S^{(k)},X^{(k-1)})\}}{|R_{A}(S^{(k)},X^{(k-1)})|}. \label{eq:dist_2}
\end{align}
We note that \eqref{eq:dist_1} and \eqref{eq:dist_2} are well defined, as both denominators can be shown to be positive.
\subsection{Implementation of our MCMC algorithm}\label{section:imp_algorith}

Each iteration of the MCMC algorithm \ref{alg:MCMC} involves three steps. Step 1 is computationally and conceptually straightforward. Steps 2 and 3 require randomly drawing state and action configurations uniformly over the sets $ R_{S}(I^{(k)},S^{(k-1)})$ and $ R_{A}(S^{(k)},X^{(k-1)})$, respectively. As we argued in the context of Example \ref{ex:Rsets}, these sets may be difficult to enumerate even for simple data configurations. Importantly, our MCMC algorithm does not require us to enumerate these sets, but rather sample from them uniformly. The remainder of this section provides an overview of how we implement Steps 2 and 3. We defer to Section \ref{app:MCMC} for details.

Step 2 requires sampling $S^{(k)}$ uniformly from the set $ R_{S}(I^{(k)},S^{(k-1)})$. Given a pair of markets $I^{(k)}$ and state data in $S^{(k-1)}$, the restrictions considered in $ R_{S}(I^{(k)},S^{(k-1)})$ are relatively hard to implement. To construct a feasible implementation of step 2, we crucially rely on the Euler Algorithm (see \cite{kandel/yossi/unger/winkler:1996,besag/mondal:2013} for details). In particular, when both markets in $I^{(k)}$ are equal (i.e., $I^{(k)}=(i,i)$ for $i=1,2,\dots,n$), step 2 can be implemented by applying the Euler Algorithm for each market. Our marginal contribution in step 2 is to extend the Euler Algorithm to the case where the markets in $I^{(k)}$ differ. Our proposal is to concatenate the state information from both markets in $I^{(k)}$ and repeatedly apply the Euler Algorithm until two conditions hold: the initial state is the same in each market (i.e., $\tilde{S}_{i,1}=\breve{S}_{i,1}$) and the $T^{\text{th}}$ state in the first market satisfies $\tilde{S}_{I_1,T}\in\{\breve{S}_{I_1,T},\breve{S}_{I_2,T}\}$. The properties of the Euler Algorithm ensure that these two conditions imply the resulting chain belongs to $R_{S}(I^{(k)},S^{(k-1)})$. Importantly, these two conditions are far simpler to verify than the conditions in Definition \ref{def:RS}, which is one reason that our implementation is computationally feasible whereas the enumeration approach is not.\footnote{Not only are conditions simpler to verify, they may be verified without needing to complete the full $2\times T$ length chain. For example, if the candidate $\tilde{S}_{I_1,T}$ is not an element of $\{\breve{S}_{I_1,T},\breve{S}_{I_2,T}\}$, one may stop after constructing a $T$ length chain. Indeed, it is sometimes possible to verify that the second condition fails when the chain being constructed is of length $1<t<T$.} Relative to the market-by-market version of the algorithm, our modification typically generates a much larger set of data permutations, which tends to improve the power properties of our hypothesis test. We provide additional information about step 2 of the MCMC algorithm in Section \ref{sec:euler_algorithm} of the appendix, where we specify the original Euler Algorithm (Algorithm \ref{alg:Euler}) and our modification (Algorithm \ref{alg:S_k}), and we formally show that the latter exactly implements step 2 (see Lemma \ref{lem:EulerWorks}). Algorithm \ref{alg:S_k} draws $S^{(k)}$ uniformly from $ R_{S}(I^{(k)},S^{(k-1)})$, because we repeatedly sample uniformly from a superset of $R_{S}(I^{(k)},S^{(k-1)})$ until the realization belongs to $ R_{S}(I^{(k)},S^{(k-1)})$.

Step 3 requires sampling $A^{(k)}$ uniformly from the set $R_{A}(S^{(k)},X^{(k-1)})$. Given data in $X^{(k-1)}$ and state data in $S^{(k)}$, the restrictions considered in $R_{A}(S^{(k)},X^{(k-1)})$ are relatively easy to impose (compared to those in $ R_{S}(I^{(k)},S^{(k-1)})$). As a consequence, step 3 is computationally light. All we need to do is to permute the action data in $A^{(k-1)}$ subject to the simple restrictions in $R_{A}(S^{(k)},X^{(k-1)})$. Further details of step 3 are provided in Section \ref{sec:step3} of the appendix, where we specify an algorithm (Algorithm \ref{alg:A_k}) and we prove that it implements step 3 (see Lemma \ref{lem:Step3works}).

\section{Theoretical properties}\label{sec:validity}

We open this section with the main theoretical result of this paper.

\begin{theorem}\label{thm:sizeControl}
Under Assumption \ref{ass:M} and $H_{0}$ in \eqref{eq:HT}, the test in \eqref{eq:testDefn} satisfies
\begin{equation}
\underset{K\to\infty}{\lim \sup}~E[\phi_K(X)] ~~\leq~~ \alpha, \label{eq:asySizeControl}
\end{equation}
where the expectation is taken with respect to the randomness in $(X,X^{(1)},\ldots,X^{(K)})$, i.e., both in the data $X$ and in the random draws $(X^{(1)},\ldots,X^{(K)})$ generated by our MCMC algorithm.
\end{theorem}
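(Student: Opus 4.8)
First I would couple the feasible test $\phi_K$ with an infeasible conditional randomization test that is exactly of level $\alpha$, and then let $K\to\infty$. The coupling is built from the likelihood \eqref{eq:lik1}: its first factor depends on a realization $\tilde{X}=(\tilde{S},\tilde{A})\in\mathcal{X}$ only through the vector of initial states $(\tilde{S}_{i,1})_{i=1}^{n}$, its second factor only through the counts $N_{SA}(\tilde{X})\equiv(\#\{(i,t):(\tilde{S}_{i,t},\tilde{A}_{i,t})=(s,a)\})_{(s,a)\in\mathcal{S}\times\mathcal{A}}$, and its third factor only through the counts $N_{SAS}(\tilde{X})\equiv(\#\{(i,t):(\tilde{S}_{i,t},\tilde{A}_{i,t},\tilde{S}_{i,t+1})=(s,a,s')\})_{(s,a,s')}$. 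Letting $T(\tilde{X})$ collect these three objects, \eqref{eq:lik1} is constant on every level set $\mathcal{X}_{t}\equiv\{\tilde{X}\in\mathcal{X}:T(\tilde{X})=t\}$, so under $H_0$ the conditional law of $X$ given $T(X)=t$ is uniform on the \emph{finite} set $\mathcal{X}_{t}$ and does not depend on the nuisance parameters $(m_1,\dots,m_n,\sigma,g)$. This $T$ is exactly the statistic that the MCMC algorithm of Section~\ref{sec:def_algorith} leaves invariant, so the chain $(X^{(1)},X^{(2)},\dots)$ started from the observed $X$ never leaves $\mathcal{X}_{T(X)}$.

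Next I would use two properties of that chain, established in the appendix via the Euler algorithm: it is irreducible on $\mathcal{X}_{T(X)}$ and admits the uniform law on $\mathcal{X}_{T(X)}$ as its (hence unique) stationary distribution. Since $\mathcal{X}_{T(X)}$ is finite, the Markov-chain ergodic theorem applies from the deterministic start $X$, giving, conditionally on $X$ and therefore (as $\mathcal{X}$ is finite) almost surely,
\[
\hat{p}_{K}~\xrightarrow[K\to\infty]{}~p^{\ast}(X)~\equiv~\frac{\#\{\tilde{X}\in\mathcal{X}_{T(X)}:\tau(\tilde{X})\ge\tau(X)\}}{\#\mathcal{X}_{T(X)}}.
\]
Hence $\phi_K(X)=1\{\hat{p}_K\le\alpha\}$ obeys $\limsup_{K}\phi_K(X)\le 1\{p^{\ast}(X)\le\alpha\}$ a.s., and since $0\le\phi_K\le1$, the reverse Fatou inequality together with this a.s. bound gives $\limsup_{K\to\infty}E[\phi_K(X)]\le E[1\{p^{\ast}(X)\le\alpha\}]=P(p^{\ast}(X)\le\alpha)$. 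Finally, conditionally on $T(X)=t$ one has $p^{\ast}(X)=\hat{F}_{t}(\tau(X))$ with $\hat{F}_{t}(c)\equiv\#\{\tilde{X}\in\mathcal{X}_{t}:\tau(\tilde{X})\ge c\}/\#\mathcal{X}_{t}$ the upper-tail function of $\tau$ under the uniform law on $\mathcal{X}_{t}$; the standard fact that transforming a variable by its own (conditional) survival function yields a super-uniform variable (cf.\ \citealp[Section~15.2]{lehmann/romano:2005}) gives $P(p^{\ast}(X)\le\alpha\mid T(X)=t)\le\alpha$, and averaging over $t$ gives $P(p^{\ast}(X)\le\alpha)\le\alpha$. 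Chaining the last two inequalities proves \eqref{eq:asySizeControl}.

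The main obstacle will be the appendix input quoted above: that the MCMC chain is irreducible on $\mathcal{X}_{T(X)}$ with the uniform law stationary, which is where the dynamic-game structure and the multiplicity of markets enter. Irreducibility is a combinatorial connectivity statement — any two configurations with the same initial states and the same $N_{SA},N_{SAS}$ are joined by finitely many elementary moves of the algorithm. For the state coordinate this is the Eulerian-path rearrangement result behind \citet{kandel/yossi/unger/winkler:1996}, now applied jointly to the $n$ markets' paths rather than to a single one (the multi-market extension that \citealp{besag/mondal:2013} only sketch); for the action coordinate it reduces, via Assumption~\ref{ass:M}(c), to permuting actions among the occurrences of each state value; and one must verify that composing the two sub-steps produces a single irreducible chain whose Metropolis/detailed-balance updates leave the uniform distribution invariant. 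The rest — the likelihood factorization (already available from \eqref{eq:lik1}/Lemma~\ref{lem:Lik1}) together with the ergodic-theorem and randomization-test arguments — is classical.
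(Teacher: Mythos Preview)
Your overall architecture matches the paper's: couple $\hat{p}_K$ with an infeasible randomization $p$-value, show almost-sure convergence of the former to the latter via an ergodic theorem for the MCMC chain, and then invoke size-validity of the infeasible test. The reverse-Fatou step is a clean alternative to the paper's $\varepsilon$-argument in Section~\ref{appendix:proof_siz_cont}.

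There is, however, a genuine gap in how you identify the MCMC's limiting object. You assert that the chain is irreducible on the \emph{full level set} $\mathcal{X}_{T(X)}$ with uniform stationary law there, and you attribute this to the appendix. The appendix does not prove this. What it proves (via the coupling in Definition~\ref{def:couplingG} and Lemma~\ref{lem:equidist}) is that the $\mathcal{X}$-valued chain is distributionally equivalent to a random walk on the transformation group $\mathbf{G}$ of Definition~\ref{def:G}; that walk is symmetric (Lemma~\ref{lem:q_symmetric}) and irreducible on $\mathbf{G}$ (end of the proof of Lemma~\ref{lem:MCMCconv}), hence converges to the uniform law on $\mathbf{G}$. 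Pushed forward by $g\mapsto gX$, this gives convergence to the uniform law on the \emph{orbit} $\{gX:g\in\mathbf{G}\}$, which by Lemma~\ref{lemma:Gdefn} is contained in $\mathcal{X}_{T(X)}$ but is nowhere shown to equal it. Equality would require a separate combinatorial argument---that pairwise market-mixing (the only moves in $R_S(I,\cdot)$, which fix all market-specific transition counts outside $I$) generates every configuration sharing the same aggregate sufficient statistic---and the paper deliberately avoids this question. Instead, it proves validity directly for the group-based $p$-value $\hat{p}=|\mathbf{G}|^{-1}\sum_{g\in\mathbf{G}}1\{\tau(gX)\ge\tau(X)\}$ via the randomization hypothesis (Lemma~\ref{lem:PvalueProperty}), which only needs that $\mathbf{G}$ preserves the sufficient statistic, not that its orbits fill the level sets. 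Your argument becomes correct if you replace $\mathcal{X}_{T(X)}$ by the orbit throughout (conditional uniformity on the orbit still holds, since the likelihood is constant there), but then it collapses to the paper's proof.
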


Theorem \ref{thm:sizeControl} establishes that the proposed test in \eqref{eq:testDefn} controls size as the length of the MCMC draws $K$ diverges. We remark that $K$ is under the control of the researcher, who can increase $K$ to guarantee the convergence in \eqref{eq:asySizeControl}. Remarkably, Theorem \ref{thm:sizeControl} holds regardless of the number of markets $n$, time periods $T$, and players $J$, which remain constant in our analysis. In addition, and as promised in Section \ref{sec:MCMC}, this result also holds irrespective of the specific choice of test statistic $\tau(X)$ used in the construction of the $p$-value in \eqref{eq:pvalue}. Finally, we note that the inequality in equation \eqref{eq:asySizeControl} could be turned into equality by changing \eqref{eq:testDefn} to a random decision rule whenever $\phi_K(X)=\alpha$. We decided against this modification for the sake of simplicity.

An important practical consideration is how one should choose the number of MCMC draws $K$ in a given application. According to Theorem \ref{thm:sizeControl}, the size control of our test is guaranteed as $K$ diverges. The main drawback of increasing $K$ is the additional computation burden of implementing our test. In this sense, we recommend choosing $K$ as large as computationally possible. However, our theoretical results and practical experience can be combined to provide a more concrete recommendation regarding $K$. First, our theoretical results in later sections establish that the $p$-value in \eqref{eq:pvalue} used to implement our test converges as $K$ diverges.\footnote{In particular, see Lemma \ref{lem:MCMCconv} and the related result in \eqref{eq:Conv_pK}, which are building blocks of Theorem \ref{thm:sizeControl}.} Second, our experience from the empirical application and the Monte Carlo simulations suggests that the outcome of our test tends to become stable for a sufficiently large $K$. In conclusion, we recommend considering large values of $K$ (as large as computationally possible) and deciding on a value for which the test decision appears to become stable.

The key insight behind Theorem \ref{thm:sizeControl} is the connection between our hypothesis test and the literature on randomization tests (see \citet[Chapter 15.2]{lehmann/romano:2005}). In particular, Theorem \ref{thm:sizeControl} follows from showing that the $p$-value in \eqref{eq:pvalue} approximates the $p$-value of an underlying randomization test for $H_{0}$ in \eqref{eq:HT} that is computationally infeasible. Recall that randomization tests enjoy validity in finite samples under suitable conditions. This explains why Theorem \ref{thm:sizeControl} does not require the number of markets $n$, time periods $T$, or players $J$ to grow. 

The remainder of this section develops the connection between our hypothesis test and the underlying randomization test for $H_{0}$. It is organized as follows. Section \ref{sec:suff_stat} provides an alternative representation of the likelihood of the data under Assumption \ref{ass:M} and $H_{0}$. This result allows us to define a sufficient statistic of the data under these conditions, denoted by $U(X)$. Section \ref{sec:group} relates our MCMC algorithm to a transformation group of the data, $\mathbf{G}$, which does not change the value of the sufficient statistic $U(X)$. Section \ref{sec:exact} defines the underlying randomization test for $H_{0}$ based on the transformation group $\mathbf{G}$, and argues that it is both finite-sample valid and computationally infeasible. Finally, Section \ref{sec:approx_results} shows that our MCMC-based test in \eqref{eq:testDefn} can successfully approximate the underlying randomization test as the number of MCMC draws diverges.

\subsection{An alternative representation of the likelihood}\label{sec:suff_stat}

The next result provides an alternative representation of the likelihood of the data under Assumption \ref{ass:M} and $H_{0}$ in \eqref{eq:HT}.

\begin{lemma} \label{thm:Lik2}
Under Assumption \ref{ass:M} and $H_{0}$ in \eqref{eq:HT}, the likelihood of the data ${X}=({S},{A})$ evaluated at $\tilde{X}=(\tilde{S},\tilde{A})\in\mathcal{X}$ with $\tilde{S}=(\tilde{S}_{i,t}:i=1, \ldots ,n,t=1,\ldots ,T)\in\mathcal{S}^{nT}$ and $\tilde{A}=(\tilde{A}_{i,t}:i=1, \ldots ,n,t=1,\ldots ,T)\in\mathcal{A}^{nT}$ is 
\begin{equation}
P(X=\tilde{X})~=~P(A=\tilde{A}|S=\tilde{S})
~\times ~P(S=\tilde{S}), \label{eq:Lik2}
\end{equation}
where
\begin{align}
P(A =\tilde{A}|S=\tilde{S})&=
\prod_{(s,a,s^{\prime})\in\mathcal{S}\times\mathcal{A} \times\mathcal{S}}
 \left(\frac{\sigma(a|s)f(s^{\prime}|s,a)}{\sum_{ \tilde{a}\in\mathcal{A}}f(s^{\prime}|\tilde{a},s)\sigma(\tilde{a}|s)} \right)^{\sum_{i=1}^{n}\sum_{t=1}^{T-1}1\{\tilde{S}_{i,t}=s,\tilde{A}_{i,t}=a,\tilde{S}_{i,t+1}=s^{\prime}\}}
\notag\\&\qquad\times 
\prod_{(s,a)\in\mathcal{S}\times\mathcal{A}}
\sigma(a|s)^{\sum_{i=1}^{n}1\{\tilde{S}_{i,T}=s,\tilde{A}_{i,T}=a\}}
\label{eq:Lik2_a_s} 
\end{align}
and
\begin{align}
&P(S =\tilde{S})=
\left(
\prod_{i=1}^{n}m_{i}(\tilde{S}_{i,1})
\right)
\times
\prod_{(s,s^{\prime})\in\mathcal{S}\times\mathcal{S}}
\left(\sum_{{a}\in\mathcal{A} }f(s^{\prime}|{a},s)\sigma({a}|s)\right)^{ \sum_{i=1}^{n}\sum_{t=1}^{T-1}1\{\tilde{S}_{i,t}=s,\tilde{S}_{i,t+1}=s^{\prime}\}}
.\label{eq:Lik2_s}
\end{align}
\end{lemma}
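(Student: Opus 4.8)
The plan is to start from the likelihood factorization already established in Lemma~\ref{lem:Lik1} (equation~\eqref{eq:lik1}), which under Assumption~\ref{ass:M} and $H_0$ writes $P(X=\tilde X)$ as a product over markets $i$ of $m_i(\tilde S_{i,1})$ times $\prod_{t=1}^T \sigma(\tilde A_{i,t}\mid \tilde S_{i,t})$ times $\prod_{t=1}^{T-1} g(\tilde S_{i,t+1}\mid \tilde S_{i,t},\tilde A_{i,t})$. The first move is purely bookkeeping: collect the $\sigma$ and $g$ factors not by $(i,t)$ but by the \emph{value} $(s,a,s')$ they take, so that each factor $\sigma(a\mid s)g(s'\mid s,a)$ appears with exponent equal to the count $\sum_{i=1}^n\sum_{t=1}^{T-1} 1\{\tilde S_{i,t}=s,\tilde A_{i,t}=a,\tilde S_{i,t+1}=s'\}$, with the period-$T$ action terms $\sigma(a\mid s)$ handled separately (since they have no associated transition), giving exponent $\sum_{i=1}^n 1\{\tilde S_{i,T}=s,\tilde A_{i,T}=a\}$. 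This already rewrites $P(X=\tilde X)$ as a product of $(\prod_i m_i(\tilde S_{i,1}))$ and a product over $(s,a,s')$ of $(\sigma(a\mid s)g(s'\mid s,a))^{N_{sas'}(\tilde X)}$ and a product over $(s,a)$ of $\sigma(a\mid s)^{N^T_{sa}(\tilde X)}$.

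Next I would produce the claimed split $P(X=\tilde X)=P(A=\tilde A\mid S=\tilde S)\,P(S=\tilde S)$ by multiplying and dividing by the normalizing factor $\sum_{\tilde a\in\mathcal A} g(s'\mid \tilde a, s)\sigma(\tilde a\mid s)$ raised to the two-way transition count $N_{ss'}(\tilde S)\equiv \sum_{i=1}^n\sum_{t=1}^{T-1} 1\{\tilde S_{i,t}=s,\tilde S_{i,t+1}=s'\}$. The key combinatorial identity that makes this legitimate is that $\sum_{a\in\mathcal A} N_{sas'}(\tilde X)=N_{ss'}(\tilde S)$ for each $(s,s')$, so inserting $\big(\sum_{\tilde a} g(s'\mid \tilde a,s)\sigma(\tilde a\mid s)\big)^{N_{ss'}(\tilde S)}$ in the denominator of the $(s,a,s')$-product and the same quantity in a separate $(s,s')$-product leaves the total expression unchanged. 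Grouping the denominator together with $\prod_i m_i(\tilde S_{i,1})$ yields exactly $P(S=\tilde S)$ in \eqref{eq:Lik2_s}, and what remains --- the $(s,a,s')$-product of $\big(\sigma(a\mid s)g(s'\mid s,a)/\sum_{\tilde a} g(s'\mid \tilde a,s)\sigma(\tilde a\mid s)\big)^{N_{sas'}}$ times the period-$T$ factor $\prod_{(s,a)}\sigma(a\mid s)^{N^T_{sa}}$ --- is exactly the right-hand side of \eqref{eq:Lik2_a_s}.

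Finally, I would verify that the factor I have labeled $P(S=\tilde S)$ really is the marginal law of $S$ and that the other factor really is the conditional law of $A$ given $S=\tilde S$, rather than merely an algebraic regrouping. For $P(S=\tilde S)$ this follows by marginalizing \eqref{eq:lik1} over all $\tilde A\in\mathcal A^{nT}$: using Assumption~\ref{ass:M}(c) the sum over action paths factors period-by-period, and summing $\sigma(a\mid s)g(s'\mid s,a)$ over $a$ (for $t\le T-1$) produces precisely $\sum_a g(s'\mid a,s)\sigma(a\mid s)$ with the correct count exponent, while the period-$T$ sum $\sum_a \sigma(a\mid s)=1$ contributes nothing; alternatively one cites Assumption~\ref{ass:M}(b) directly to identify $S$ as a Markov chain with these transition kernels. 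Then $P(A=\tilde A\mid S=\tilde S)$ is obtained as the ratio, which matches \eqref{eq:Lik2_a_s} by the computation already done. The one place needing care --- and the main (modest) obstacle --- is the separate treatment of the terminal period $t=T$: the state path has $T-1$ transitions but $T$ actions, so the last action contributes an ``unmatched'' $\sigma$ factor that must be tracked through every step and shown to land correctly in \eqref{eq:Lik2_a_s}; everything else is relabeling of indices and the elementary identity $\sum_a N_{sas'}=N_{ss'}$.
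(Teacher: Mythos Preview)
Your proposal is correct and follows essentially the same route as the paper: start from \eqref{eq:lik1}, derive the marginal $P(S=\tilde S)$ via the Markov property of the state process (the paper does this first, citing an auxiliary lemma, whereas you do it last by direct marginalization over $\tilde A$), take the ratio to identify $P(A=\tilde A\mid S=\tilde S)$, and rewrite both factors in terms of the transition counts. The only difference is the order in which you perform the count-reindexing versus the multiply-and-divide step, which is cosmetic.
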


From this result, we can deduce the following corollary.
\begin{corollary}\label{cor:Lik2}
Under Assumption \ref{ass:M} and $H_{0}$ in \eqref{eq:HT}, the sufficient statistic for $X =(S,A)$ is
\begin{equation}
 U(X)= \left(
 \begin{array}{l}
 \left(S_{i,1}:i=1\ldots ,n\right), \\
 \left(\sum_{i=1}^{n}\sum_{t=1}^{T-1}1\{S_{i,t}=s,A_{i,t}=a,S_{i,t+1}=s^{\prime}\}:(s,a,s^{\prime})\in\mathcal{S}\times\mathcal{A}\times\mathcal{S}\right),\\
 \left(\sum_{i=1}^{n}1\{S_{i,T}=s,A_{i,T}=a\}:(s,a)\in\mathcal{S}\times\mathcal{A}\right)
 \end{array}
 \right).\label{eq:SuffStatistics} 
\end{equation}
\end{corollary}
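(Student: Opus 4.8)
The plan is to invoke the Fisher--Neyman factorization criterion for discrete distributions. Writing the parameter under $H_0$ as $\theta \equiv (\sigma, g, (m_i)_{i=1}^n)$, it suffices to show that the likelihood $P(X = \tilde{X})$ given in Lemma \ref{thm:Lik2} depends on the realization $\tilde{X} = (\tilde{S}, \tilde{A})$ only through the statistic $U(\tilde{X})$ defined in \eqref{eq:SuffStatistics}. Once this is established, the likelihood factors trivially as $h(\tilde X)\, k_\theta(U(\tilde X))$ with $h \equiv 1$, and the factorization criterion yields that $U(X)$ is sufficient for $\theta$ under Assumption \ref{ass:M} and $H_0$.

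The argument is essentially a term-by-term inspection of \eqref{eq:Lik2_a_s} and \eqref{eq:Lik2_s}. First I would examine $P(A = \tilde{A} \mid S = \tilde{S})$ in \eqref{eq:Lik2_a_s}: the two products are indexed by $(s,a,s') \in \mathcal{S}\times\mathcal{A}\times\mathcal{S}$ and $(s,a) \in \mathcal{S}\times\mathcal{A}$; the bases $\sigma(a|s)g(s'|s,a)/\sum_{\tilde a}g(s'|\tilde a,s)\sigma(\tilde a|s)$ and $\sigma(a|s)$ depend on $\theta$ alone; and the exponents are exactly $\sum_{i=1}^{n}\sum_{t=1}^{T-1}1\{\tilde{S}_{i,t}=s,\tilde{A}_{i,t}=a,\tilde{S}_{i,t+1}=s'\}$ and $\sum_{i=1}^{n}1\{\tilde{S}_{i,T}=s,\tilde{A}_{i,T}=a\}$, i.e., the second and third blocks of $U(\tilde{X})$. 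Hence $P(A = \tilde{A} \mid S = \tilde{S})$ is a function of $\theta$ and those two blocks only.

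Next I would turn to $P(S = \tilde{S})$ in \eqref{eq:Lik2_s}. The factor $\prod_{i=1}^{n}m_i(\tilde{S}_{i,1})$ depends on $\tilde{X}$ only through $(\tilde{S}_{i,1}:i=1,\dots,n)$, which is the first block of $U(\tilde{X})$ (and this is precisely why the initial states must be carried along: the $m_i$ are market-specific). The remaining product over $(s,s') \in \mathcal{S}\times\mathcal{S}$ again has a base depending only on $\theta$, with exponent $\sum_{i=1}^{n}\sum_{t=1}^{T-1}1\{\tilde{S}_{i,t}=s,\tilde{S}_{i,t+1}=s'\}$. This quantity is not literally a coordinate of $U(\tilde{X})$, but --- and this is the one step that is not pure pattern-matching --- it equals $\sum_{a\in\mathcal{A}}\sum_{i=1}^{n}\sum_{t=1}^{T-1}1\{\tilde{S}_{i,t}=s,\tilde{A}_{i,t}=a,\tilde{S}_{i,t+1}=s'\}$, a deterministic function of the second block of $U(\tilde{X})$. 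Therefore $P(S = \tilde{S})$ is a function of $\theta$ and the first two blocks of $U(\tilde{X})$.

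Combining the two, $P(X = \tilde{X}) = P(A = \tilde{A} \mid S = \tilde{S})\, P(S = \tilde{S})$ depends on $\tilde{X}$ only through $U(\tilde{X})$, and the factorization criterion delivers the corollary. I expect the only genuine obstacle to be recognizing that the state-transition counts appearing in \eqref{eq:Lik2_s} are recovered by marginalizing the state-action-transition counts in $U(X)$ over the action coordinate; once that observation is in hand, the remainder is a routine verification, and --- consistent with the rest of the paper --- no divergence of $n$, $T$, or $J$ is used anywhere.
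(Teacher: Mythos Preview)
Your proposal is correct and matches the paper's approach: the paper presents this as an immediate corollary of Lemma \ref{thm:Lik2} without further proof, and your application of the Fisher--Neyman factorization criterion to the likelihood representation \eqref{eq:Lik2_a_s}--\eqref{eq:Lik2_s} is exactly the verification the paper leaves implicit. The one substantive observation you flag --- that the state-transition counts $\sum_{i,t}1\{\tilde S_{i,t}=s,\tilde S_{i,t+1}=s'\}$ in \eqref{eq:Lik2_s} are obtained by marginalizing the second block of $U(\tilde X)$ over $a$ --- is precisely what makes the corollary go through.
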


Corollary \ref{cor:Lik2} implies that, under Assumption \ref{ass:M} and $H_{0}$, a transformation of the data that maintains $U(X)$ will not change the value of the likelihood. This observation provides the basis of the underlying randomization test.

\subsection{A transformation group related to our MCMC algorithm}\label{sec:group}

In this section, we show that our MCMC algorithm is a transformation group of $\cal{X}$ that preserves the value of $U(X)$. See \citet[Appendix A.1]{lehmann/romano:2005} for the definition of the notion of a transformation group.

Our proposed MCMC algorithm can be understood as an iteration of transformations to the data $X$. In particular, $X^{(1)}=X$ is the identity transformation, $X^{(2)}$ follows from applying Steps 1-3 to $X$, $X^{(3)}$ follows from applying Steps 1-3 twice to $X^{(2)}$, and so forth. More formally, each iteration of our MCMC algorithm applies a transformation from a particular transformation group. To define this properly, we first require the following definition.

\begin{definition}[Collection of transformations ${\bf G}(I)$] \label{def:GI}
For any pair of markets $I= (I_1,I_2)\in\mathcal{I}$, let ${\bf G}(I)$ denote the set of all transformations of $\mathcal{X}$ onto itself such that, for any $g\in{\bf G}(I)$ and $(\breve{S},\breve{A})\in\mathcal{X}$, $(\tilde{S},\tilde{A})=g(\breve{S},\breve{A})$ satisfies $\tilde{S} \in R_{S}(I,\breve{S})$ and $\tilde{A} \in R_{A}(\tilde{S},(\breve{S},\breve{A}))$.
\end{definition}

Lemma \ref{lem:GI_is_group} in the appendix shows that ${\bf G}(I)$ is a transformation group. By Definition \ref{def:GI}, ${\bf G}(I)$ is the transformation group representation of Steps 2-3 of our MCMC algorithm. Given a randomly chosen pair of markets $I^{(k)}$ in step 1, steps 2-3 obtain the next element of the Markov chain $X^{(k)} = (S^{(k)},A^{(k)})$ by selecting a randomly chosen element of $\{g(X^{(k-1)})\colon{g}\in{\bf G}(I^{(k)})\}$. In this sense, Steps 2-3 of our MCMC algorithm are a specific way of choosing a particular transformation in ${\bf G}(I^{(k)})$.

By the description in the previous paragraph, our MCMC algorithm applies a randomly chosen transformation in ${\bf G}(I)$ for random pairs of markets $I$, and iteratively applies them to the data. These iterative transformations are related to the set that we define next.

\begin{definition}[Collection of transformations $\mathbf{G}$] \label{def:G}
$\mathbf{G}$ is the collection of all finitely many compositions of the elements in $\bigcup_{I\in\mathcal{I}}\mathbf{G }(I)$.
\end{definition}
See Example \ref{ex:G_explicit} for an illustration of $\mathbf{G}$. The next result states that $\mathbf{G}$ is a transformation group with desirable properties.

\begin{lemma}\label{lemma:Gdefn}
${\bf G}:\mathcal{X}\to\mathcal{X}$ is a transformation group of $\mathcal{X}$ such that, for any $g\in{\bf G}$ and $\tilde{X}\in\mathcal{X}$, 
$\tilde{X}$ and $g\tilde{X}$ have the same sufficient statistic in \eqref{eq:SuffStatistics}, i.e., $U(\tilde{X})= U(g\tilde{X})$.
\end{lemma}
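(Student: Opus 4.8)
The plan is to verify the two claims in Lemma \ref{lemma:Gdefn} separately: first that $\mathbf{G}$ is a transformation group of $\mathcal{X}$, and second that every element of $\mathbf{G}$ preserves the sufficient statistic $U(\cdot)$ in \eqref{eq:SuffStatistics}. For the group property, recall that by Lemma \ref{lem:GI_is_group} each $\mathbf{G}(I)$ is itself a transformation group (in the sense of \citet[pages 693--4]{lehmann/romano:2005}), so in particular each $\mathbf{G}(I)$ contains the identity and is closed under composition and inverses. Since $\mathbf{G}$ is defined in Definition \ref{def:G} as the set of all finite compositions of elements drawn from $\bigcup_{I\in\mathcal{I}}\mathbf{G}(I)$, closure under composition is immediate (concatenate the two composition strings). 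The identity lies in $\mathbf{G}$ because it lies in any single $\mathbf{G}(I)$. For inverses, I would use the fact that the inverse of a composition $g_1\circ\cdots\circ g_m$ is $g_m^{-1}\circ\cdots\circ g_1^{-1}$, and each $g_j^{-1}$ lies in $\mathbf{G}(I_j)\subseteq\bigcup_I\mathbf{G}(I)$ because $\mathbf{G}(I_j)$ is a group; hence the inverse is again a finite composition of elements of $\bigcup_I\mathbf{G}(I)$, i.e.\ an element of $\mathbf{G}$. One should also note each element of $\mathbf{G}$ is a bijection of $\mathcal{X}$ onto itself, since it is a composition of bijections (elements of the $\mathbf{G}(I)$ are bijections by Lemma \ref{lem:GI_is_group}). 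This establishes that $\mathbf{G}$ is a transformation group.

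For the invariance of the sufficient statistic, the strategy is to first show that a single generator $g\in\mathbf{G}(I)$ preserves $U$, and then extend to arbitrary finite compositions by induction. Fix $I\in\mathcal{I}$ and $g\in\mathbf{G}(I)$, and write $(\tilde S,\tilde A)=g(\breve S,\breve A)$. By Definition \ref{def:GI}, $\tilde S\in R_S(I,\breve S)$ and $\tilde A\in R_A(\tilde S,(\breve S,\breve A))$. Now I read off the three components of $U$ in \eqref{eq:SuffStatistics}. The first component, $(S_{i,1}:i=1,\dots,n)$, is preserved because condition (a) of Definition \ref{def:RS} forces $\tilde S_{i,1}=\breve S_{i,1}$ for all $i$. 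The second component, the counts $\sum_{i=1}^n\sum_{t=1}^{T-1}1\{S_{i,t}=s,A_{i,t}=a,S_{i,t+1}=s'\}$, is preserved because condition (a) of Definition \ref{def:RA} says exactly that $\tilde S,\tilde A$ and $\breve S,\breve A$ have the same such triple-counts. The third component, $\sum_{i=1}^n 1\{S_{i,T}=s,A_{i,T}=a\}$, is preserved by condition (b) of Definition \ref{def:RA}. Hence $U(g(\breve S,\breve A))=U(\breve S,\breve A)$ for every generator $g$.

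To finish, let $g\in\mathbf{G}$ be arbitrary, so $g=g_m\circ\cdots\circ g_1$ with each $g_j\in\mathbf{G}(I_j)$ for some $I_j\in\mathcal{I}$. Applying the single-generator result iteratively, $U(g_1\tilde X)=U(\tilde X)$, then $U(g_2 g_1\tilde X)=U(g_1\tilde X)=U(\tilde X)$, and so on, giving $U(g\tilde X)=U(\tilde X)$ by a trivial induction on $m$. This completes the proof. I do not anticipate a genuinely hard step here; the substantive content has already been packaged into Lemma \ref{lem:GI_is_group} (that each $\mathbf{G}(I)$ is a group) and into the bookkeeping of Definitions \ref{def:RS}--\ref{def:RA}. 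The only point requiring a little care is the argument for inverses, namely spelling out that $\mathbf{G}$ is closed under inversion precisely because each constituent $\mathbf{G}(I)$ is --- it would be a mistake to think $\mathbf{G}$ needs to be defined as the group \emph{generated by} $\bigcup_I\mathbf{G}(I)$, since the stated definition via finite compositions already suffices once one invokes the group structure of the pieces.
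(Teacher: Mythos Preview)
Your proposal is correct and follows essentially the same approach as the paper's proof: verify the group axioms for $\mathbf{G}$ by leveraging Lemma \ref{lem:GI_is_group} for each $\mathbf{G}(I)$ (closure by concatenation, identity inherited, inverses by reversing the composition string), and then establish $U$-invariance first for a single generator $g\in\mathbf{G}(I)$ and extend to all of $\mathbf{G}$ by finite induction. The only cosmetic difference is that the paper packages your single-generator invariance argument as a separate auxiliary result (Lemma \ref{lem:GI_keeps_suff_stat}), whereas you read the three components of $U$ off directly from Definitions \ref{def:RS}--\ref{def:RA}; the content is identical.
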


The properties shown in Lemma \ref{lemma:Gdefn} imply that we can use ${\bf G}$ to define a valid randomization test. We do this in Section \ref{sec:exact}.

\subsection{The underlying randomization test}\label{sec:exact}

Following \citet[Chapter 15.2]{lehmann/romano:2005}, we can use the transformation group ${\bf G}$ to define the underlying randomization test. This test rejects $H_{0}$ in \eqref{eq:HT} whenever the significance level $\alpha$ is larger than or equal to the randomization $p$-value, which we denote by $\hat{p}$. That is,
\begin{equation}
 \phi(X)~\equiv~ 1\{\hat{p}\leq\alpha\},
 \label{eq:unfeasibleTest}
\end{equation}
where
\begin{equation}
 \hat{p}~\equiv ~\frac{1}{|{\bf G}|}\sum_{g\in{\bf G}}1\{\tau(gX)\geq \tau(X)\}.
 \label{eq:unfeasiblePvalue}
\end{equation}
By the arguments in \citet[Page 636]{lehmann/romano:2005}, the randomization test in \eqref{eq:unfeasibleTest} is finite-sample valid. We record this in the next result.

\begin{lemma} \label{lem:PvalueProperty} 
Under Assumption \ref{ass:M} and $H_{0}$ in \eqref{eq:HT}, for any $\alpha \in (0,1)$, the test in \eqref{eq:unfeasibleTest} satisfies
\begin{equation}
E[\phi(X)]~\leq~ \alpha.\label{eq:SizeControlBasedOnG}
\end{equation}
\end{lemma}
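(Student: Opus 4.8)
The plan is to invoke the standard randomization-test validity argument from \citet[Section 15.2]{lehmann/romano:2005}, specialized to our transformation group $\mathbf{G}$ and the hypothesis $H_0$ in \eqref{eq:HT}. The key ingredient we need is \emph{randomization hypothesis}: under $H_0$, the distribution of $X$ is invariant under every $g\in\mathbf{G}$, i.e., $gX \stackrel{d}{=} X$ for all $g\in\mathbf{G}$. First I would establish this. Lemma \ref{lemma:Gdefn} tells us that each $g\in\mathbf{G}$ preserves the sufficient statistic $U(X)$, and by Corollary \ref{cor:Lik2} (via Lemma \ref{thm:Lik2}), under $H_0$ the likelihood $P(X=\tilde X)$ depends on $\tilde X$ only through $U(\tilde X)$ — more precisely, it factors so that the conditional law of $X$ given $U(X)=u$ is \emph{uniform} over the fiber $\{\tilde X : U(\tilde X) = u\}$. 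The remaining point is that each $g\in\mathbf{G}$, being a bijection of $\mathcal X$ that fixes $U$, permutes each such fiber; composing a uniform distribution on a finite set with a bijection of that set returns the same uniform distribution. Hence $P(gX = \tilde X) = P(X = g^{-1}\tilde X) = P(X=\tilde X)$ for every $\tilde X$, which is the randomization hypothesis. (One should check that $g$ maps the fiber onto itself, not merely into it; since $g$ is a bijection of the finite set $\mathcal X$ preserving the finite-valued function $U$, it restricts to a bijection of each level set.)

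Given the randomization hypothesis, the rest is the textbook computation. For a fixed realization $X=x$, order the values $\{\tau(gx): g\in\mathbf{G}\}$ (with multiplicity) as $\tau_{(1)} \le \cdots \le \tau_{(M)}$ where $M=|\mathbf{G}|$, and let $k = M - \lfloor \alpha M\rfloor$. The randomization $p$-value satisfies $\hat p \le \alpha$ if and only if $\tau(x) > \tau_{(k)}$ — strictly speaking one handles the boundary/ties exactly as in \citet[p.\ 633--636]{lehmann/romano:2005}, but since we do not introduce auxiliary randomization, a conservative version of that argument suffices for the one-sided inequality we want. The crucial identity is that, because $\mathbf{G}$ is a group, for any $x$ the multiset $\{gx : g\in\mathbf{G}\}$ equals the multiset $\{g(hx):g\in\mathbf{G}\}$ for every fixed $h\in\mathbf{G}$; consequently $\hat p(X)$, viewed as a function of $X$, is itself $\mathbf{G}$-invariant: $\hat p(hX)=\hat p(X)$ for all $h\in\mathbf{G}$. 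Combined with the randomization hypothesis, this gives
\begin{equation*}
E[\phi(X)] = P(\hat p(X)\le \alpha) = \frac{1}{|\mathbf{G}|}\sum_{g\in\mathbf{G}} P(\hat p(gX)\le\alpha),
\end{equation*}
where the last equality uses $\hat p(gX) = \hat p(X)$ and $gX\stackrel{d}{=}X$. Then exchanging the expectation and the sum and noting that for each fixed $x$, $\#\{g\in\mathbf{G}: \tau(gx) < \tau(x)\} \le \lfloor \alpha M \rfloor$ whenever $\hat p(x) \le \alpha$ — this is the elementary counting fact behind \citet[Theorem 15.2.1]{lehmann/romano:2005} — one obtains $\frac{1}{|\mathbf{G}|}\sum_{g}1\{\hat p(gx)\le\alpha\}\le \lfloor\alpha M\rfloor/M \le \alpha$ pointwise in $x$, and taking expectations over $X$ yields \eqref{eq:SizeControlBasedOnG}.

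The main obstacle, and the only step requiring real care, is the verification of the randomization hypothesis, because it is where the structure of the dynamic game actually enters: one must be sure that Lemma \ref{thm:Lik2} delivers not just ``the likelihood depends on $\tilde X$ only through $U(\tilde X)$'' but the sharper statement that $X \mid U(X)$ is uniform on its support, and that the support of this conditional law is exactly the $U$-level set rather than a proper subset. The factorization in \eqref{eq:Lik2}--\eqref{eq:Lik2_s} shows $P(X=\tilde X)$ is a function of $U(\tilde X)$ alone (the $m_i(\tilde S_{i,1})$ term depends on $(\tilde S_{i,1}:i)$, which is the first coordinate of $U$, and every other factor depends on the transition/terminal counts, which are the remaining coordinates of $U$), so on each fiber the probability is constant; uniformity and full support on the fiber then follow because every element of the fiber receives this same positive probability. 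Given this, the group-invariance bookkeeping and the counting lemma are routine, and the conclusion follows the cited pages of \citet{lehmann/romano:2005} verbatim.
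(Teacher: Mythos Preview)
Your approach is essentially the paper's: use Lemma~\ref{lemma:Gdefn} (preservation of $U$) together with Lemma~\ref{thm:Lik2}/Corollary~\ref{cor:Lik2} (the likelihood depends on $\tilde X$ only through $U(\tilde X)$) to verify the randomization hypothesis, then invoke \citet[Section~15.2]{lehmann/romano:2005}. That part is fine, and your observation that a bijection of $\mathcal X$ preserving $U$ restricts to a bijection of each $U$-fiber is exactly what is needed to conclude $P(X=g^{-1}\tilde X)=P(X=\tilde X)$.

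However, your sketch of the ``textbook computation'' contains a false claim: $\hat p$ is \emph{not} $\mathbf{G}$-invariant. For $h\in\mathbf{G}$,
\[
\hat p(hX)~=~\frac{1}{|\mathbf{G}|}\sum_{g\in\mathbf{G}}1\{\tau(ghX)\ge \tau(hX)\}~=~\frac{1}{|\mathbf{G}|}\sum_{g'\in\mathbf{G}}1\{\tau(g'X)\ge \tau(hX)\},
\]
which differs from $\hat p(X)$ unless $\tau(hX)=\tau(X)$. The correct argument uses only $gX\stackrel{d}{=}X$ to write $E[\phi(X)]=E\big[\tfrac{1}{|\mathbf{G}|}\sum_{g}\phi(gX)\big]$, and then the pointwise counting fact is that for each fixed $x$ the number of $g$ with $\hat p(gx)\le\alpha$ equals the number of $g$ with $\#\{h:\tau(hx)\ge\tau(gx)\}\le\lfloor\alpha|\mathbf{G}|\rfloor$, which is at most $\lfloor\alpha|\mathbf{G}|\rfloor$. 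Your displayed chain and the clause ``$\#\{g:\tau(gx)<\tau(x)\}\le\lfloor\alpha M\rfloor$ whenever $\hat p(x)\le\alpha$'' conflate two different counts. Either fix the counting as above, or, as the paper does, simply cite \citet[Eq.\ (15.6) and Problem~15.2]{lehmann/romano:2005} once the randomization hypothesis is established.
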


The finite-sample validity in Lemma \ref{lem:PvalueProperty} makes the randomization test in \eqref{eq:unfeasibleTest} an excellent candidate for testing $H_{0}$. Unfortunately, this randomization test is not computationally feasible in practice. This is because the test requires working with the transformation group $\mathbf{G}$, which is typically impossible to enumerate in practice. We illustrate this in Example \ref{ex:G_explicit} in the appendix, where we enumerate $\mathbf{G}$ in two simple examples with $n=2$ markets, $T=2$ time periods, and binary actions and states. Given the challenges presented even by these very simple cases, it is not hard to imagine that $\mathbf{G}$ is computationally impossible to enumerate in realistic data settings.

In the randomization testing literature, it is not uncommon to work with a huge transformation group $\mathbf{G}$. As \citet[page 636]{lehmann/romano:2005} explains, one can still implement a random version of the test in \eqref{eq:unfeasibleTest} by drawing randomly from $\mathbf{G}$ {\it in a uniform fashion}. This point is routinely exploited in standard settings to construct tests based on permutations or sign changes. However, to the best of our knowledge, there is no known feasible way of obtaining such random draws in the current context without fully enumerating $\mathbf{G}$. 

The previous paragraphs explain why the underlying randomization test in \eqref{eq:unfeasibleTest} is computationally infeasible and, thus, we cannot directly exploit its finite-sample validity. The main technical insight of our paper is that our hypothesis test in \eqref{eq:testDefn} is an approximate way of implementing the computationally infeasible underlying randomization test in \eqref{eq:unfeasibleTest}. In particular, the following section formally states that our MCMC-based $p$-value in \eqref{eq:pvalue} approximates the underlying $p$-value in \eqref{eq:unfeasiblePvalue} as the length of the MCMC diverges.

\subsection{An MCMC approximation to the underlying randomization test}\label{sec:approx_results}

Our main theoretical result is Theorem \ref{thm:sizeControl}, which shows that the test in \eqref{eq:testDefn} controls size as the number of MCMC draws $K$ diverges to infinity. The following lemma provides the fundamental ingredient to prove this result.

\begin{lemma} \label{lem:MCMCconv}
Conditional on $X$, 
$$
\sup_{t\in\mathbb{R}}~\Bigg\vert \frac{1}{K}\sum_{k=1}^{K}1\{ \tau(X^{(k)})\geq t\} ~-~\frac{1}{|\mathbf{G}|} \sum_{g\in\mathbf{G}}1\{\tau(gX)\geq t\}\Bigg\vert 
~~\overset{a.s.}{\to}0~~\text{ as }~K\to\infty. 
$$
\end{lemma}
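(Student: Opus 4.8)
The plan is to recognize the MCMC chain $(X^{(1)}, X^{(2)}, \ldots)$ as a Markov chain on the (finite) orbit $\mathbf{G}X = \{gX : g \in \mathbf{G}\}$ of the data $X$ under the transformation group $\mathbf{G}$, and then to apply the ergodic theorem for finite-state Markov chains. Concretely, conditional on $X$, the state space of the chain is the finite set $\mathcal{O} \equiv \mathbf{G}X$, since each step applies some transformation in $\bigcup_{I \in \mathcal{I}} \mathbf{G}(I) \subseteq \mathbf{G}$ (by Definition~\ref{def:GI}, Steps 2--3 implement a transformation in $\mathbf{G}(I^{(k)})$), and compositions of such transformations stay in $\mathbf{G}$ by Definition~\ref{def:G}. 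The transition kernel is given explicitly by \eqref{eq:dist_X}. The goal is to show this chain is irreducible on $\mathcal{O}$ and has the uniform distribution on $\mathcal{O}$ as its (unique) stationary distribution, so that by the strong law of large numbers for ergodic Markov chains, the time average $\frac{1}{K}\sum_{k=1}^K 1\{\tau(X^{(k)}) \geq t\}$ converges almost surely to the space average $\frac{1}{|\mathcal{O}|}\sum_{y \in \mathcal{O}} 1\{\tau(y) \geq t\}$, which is precisely $\frac{1}{|\mathbf{G}|}\sum_{g \in \mathbf{G}} 1\{\tau(gX) \geq t\}$ (each point of the orbit is hit by the same number $|\mathbf{G}|/|\mathcal{O}|$ of group elements, since $\mathbf{G}$ acts transitively on its orbit).

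\textbf{Key steps, in order.} First, establish that the chain lives on the finite orbit $\mathcal{O} = \mathbf{G}X$ and that $\mathcal{O}$ is a single orbit (so the uniform distribution on $\mathcal{O}$ is well-defined and equals the empirical distribution of $\{gX : g \in \mathbf{G}\}$). Second, show \emph{irreducibility}: for any $y, y' \in \mathcal{O}$, there is positive probability of moving from $y$ to $y'$ in finitely many steps. Since $y' = g y$ for some $g \in \mathbf{G}$, and $g$ is a finite composition of elements of $\bigcup_I \mathbf{G}(I)$, it suffices to show each single $\mathbf{G}(I)$-transformation is reachable with positive probability in one step; this follows from \eqref{eq:dist_X} together with the fact (asserted in the text after \eqref{eq:dist_2}) that the relevant denominators $|R_S(I, \cdot)|$ and $|R_A(\cdot, \cdot)|$ are positive, and that $I$ is drawn uniformly with probability $1/|\mathcal{I}| > 0$. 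Third, show the uniform distribution $\pi$ on $\mathcal{O}$ is \emph{stationary}. Here I would argue that each step of the chain, conditional on the chosen pair $I^{(k)}$, applies a uniformly random element of the group $\mathbf{G}(I^{(k)})$ (in the sense of the induced action on the orbit), and a uniformly random group element maps the uniform distribution on an orbit to itself; averaging over $I^{(k)}$ preserves this. Equivalently, one can check reversibility/doubly-stochasticity of the kernel \eqref{eq:dist_X} directly. Fourth, invoke the ergodic theorem (e.g., the SLLN for positive recurrent irreducible Markov chains on a finite state space — since aperiodicity is not needed for the LLN, only for convergence of marginals) to get pointwise a.s.\ convergence for each fixed $t$. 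Fifth, upgrade pointwise convergence to \emph{uniform} convergence in $t$: both $t \mapsto \frac{1}{K}\sum_k 1\{\tau(X^{(k)}) \geq t\}$ and the limit are right-continuous nonincreasing step functions (empirical-type CDFs), with the limit having at most $|\mathcal{O}|$ jumps, so a Glivenko--Cantelli / Pólya-type argument converts pointwise convergence at the (finitely many) jump points into uniform convergence — this handles the $\sup_{t \in \mathbb{R}}$.

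\textbf{Main obstacle.} The routine parts are the finiteness of the orbit and the ergodic theorem; the part requiring genuine care is verifying that the \emph{uniform distribution on the orbit is stationary for the kernel \eqref{eq:dist_X}}. The subtlety is that Step 2 draws $S^{(k)}$ uniformly from $R_S(I^{(k)}, S^{(k-1)})$ and Step 3 draws $A^{(k)}$ uniformly from $R_A(S^{(k)}, X^{(k-1)})$, and these sets have sizes that vary with the current state, so the one-step kernel is not obviously symmetric. The clean way through is to lean on Lemma~\ref{lem:GI_is_group}: because $\mathbf{G}(I)$ is a \emph{group} acting on the orbit, drawing a transformation "uniformly at random" in the sense induced by Steps 2--3 must be checked to be a uniform draw from the group's action (or at least to fix the uniform measure), which is where the proof in the appendix presumably does the bookkeeping with $|R_S|$ and $|R_A|$. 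I would make this precise by showing the kernel is doubly stochastic: $\sum_{y \in \mathcal{O}} P(X^{(k)} = y' \mid X^{(k-1)} = y) = 1$ for each $y'$, using the group structure to pair up transitions $y \to y'$ with transitions $y' \to y$ via inverse transformations, so that the incoming mass to any $y'$ sums to one. A doubly stochastic kernel on a finite state space has the uniform distribution as a stationary distribution, which is exactly what is needed; combined with irreducibility this also gives uniqueness, completing the argument.
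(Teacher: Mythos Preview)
Your proposal is correct and follows the same high-level strategy as the paper: establish that the chain is an irreducible finite-state Markov chain whose unique stationary distribution is uniform, invoke the ergodic theorem, and use finiteness of the range of $\tau$ to pass from pointwise to uniform convergence in $t$. The paper, however, executes this somewhat differently. Rather than working directly with the orbit chain $(X^{(k)})$ on $\mathcal{O}=\mathbf{G}X$ and checking double stochasticity of the kernel \eqref{eq:dist_X}, it \emph{lifts} to a chain $(G^{(k)})$ on the group $\mathbf{G}$ itself (Definition~\ref{def:couplingG}) and shows via a coupling argument (Lemma~\ref{lem:equidist}) that $(X^{(k)})$ and $(G^{(k)}X)$ have the same conditional distribution given $X$. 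At the group level, the transition kernel \eqref{eq:MarkovChain_2} is symmetric by a one-line inverse-pairing argument (Lemma~\ref{lem:q_symmetric}), so the chain is a Metropolis--Hastings sampler with acceptance probability one targeting the uniform distribution on $\mathbf{G}$; irreducibility then follows exactly as you sketch, and the ergodic theorem in \cite{robert/casella:2004} finishes. The payoff of the paper's detour is that the ``main obstacle'' you correctly flag --- verifying that Steps~2--3 really amount to a uniform draw from $\mathbf{G}(I)$ acting on the current state, so that the orbit kernel preserves the uniform measure --- is pushed into the coupling Lemma~\ref{lem:equidist}, whose proof does carry out precisely the $|R_S|$/$|R_A|$ bookkeeping you anticipate (via Lemmas~\ref{lem:card1}, \ref{lem:card2}, and \ref{lem:RA_by_transformation}). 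Your direct approach on the orbit is equally valid and arguably more transparent; the paper's lift simply localizes the hard computation and makes stationarity a one-liner.
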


Lemma \ref{lem:MCMCconv} shows that, as the number of MCMC draws diverges, the conditional distribution based on the MCMC algorithm converges to the conditional distribution of the computationally infeasible underlying randomization test described in Section \ref{sec:exact}.
It is worth noting that Lemma \ref{lem:MCMCconv} considers $K\to\infty$ while the complexity of the underlying randomization test, characterized by $|\mathbf{G}|$, stays constant. While we do not derive formal results, we expect that, as $|\mathbf{G}|$ increases, a larger number of MCMC draws $K$ is required to achieve a specific level of approximation. For related discussions on diagnosing convergence in MCMC algorithms, see \citet[Chapter 12]{robert/casella:2004}.

By applying Lemma \ref{lem:MCMCconv} with $t=\tau(X)$, we can deduce that the $p$-value in \eqref{eq:pvalue} approximates the $p$-value in \eqref{eq:unfeasiblePvalue} as the number of MCMC draws $K$ diverges. That is, conditional on $X$,
\begin{equation}
    \hat{p}_{K}~\overset{a.s.}{\to}~\hat{p}~~\text{ as }~K\to\infty. 
    \label{eq:Conv_pK}
\end{equation}
By combining this observation with the finite-sample validity of the underlying randomization test in \eqref{eq:unfeasibleTest} (Lemma \ref{lem:PvalueProperty}), it follows that our proposed MCMC-test becomes valid as the number of MCMC draws $K$ diverges. This argument provides the intuition behind Theorem \ref{thm:sizeControl}, and why it holds regardless of the number of markets $n$, time periods $ T $, and players $ J $. 

Our analysis in this paper focuses on the properties of our test under the null hypothesis. While analyzing our test's power properties is very desirable, we consider this to be a formidable task within our finite-sample setting. The rejection rate of our test depends on the specification of the dynamic discrete game (i.e., conditional choice probabilities, state transition probabilities, and marginal distributions specified under the alternative hypothesis). To the best of our knowledge, obtaining general results for all possible specifications of the dynamic discrete game in finite samples is impossible. An alternative to the finite-sample power analysis would be to consider asymptotic power results with a diverging number of markets $n$, time periods $T$, support points $|\mathcal{X}|$, or all three. Note that the current results in this paper are finite sample valid, and do not require such an asymptotic framework. The behavior of our test is expected to vary with the specific asymptotic framework under consideration. While such \textit{asymptotic} power results have been developed for some randomization tests in the literature (e.g., see \citet[Section 15.2.2]{lehmann/romano:2005}), these do not apply to approximate randomization tests like the one proposed in this paper. Developing this extension of our results seems out of the scope of our current contribution. As an admittedly imperfect substitute for a general power analysis, Section \ref{sec:MC} of our paper explores the power properties of our proposed test in several empirically relevant economic models. All of our simulation evidence suggests that our test has desirable power properties.

\section{Empirical application}\label{sec:empiric}

In this section, we revisit the application in \cite{ryan:2012}, as studied in \citet[Section 5]{otsu/pesendorfer/takayashi:2016}. \cite{ryan:2012} considers a dynamic discrete game to study the welfare costs of the 1990 Amendments to the Clean Air Act on the U.S.\ Portland cement industry. He develops a dynamic oligopoly game based on \cite{ericson/pakes:1995}, and estimates it using the two-stage method developed by \cite{bajari/benkard/levin:2007}. This method's first stage is to estimate optimal entry, exit, and investment decisions as a function of production capacity, and it relies on the assumption that markets are homogeneous. Our hypothesis test can be used to investigate the validity of this assumption.

We use the same data as in \citet[Section 5]{otsu/pesendorfer/takayashi:2016}. For each year in 1980-1998 and 23 geographically separated U.S.\ markets, we observe the sum of the production capacities for all the firms in that market. Table \ref{tab:table_summary_stats} provides summary statistics of this aggregate production capacity before and after the 1990 Amendments, and Figure \ref{fig:histogram} provides the corresponding histogram. 

\begin{table}[ht]
 \centering
 \begin{tabular}{ccccc}\hline\hline
 Sample & Average & Std.\ dev. & Minimum & Maximum\\\hline
 1980-1990& 4,226.8 & 2,284.4 & 1,321.3 & 12,578.0\\
 1991-1998& 3,857.2 & 2,107.9 & 1,084.0& 9,564.8\\ \hline\hline
 \end{tabular}
\caption{\small Summary statistics for market capacity per year, measured in thousands of tons.}
  \label{tab:table_summary_stats}
\end{table}

\begin{figure}[ht]
\centering
\scalebox{0.8}{\includegraphics{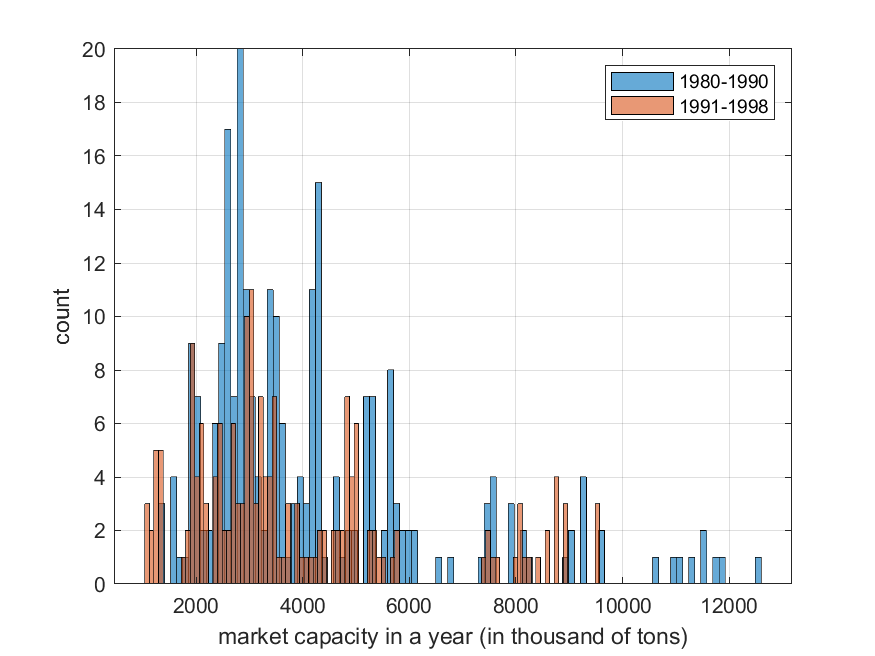}}
  \caption{\small   Histogram of market capacity per year, measured in thousands of tons.}
  \label{fig:histogram}
\end{figure}

These data represent the result of the firms' optimal entry, exit, and investment decisions in the dynamic game estimated by \cite{ryan:2012}. We follow \cite{otsu/pesendorfer/takayashi:2016} and discretize the market production capacity into 50 bins with equal intervals of 250 thousand tons each (0-250 thousand tons, 250-500 thousand tons, and so on). For each $i=1,\ldots ,n=23$ and year $t=1,\ldots ,19$, we use $A_{i,t}\in\mathcal{A}=\{ 1,\ldots ,50\} $ to denote the production capacity bin. The state variable in any market is the previous period's action, i.e.,
\begin{equation}
S_{i,t}~=~A_{i,t-1}, \label{eq:trivialstate transition probability_app}
\end{equation}
and so $S_{i,t}\in\mathcal{S}=\{ 1,\ldots ,50\} $. We note that \eqref{eq:trivialstate transition probability_app} implies that the state transition probabilities are homogeneous (given by $f_{i,t+1}(s^{\prime}|a,s)=1\{s^{\prime}=a\}$), and so $H_{0}$ in \eqref{eq:HT} is equivalent to the homogeneity of the conditional choice probabilities.

Following \cite{ryan:2012} and \cite{otsu/pesendorfer/takayashi:2016}, we allow the 1990 Amendments to affect the decision of the firms. We then test the homogeneity of the conditional choice probabilities for two subsets of data: before and after 1990. That is, we test the following hypotheses:
\begin{align}
H_{0}^{\text{before}}:\sigma_{i,t}(a|s)=\sigma(a|s)\text{ for }i=1,\ldots ,23,~t=1,\ldots ,10, \label{eq:test_app_1}\\
H_{0}^{\text{after}} :\sigma_{i,t}(a|s)=\sigma(a|s)\text{ for }i=1,\ldots ,23,~t=11,\ldots ,19.\label{eq:test_app_2}
\end{align}

We note that the before and after 1990 samples used to test the hypotheses in \eqref{eq:test_app_1} and \eqref{eq:test_app_2} have a relatively small number of time periods ($T=10$ and $T=9$ for \eqref{eq:test_app_1} and \eqref{eq:test_app_2}, respectively) and markets (in both cases, $n=23$). This represents an ideal scenario for our proposed test, as its validity does not rely on either one of these dimensions diverging. We also note that both of these dimensions are smaller than the support of the data, i.e., $|\mathcal{S}| =|\mathcal{A}|=50$.

Table \ref{tab:table_app_1} shows the results of applying our procedure to test the hypotheses in \eqref{eq:test_app_1} and \eqref{eq:test_app_2}. Following the literature, we use the test statistics in \eqref{eq:testStatOtsu}. As explained in Example \ref{ex:tau_examples}, these test statistics compare market-specific conditional choice probabilities with their pooled counterpart and are thus specifically designed to detect heterogeneity across markets. This objective seems appropriate for this empirical application, as \cite{ryan:2012}'s methodology relies on the homogeneity of the data before and after the 1990 Amendments.\footnote{It is relevant to note that these test statistics would be largely ineffective in detecting the presence of a structural break (such as the 1990 Amendments) if the break impacts equally all markets in the economy. This happens because the market-specific conditional choice probabilities would average over time and coincide with their pooled counterpart.} At a significance level of $ \alpha =5\%$, we do not reject the homogeneity of the conditional choice probabilities. Our tests were implemented with $K=50,000$, but our hypothesis testing decision (i.e., non-rejection for standard significance levels) remains invariant for any $K>10,000$. Using a standard desktop computer, our Julia package completed our test with $K=50,000$ in 4.8 minutes for the subsample before 1990 and 1.9 minutes for the subsample after 1990. The computation time increases linearly with $K$.

For contrast, Table \ref{tab:table_app_1} also shows the results of bootstrap-based test proposed by \cite{otsu/pesendorfer/takayashi:2016}. As opposed to our test, their methods reject the hypothesis of homogeneity of the conditional choice probabilities in the sample prior to 1990. Since both tests rely on the same test statistic, these differences are entirely driven by the differences in the $p$-values. Table \ref{tab:table_app_1} reveals that our test and the one proposed by \cite{otsu/pesendorfer/takayashi:2016} can produce different conclusions. This is also clearly shown in our Monte Carlo simulations in Section \ref{sec:MC}. It is natural to inquire which hypothesis test is correct about the homogeneity of the sample before 1990. Of course, this is impossible to determine with certainty in an empirical application. However, we consider that our Monte Carlo evidence in Section \ref{sec:MC_appbased} may shed light on this matter. These simulations suggest that in data settings similar to those in the empirical application (i.e., with $|\mathcal{S}|$ large relative to $n$ and $T$), our test controls size adequately while the test by \cite{otsu/pesendorfer/takayashi:2016} can suffer from overrejection.

\begin{table}[ht]
\centering
\begin{tabular}{lrrrr}
  \hline
 & \multicolumn{2}{c}{Before 1990}&\multicolumn{2}{c}{After 1990} \\
 & \multicolumn{1}{l}{$\tau_1(X)$} & \multicolumn{1}{l}{$\tau_2(X)$} & \multicolumn{1}{l}{$\tau_1(X)$} & \multicolumn{1}{l}{$\tau_2(X)$}\\
 \hline 
 Test statistic & 199.48 & 159.43 &  89.44 &  90.58  \\ 
  OPT's \textit{p}-value &   0.01 &   0.01 &   0.14 &   0.07  \\ 
  Our \textit{p}-value &   0.21 &   0.12 &   0.73 &   0.68 \\ 
   \hline
\end{tabular}
\caption{\small   The test statistics $\tau_1(X)$ and $\tau_2(X)$ are defined in \eqref{eq:testStatOtsu}. Our test is computed according to \eqref{eq:testDefn} with $K=50,000$. OPT refers to \cite{otsu/pesendorfer/takayashi:2016}, whose results were copied from their Table 6.} 
\label{tab:table_app_1}
\end{table}

\section{Monte Carlo simulations}\label{sec:MC}

In this section, we explore the performance of our proposed test in Monte Carlo simulations. We consider three simulation designs. Our first design is based on the duopoly entry game in \citet[Section 7.1]{pesendorfer/schmidt-dengler:2008}. Our second design is based on our empirical application in Section \ref{sec:empiric}. Our third simulation is based on a dynamic single-agent human capital formation model in \cite{keane/wolpin:1997}. Our three designs offer a comprehensive description of the finite sample behavior of our proposed methodology, as they exhibit significant differences in crucial aspects of the dynamic discrete problem, including the number of support points, the number of players, and the nature of their strategic interaction.

\subsection{Simulations based on a duopoly entry game}\label{sec:MC_OPT}

This Monte Carlo design is also used by \citet[Section 4]{otsu/pesendorfer/takayashi:2016}, which follows from the duopoly entry game in \citet[Section 7.1]{pesendorfer/schmidt-dengler:2008}. 
The simulated data are generated by two oligopolistic firms deciding whether to enter or not into $n$ markets, and over $T$ time periods. This dynamic game has multiple equilibria, which we exploit to generate departures from the homogeneity assumption.

In each period $t=1,\dots,T$ and market $i=1,\dots,n$, there are four possible actions in this game: $A_{i,t}=1$ denotes that neither firm entered the market, $A_{i,t}=2$ denotes that only firm 2 enters, $A_{i,t}=3$ denotes that only firm 1 enters, and $A_{i,t}=4$ denotes that both firms enter. This implies that $\mathcal{A}=\{ 1,2,3,4\}$. As in the empirical application, the state variable in any market is the previous period's action (i.e., \eqref{eq:trivialstate transition probability_app} holds). This implies that $\mathcal{S}=\{ 1,2,3,4\}$ and that the state transition probabilities are homogeneous (and given by $f_{i,t+1}(s^{\prime}|a,s)~=~1\{s^{\prime}=a\}$). As a consequence, $H_{0}$ in \eqref{eq:HT} is equivalent to
\begin{equation}
H_{0}:\sigma_{i,t}(a|s)=\sigma(a|s).
\label{eq:HT2}
\end{equation}

The data produced by this game is a matrix $X=(S,A)\in\mathcal{X}$ constructed exactly as in \citet[Section 4]{otsu/pesendorfer/takayashi:2016}. We simulate data from a mixture of two data-generating processes: DGP 1 and DGP 2. They represent Markov perfect equilibria of the dynamic game, which differ in the conditional choice probabilities $\sigma(a|s)$. 
The matrices of conditional choice probabilities in DGP 1 and DGP 2 are
$$
\left(
\begin{array}{cccc}
0.19 & 0.30 & 0.12 & 0.18 \\
0.08 & 0.09 & 0.08 & 0.07 \\
0.53 & 0.48 & 0.46 & 0.53 \\
0.20 & 0.13 & 0.34 & 0.22 \\
\end{array}
\right)~~~\text{and}~~~
\left(
\begin{array}{cccc}
0.18 & 0.48 & 0.03 & 0.16 \\
0.20 & 0.21 & 0.14 & 0.23 \\
0.29 & 0.22 & 0.13 & 0.26 \\
0.33 & 0.09 & 0.70 & 0.35 \\
\end{array}
\right),
$$
respectively, where the index of the column indicates the state $s \in \mathcal{S}=\{ 1,2,3,4\}$, and the index of the row indicates the value of the action $a \in \mathcal{A}=\{ 1,2,3,4\}$.
Each market is sampled independently. Market $i=1,\dots,n$ behaves according to DGP 1 with probability $\lambda$ and DGP 2 with probability $1-\lambda$. Therefore, $\lambda\in[0,1] $ represents the average proportion of markets in DGP 1. Each market is initialized with a state equal to $1$, and we simulate the corresponding action according to the conditional choice probabilities. This, in turn, determines the next period's state according to \eqref{eq:trivialstate transition probability_app}, i.e., $S_{i,t+1}=A_{i,t}$. We then proceed iteratively until we have simulated $T+100$ periods for each market. The first 100 periods are discarded, producing a sample of $T$ periods for $n$ markets, which are then observed by the researcher.

For each simulated data, we implement our proposed test in \eqref{eq:testDefn} with $K=20,000$. We consider simulations with $n\in\{ 20,40,80,160\} $, $T\in\{ 5,10,20,40,80\} $, and $\lambda\in\{ 1,0,0.5,0.9\} $. As explained earlier, $\lambda $ represents the proportion of markets that are in DGP 1. If $\lambda =1$ or $\lambda =0$, all markets are sampled from the same distribution, and so the conditional choice probabilities are homogeneous across markets, i.e., $H_{0}$ holds. In turn, if $\lambda =0.5$ or $\lambda =0.9$, each data is composed of markets from both distributions, and so the conditional choice probabilities are not homogeneous across markets, i.e., $H_{0}$ fails. Note that $ \lambda =0.5$ generates data in which both distributions are equally represented, and so the heterogeneity in the conditional choice probabilities is more salient. On the other hand, the case with $\lambda =0.9$ produces data with a vast majority of markets in DGP 1, and so the heterogeneity in the conditional choice probabilities is harder to detect. For each simulation design, we compute rejection rates based on $2,000$ independently simulated datasets.

The results from the Monte Carlo simulation are shown in Table \ref{tab:table1_MC} for $\lambda\in\{0,1\}$ and Table \ref{tab:table2_MC} for $\lambda\in\{0.5,0.9\}$, respectively. For the sake of comparison, we also include the results from the test proposed by \cite{otsu/pesendorfer/takayashi:2016}. Their test compares the same test statistics in \eqref{eq:testStatOtsu} with critical values based on the bootstrap. As mentioned earlier, they show the validity of their test in an asymptotic framework with $T\to\infty $ and $n$ fixed. In contrast, our main result in Theorem \ref{thm:sizeControl} is valid for any finite $n$ and $T$.

Table \ref{tab:table1_MC} reveals that our test achieves relatively good size control for all values of time periods and market sizes under consideration. The table shows the result of running 80 hypothesis tests for different data configurations that satisfy $H_{0}$ (four market sizes, five time periods, two test statistics, and two distributions). Across these 80 numbers, our proposed test has an average rejection rate of 5.1\%, a standard deviation of 0.04\%, and a range of 4.1\% to 6.35\%. 
We note that Theorem \ref{thm:sizeControl} implies that our test should not produce overrejection as $K$ becomes large, but it is silent about the possibility of underrejection. Table \ref{tab:table1_MC} reveals that our test does not seem to suffer from underrejection in these simulations. For \cite{otsu/pesendorfer/takayashi:2016}'s test, the average rejection rate is also 5.1\%, but with a standard deviation is 2.2\% and a range of 0.6\% to 13.5\%. We note that the larger rejection rates occur in simulations with $T=5$, which is not unexpected for a test whose validity is proven in an asymptotic framework in which $T$ diverges.

\begin{table}[ht]
\centering
\begin{tabular}{rr|rrrr|rrrr}
  \hline\hline \multicolumn{1}{c}{\multirow{3}[0]{*}{$n$}} & 
\multicolumn{1}{c|}{\multirow{3}[0]{*}{$T$}} & \multicolumn{4}{c|}{DGP 1 ($\lambda = 1$)} & 
\multicolumn
{4}{c}{DGP 2 ($\lambda = 0$)} \\
 & & \multicolumn{2}{c}{Our test} & \multicolumn{2}{c|}{OPT's test} & 
 \multicolumn{2}{c}{Our test} & \multicolumn{2}{c}{OPT's test} \\
 & & \multicolumn{1}{c}{$\tau_1(X)$} & \multicolumn{1}{c}{$\tau_2(X)$} 
& \multicolumn{1}{c}{$\tau_1(X)$} & \multicolumn{1}{c|}{$\tau_2(X)$} & 
\multicolumn{1}{c}{$\tau_1(X)$} & \multicolumn{1}{c}{$\tau_2(X)$} & 
  \multicolumn{1}{c}{$\tau_1(X)$} & \multicolumn{1}{c}{$\tau_2(X)$} \\ \hline 20 & 5 &  5.0 &  5.0 & 13.2 &  5.9 &  5.0 &  4.8 & 13.5 & 12.7 \\ 
  20 & 10 &  4.8 &  4.8 &  7.0 &  4.5 &  5.1 &  5.2 &  7.9 &  8.4 \\ 
  20 & 20 &  5.1 &  4.9 &  4.4 &  5.0 &  5.0 &  5.1 &  5.8 &  7.1 \\ 
  20 & 40 &  4.7 &  4.7 &  5.1 &  6.2 &  6.0 &  5.0 &  4.8 &  5.4 \\ 
  20 & 80 &  4.9 &  4.3 &  5.7 &  6.6 &  4.3 &  4.6 &  5.1 &  5.2 \\ 
   \hline
\hline
40 & 5 &  5.4 &  5.0 &  6.5 &  2.3 &  4.7 &  4.8 &  8.0 &  6.4 \\ 
  40 & 10 &  5.2 &  4.8 &  3.8 &  2.7 &  4.1 &  4.8 &  5.2 &  4.9 \\ 
  40 & 20 &  5.1 &  5.2 &  4.3 &  3.4 &  5.6 &  5.9 &  6.2 &  6.9 \\ 
  40 & 40 &  5.6 &  6.4 &  4.5 &  5.3 &  6.1 &  5.7 &  3.9 &  5.3 \\ 
  40 & 80 &  4.5 &  4.6 &  5.3 &  5.3 &  5.0 &  5.2 &  5.6 &  4.5 \\ 
   \hline
\hline
80 & 5 &  5.8 &  5.2 &  5.3 &  1.5 &  5.0 &  5.0 &  4.6 &  3.7 \\ 
  80 & 10 &  5.3 &  4.8 &  3.2 &  1.2 &  4.4 &  4.7 &  5.6 &  5.1 \\ 
  80 & 20 &  4.4 &  4.4 &  5.2 &  3.5 &  5.0 &  4.8 &  4.9 &  5.7 \\ 
  80 & 40 &  5.2 &  5.8 &  3.9 &  3.9 &  5.6 &  5.4 &  4.7 &  5.1 \\ 
  80 & 80 &  5.0 &  5.0 &  4.7 &  4.6 &  5.1 &  4.9 &  5.3 &  5.0 \\ 
   \hline
\hline
160 & 5 &  5.9 &  5.4 &  4.9 &  0.6 &  5.1 &  4.4 &  4.0 &  1.4 \\ 
  160 & 10 &  5.3 &  4.6 &  3.4 &  0.9 &  5.5 &  5.0 &  4.7 &  3.9 \\ 
  160 & 20 &  5.1 &  5.4 &  3.3 &  2.4 &  5.3 &  5.4 &  4.5 &  4.5 \\ 
  160 & 40 &  4.8 &  5.1 &  4.8 &  4.8 &  5.4 &  5.2 &  6.3 &  5.5 \\ 
  160 & 80 &  5.6 &  5.6 &  4.5 &  4.6 &  4.8 &  5.4 &  5.5 &  5.2 \\ 
   \hline
\hline
\end{tabular}
\caption{\small  Rejection rates of simulations under $H_{0}$ for 
                                                                 $\alpha = 5\%$ based on 2,000 i.i.d.\ simulation draws. 
                                                                 The results for $\lambda=1$ corresponds to data sampled from 
                                                                 DGP 1 and $\lambda=0$ corresponds to data sampled from DGP 2. 
                                                                 The test statistics $\tau_1(X)$ and $\tau_2(X)$ are defined in 
                                                                 \eqref{eq:testStatOtsu}. Our test is computed according to 
                                                                 \eqref{eq:testDefn} with $K=20,000$. OPT refers to 
                                                                 \cite{otsu/pesendorfer/takayashi:2016}, whose results were 
                                                                 copied from Tables 1 and 2 in their paper.} 
\label{tab:table1_MC}
\end{table}

\begin{table}[ht]
\centering
\begin{tabular}{rr|rrrr|rrrr}
  \hline\hline \multicolumn{1}{c}{\multirow{3}[0]{*}{$n$}} & 
\multicolumn{1}{c|}{\multirow{3}[0]{*}{$T$}} & \multicolumn{4}{c|}{Mixture with $\lambda = 0.5$} & 
\multicolumn
{4}{c}{Mixture with $\lambda = 0.9$} \\
 & & \multicolumn{2}{c}{Our test} & \multicolumn{2}{c|}{OPT's test} & 
 \multicolumn{2}{c}{Our test} & \multicolumn{2}{c}{OPT's test} \\
 & & \multicolumn{1}{c}{$\tau_1(X)$} & \multicolumn{1}{c}{$\tau_2(X)$} 
& \multicolumn{1}{c}{$\tau_1(X)$} & \multicolumn{1}{c|}{$\tau_2(X)$} & 
\multicolumn{1}{c}{$\tau_1(X)$} & \multicolumn{1}{c}{$\tau_2(X)$} & 
  \multicolumn{1}{c}{$\tau_1(X)$} & \multicolumn{1}{c}{$\tau_2(X)$} \\ \hline 20 & 5 &   4.9 &   7.3 &  10.3 &   8.3 &   5.2 &   6.0 &  10.7 &   6.0 \\ 
  20 & 10 &   9.8 &  15.0 &   6.5 &   7.4 &   6.6 &   7.1 &   6.5 &   4.8 \\ 
  20 & 20 &  42.8 &  53.1 &  27.8 &  27.4 &  14.2 &  16.8 &  11.7 &  12.8 \\ 
  20 & 40 &  96.0 &  97.4 &  79.7 &  76.1 &  38.4 &  42.1 &  32.7 &  35.3 \\ 
  20 & 80 & 100   & 100   &  99.9 &  99.8 &  70.6 &  70.7 &  75.8 &  76.5 \\ 
   \hline
\hline
40 & 5 &   4.4 &   7.8 &   4.7 &   4.1 &   5.4 &   5.8 &   4.5 &   2.5 \\ 
  40 & 10 &  12.6 &  22.4 &   7.4 &   5.5 &   8.4 &   9.8 &   5.4 &   4.2 \\ 
  40 & 20 &  69.0 &  80.5 &  44.6 &  36.2 &  20.8 &  24.8 &  16.0 &  14.8 \\ 
  40 & 40 & 100   & 100   &  97.4 &  94.3 &  57.5 &  61.3 &  49.0 &  50.1 \\ 
  40 & 80 & 100   & 100   & 100   & 100   &  88.8 &  88.8 &  93.5 &  92.5 \\ 
   \hline
\hline
80 & 5 &   4.9 &  10.4 &   3.3 &   2.3 &   5.5 &   6.6 &   3.4 &   1.7 \\ 
  80 & 10 &  20.2 &  35.3 &  10.8 &   5.8 &   9.4 &  11.4 &   5.9 &   3.2 \\ 
  80 & 20 &  91.5 &  97.1 &  68.5 &  55.5 &  31.4 &  37.4 &  23.3 &  19.7 \\ 
  80 & 40 & 100   & 100   & 100   &  99.9 &  80.2 &  83.5 &  72.8 &  73.2 \\ 
  80 & 80 & 100   & 100   & 100   & 100   &  98.5 &  98.4 &  99.7 &  99.6 \\ 
   \hline
\hline
160 & 5 &   4.6 &  12.1 &   2.9 &   0.9 &   5.4 &   7.1 &   4.0 &   0.9 \\ 
  160 & 10 &  32.9 &  57.6 &  12.4 &   5.8 &  12.6 &  14.8 &   6.0 &   2.1 \\ 
  160 & 20 &  99.5 & 100   &  92.3 &  78.6 &  48.7 &  57.2 &  38.2 &  30.6 \\ 
  160 & 40 & 100   & 100   & 100   & 100   &  95.8 &  96.5 &  93.4 &  92.4 \\ 
  160 & 80 & 100   & 100   & 100   & 100   & 100   & 100   & 100   & 100   \\ 
   \hline
\hline
\end{tabular}
\caption{\small Rejection rates of simulations under $H_1$ for $\alpha = 5\%$ based 
          on 2,000 i.i.d.\ simulation draws. The results for $\lambda=0.5$ corresponds to data sampled from DGP 1 and DGP 2 
          in equal proportions, and the results for $\lambda=0.9$ corresponds to data sampled from DGP 1 and DGP 2 with 
          proportions 0.9 and 0.1, respectively. The test statistics $\tau_1(X)$ and $\tau_2(X)$ are defined in 
         \eqref{eq:testStatOtsu}. Our test is computed according to \eqref{eq:testDefn} with $K=20,000$. 
  OPT refers to \cite{otsu/pesendorfer/takayashi:2016}, whose results were copied from Tables 1 and 2 in their paper.} 
\label{tab:table2_MC}
\end{table}

Table \ref{tab:table2_MC} explores the performance of these tests for data configurations that do not satisfy $H_{0}$. We begin by explaining the results that are common to both hypothesis tests. First, recall that $\lambda$ denotes the proportion of the $n$ markets in the data that are in DGP 1. As $\lambda$ becomes closer to either zero or one, the data are increasingly coming from a single distribution, making the departure from the $H_{0}$ harder to detect. Second, as the number of markets $n$ grows, the inference methods gain more evidence of the presence of multiplicity, resulting in higher rejection rates. The same phenomenon occurs as the number of time periods $T$ increases. Third, we find that the hypothesis tests implemented with $\tau_2(X)$ tend to produce higher rejection rates than those implemented with $\tau_1(X)$. This finding appears consistent with the large $T$ optimality result in \citet[Proposition 2]{otsu/pesendorfer/takayashi:2016}. We now turn to compare rejection rates between the two tests. In most simulation designs, our test appears to have a higher or equal rejection rate than \cite{otsu/pesendorfer/takayashi:2016}'s test. The few exceptions occur in designs with $n=20$ and $T\in \{5,10\}$, which correspond to designs in which \cite{otsu/pesendorfer/takayashi:2016}'s test overrejects under $H_0$. This suggests that any power advantage of their test relative to ours may disappear when considering a size-corrected version.

\subsection{Simulations based on our empirical application}\label{sec:MC_appbased}

In this subsection, we explore the performance of our test in two DGPs related to the empirical application in Section \ref{sec:empiric}. The first data-generating process (DGP 1) satisfies $H_{0}$ in \eqref{eq:HT}, and the second one (DGP 2) does not. DGP 1 represents a discretized version of the pre-1990 Amendments data in the empirical application (i.e., $t\leq T_0 \equiv 9$), and is generated as follows. First, we discretize the data into $|\mathcal{S}|$ evenly spaced bins, which we denote by $\{\tilde{S}_{i,t}:i=1,\dots,n,~t=1,\dots,T\}$. As in the empirical application, the state variable in any market is the previous period's action (i.e., \eqref{eq:trivialstate transition probability_app} holds).  For each $i=1,\dots,n$, we simulate $S_{i,1}$ independently from the pre-1990 Amendments discretized distribution, i.e., for all $s \in \mathcal{S}=\{1,\dots,|\mathcal{S}|\}$,
\begin{align*}
    P(S_{i,1} = s)~=~\frac{\sum_{i=1}^{n}\sum_{t=1}^{T_0}1\{\tilde{S}_{i,t}=s\}}{n T_0 }.
\end{align*}
Second, for each $i=1,\dots,n$ and $t=1,\dots, T-1$, we simulate $A_{i,t}$ independently across markets according to the pre-1990 Amendments choice probabilities, i.e., for all $s,a \in \mathcal{S}=\{1,\dots,|\mathcal{S}|\}$,
\begin{align}
    P(A_{i,t}=a|S_{i,t} = s)~=~\frac{\sum_{i=1}^{n}\sum_{t=1}^{T_0}1\{\tilde{S}_{i,t}=s,\tilde{A}_{i,t}=a\}}
    {\sum_{i=1}^{n}\sum_{t=1}^{T_0}1\{\tilde{S}_{i,t}=s\}},\label{eq:markov_MC2}
\end{align}
where $S_{i,t}=A_{i,t-1}$ for all $i=1,\dots,n$ and $t=2,\dots, T$. Since the production capacity in each market and time period is drawn according to the market- and time-homogeneous conditional choice probabilities in \eqref{eq:markov_MC2}, DGP 1 satisfies $H_{0}$.

DGP 2 represents an economy in which half of the markets are negatively impacted by the 1990 Amendments, and is generated as follows. In the pre-Amendments periods (i.e., $t\leq T_0 \equiv 9$), DGP 2 coincides exactly with DGP 1. In the post-Amendments periods (i.e., $t> T_0$), the data is independently generated across markets in the following fashion. For markets with even index $i$ (i.e., $i=2,4,\dots,22$), the production level is distributed as in the pre-1990 Amendments periods (i.e., as in \eqref{eq:markov_MC2}). For markets with odd index $i$ (i.e., $i=1,3,\dots,23$), the production level is uniformly chosen to be weakly lower, i.e., for all $s,a \in \mathcal{S}=\{1,\dots,|\mathcal{S}|\}$,
\begin{align*}
    P(A_{i,t}=a|S_{i,t} = s)~=~\frac{1\{a\leq s\}}{s}.
\end{align*}
That is, markets with an even index $i$ are unaffected by the 1990 Amendments, while markets with an odd index $i$ are negatively affected. As in DGP 1, the state variable in any market is the previous period's action (i.e., \eqref{eq:trivialstate transition probability_app} holds). The structural change caused by the 1990 Amendments implies that DGP 2 does not satisfy $H_{0}$.

For each simulated data, we implement our proposed test in \eqref{eq:testDefn} with $K=20,000$. We consider simulations with $n=23 $, $T=19$, $T_0 =9$, and $|\mathcal{S}|\in \{5, 10, 15, \dots, 80\}$. The first three parameters are those in the empirical application, which has $|\mathcal{S}|=50$ bins. For each simulation design, we compute rejection rates based on $2{,}000$ independently simulated datasets.

The results from the Monte Carlo simulations are presented in Table \ref{tab:ryanmc}. We include results for our test and the one proposed by \cite{otsu/pesendorfer/takayashi:2016} with bootstrap-based $p$-values (see their Section 5 for details). We first describe results under DGP 1, i.e., when $H_{0}$ holds. Our test achieves good size control for all discretizations under consideration. Across the 10 hypothesis tests that satisfy $H_{0}$ (five discretizations and two test statistics), our test has an average rejection rate of 5.4\%, with a standard deviation of 0.3\%, and a range of 4.9\% to 5.9\%. These numbers also reveal that our test does not exhibit underrejection. On the other hand, \cite{otsu/pesendorfer/takayashi:2016}'s test suffers from overrejection, and this problem tends to exacerbate as $|\mathcal{S}|$ increases. For instance, when $|\mathcal{S}|$ is as in the empirical application (i.e., $|\mathcal{S}|=50$), their test has a rejection rate of 23.1\% for $\tau_1(X)$ and 26\% for $\tau_2(X)$, more than 4 times higher than the nominal size of $\alpha=5\%$. This issue may be explained by the fact that their validity result relies on $T \to \infty$, and these simulations only have $T=19$, which is smaller than $|\mathcal{S}| \in \{5, 10, 15, \dots, 80\}$. 

We now turn to the simulations under DGP 2, i.e., when $H_0$ fails. The results show that our test has nontrivial power for all values of $|\mathcal{S}| \in\{5, 10, 15, \dots, 80\}$. If particular, when $|\mathcal{S}|$ is as in the empirical application (i.e., $|\mathcal{S}|=50$), our test has a rejection rate of 36\% for $\tau_1(X)$ and 21.6\% for $\tau_2(X)$, which are considerably larger than the nominal size of $\alpha=5\%$. As one may expect, the power of our test tends to decrease with  $|\mathcal{S}|$. This is because the power of our test is based on permutations with common state values, which become increasingly rare as $|\mathcal{S}|$ grows. Also noteworthy is that, for $|\mathcal{S}|>20$, our test implemented with $\tau_1(X)$ has more power than when implemented with $\tau_2(X)$, which is an opposite pattern to that in the previous Monte Carlo simulations. Finally, we recognize that \cite{otsu/pesendorfer/takayashi:2016}'s test achieves much higher rejection rates, but these occur in the context of overrejection under the null hypothesis.

\begin{table}[h]
\centering
\begin{tabular}{c|rrrr|rrrr}
  \hline\hline 
& \multicolumn{4}{c|}{DGP 1 (i.e., $H_0$ holds)} & \multicolumn{4}{c}{DGP 2 (i.e., $H_0$ fails)} \\
 & \multicolumn{2}{c}{Our test} & \multicolumn{2}{c|}{OPT's test} &
 \multicolumn{2}{c}{Our test} & \multicolumn{2}{c}{OPT's test} \\
\multicolumn{1}{c|}{$|\mathcal{S}|$}  & \multicolumn{1}{c}{$\tau_1(X)$} & \multicolumn{1}{c}{$\tau_2(X)$} & 
 \multicolumn{1}{c}{$\tau_1(X)$} & \multicolumn{1}{c|}{$\tau_2(X)$} & 
 \multicolumn{1}{c}{$\tau_1(X)$} & \multicolumn{1}{c}{$\tau_2(X)$} & 
 \multicolumn{1}{c}{$\tau_1(X)$} & \multicolumn{1}{c}{$\tau_2(X)$} \\
  \hline
\hline
5 &   4.6 &   4.2 &   8.2 &   7.4 &   5.7 &   6.8 &  15.8 &  27.0 \\ 
  10 &   5.0 &   5.1 &   9.6 &  13.7 &   7.6 &  13.8 &  36.6 &  66.5 \\ 
  15 &   5.4 &   4.9 &  17.0 &  16.2 &  26.5 &  40.1 &  91.4 &  97.9 \\ 
  20 &   5.3 &   5.9 &  15.8 &  15.4 &  36.0 &  50.7 &  94.6 &  98.2 \\ 
  25 &   5.7 &   5.3 &  18.0 &  22.0 &  27.9 &  22.6 &  94.4 &  99.5 \\ 
  30 &   5.7 &   5.7 &  20.5 &  20.9 &  31.8 &  27.7 &  97.0 &  99.0 \\ 
  35 &   5.2 &   5.1 &  23.0 &  27.5 &  34.0 &  24.8 &  99.5 & 100   \\ 
  40 &   5.6 &   5.6 &  20.8 &  24.5 &  44.0 &  31.9 &  99.8 & 100   \\ 
  45 &   4.7 &   4.2 &  40.1 &  44.5 &  34.5 &  18.1 & 100   & 100   \\ 
  50 &   5.6 &   5.8 &  21.4 &  22.9 &  36.0 &  21.6 &  99.8 & 100   \\ 
  55 &   5.1 &   4.6 &  30.1 &  38.0 &  32.9 &  22.9 & 100   & 100   \\ 
  60 &   5.1 &   5.7 &  26.8 &  36.0 &  35.0 &  22.4 & 100   & 100   \\ 
  65 &   5.6 &   5.3 &  56.0 &  71.4 &  29.8 &  20.8 & 100   & 100   \\ 
  70 &   5.0 &   5.6 &  40.0 &  47.4 &  29.8 &  16.4 & 100   & 100   \\ 
  75 &   6.2 &   5.7 &  60.2 &  71.9 &  36.9 &  25.9 & 100   & 100   \\ 
  80 &   5.4 &   4.9 &  47.4 &  65.2 &  30.8 &  18.1 & 100   & 100   \\ 
   \hline
\hline
\end{tabular}
\caption{Simulation results under DGP 1 (i.e., $H_0$ holds) and DGP 2 (i.e., $H_0$ fails). The test statistics $\tau_1(X)$ and $\tau_2(X)$ are defined in 
                                                       \eqref{eq:testStatOtsu}. Our test is computed according to \eqref{eq:testDefn} with $K=20,000$. 
                                                       OPT refers to \cite{otsu/pesendorfer/takayashi:2016}, whose results are generated via bootstrap 
                                                       following the description in their Section 5.} 
\label{tab:ryanmc}
\end{table}

\subsection{Simulations based on a human capital formation model}

We now consider Monte Carlo simulations based on the human capital formation single-agent model in \cite{keane/wolpin:1997}. In this model, individuals choose an occupation each period throughout their working life. A distinctive feature of this model is that each individual has permanent unobserved heterogeneity, representing ``innate talents'' that are unobserved by the econometrician. We use this aspect of the model to generate departures from the homogeneity assumption.

We simulate datasets with $n=100$ individuals choosing among occupations over $T=10$ time periods. The choice of $T=10$ is inspired by the data used for \cite{keane/wolpin:1997}'s structural estimation.
In each period $t=1,\dots,T$, individual $i=1,\dots,n$ chooses between home production, white-collar work, blue-collar work, schooling, and military work, which we denote as $A_{i,t}\in\mathcal{A}=\{1,2,3,4,5\}$, respectively. The state variable for individual $i$ in period $t$, denoted as $S_{i,t}$, encodes the experience vector in each occupation, i.e.,
$$S_{i,t} ~=~ \Big(\sum\nolimits_{s<t}1\{A_{i,s}=a\}:a \in \mathcal{A}\Big).$$
By definition, $S_{i,t+1}$ is a deterministic function of $A_{i,t}$ and $S_{i,t}$, and so the state transition probability is homogeneous. As a consequence, $H_{0}$ in \eqref{eq:HT} is equivalent to
\begin{equation}
H_{0}:\sigma_{i,t}(a|s)=\sigma(a|s).
\label{eq:HT3}
\end{equation}

We draw the data $X=(S,A)\in\mathcal{X}$ as a mixture of three agent types: type 1, type 2, and type 3, each of which is motivated by \cite{keane/wolpin:1997}. Type 1 represents a baseline individual with $\sigma(a|s) = \hat{\sigma}(a|s)$, where $\hat{\sigma}(a|s)$ denotes the empirical counterpart computed from the pooled NLSY79 sample across all individuals and time periods. In the pooled sample, $S_{i,t}$ has 417 support points. Type 2 represents an individual with innate talent for white-collar work, resulting in $\sigma(2|s)= \min\{1,\max\{5\hat\sigma(2|s),1/2\}\}$ and all other choice probabilities scaled appropriately, i.e., $\sigma(a|s)=\hat\sigma(a|s)/(1-\sigma(2|s))$ for $a\neq 2$. Finally, type 3 is the analog of type 2 but for blue-collar work, i.e., $\sigma(3|s)= \min\{1,\max\{5\hat\sigma(3|s),1/2\}\}$ and $\sigma(a|s)=\hat\sigma(a|s)/(1-\sigma(3|s))$ for $a\neq 3$.

We simulate independent datasets characterized by the parameter \(\lambda = (\lambda_1, \lambda_2)\). In each dataset, individuals are independently drawn, and are of type 1 with probability $1 - \lambda_1 - \lambda_2$, type 2 with probability $\lambda_{1}$, and type 3 with probability $\lambda_{2}$.  We simulate datasets from two DGPs. The first DGP uses $\lambda = (0,0)$, which produces a homogeneous sample composed of individuals of type 1, i.e., $H_0$ in \eqref{eq:HT3} holds. The second DGP uses $\lambda=(0.230, 0.556)$, which generates a sample with unobserved heterogeneity, i.e., $H_0$ in \eqref{eq:HT3} fails. The values in $\lambda=(0.230, 0.556)$ correspond to the empirical frequencies estimated in \citet[Table 9]{keane/wolpin:1997}.

The Monte Carlo results are shown in Table \ref{tab:kwmc_all}. We include results for our test and the one proposed by \cite{otsu/pesendorfer/takayashi:2016} with bootstrap-based $p$-values. Under $H_0$, our test exhibits relatively good size control, with perhaps a slight tendency to overreject. On the other hand, \cite{otsu/pesendorfer/takayashi:2016}'s test suffers from considerable overrejection, which may be explained by the fact that the current empirical setting with $T=10$ cannot be well represented by their asymptotic results as $T\to \infty$. Under $H_1$, our test exhibits small yet non-trivial power. As in our first design, our test implemented with $\tau_2(X)$ has more power than when implemented with $\tau_1(X)$. Given their results under $H_0$, we do not dwell on the performance of \cite{otsu/pesendorfer/takayashi:2016}'s test under $H_1$.

\begin{table}[ht]
\centering
\begin{tabular}{l|rr|rr}
  \hline\hline
  \multicolumn{1}{c|}{DGP}& \multicolumn{2}{c|}{Our test} & \multicolumn{2}{c}{OPT's test} \\
   & $\tau_1(X)$ & $\tau_2(X)$ & $\tau_1(X)$ & $\tau_2(X)$ \\
  \hline\hline
  $\lambda=(0,0)$, i.e., $H_0$ in \eqref{eq:HT3} holds & 6.50 & 5.90 & 46.25 & 78.70 \\
 $\lambda=(0.23, 0.56)$, i.e., $H_0$ in \eqref{eq:HT3}  fails & 9.05 & 9.60 & 71.85 & 74.10 \\
  \hline\hline
\end{tabular}
\caption{Simulation results for both DGPs. The test statistics $\tau_1(X)$ and $\tau_2(X)$ are defined in \eqref{eq:testStatOtsu}. Our test is computed according to \eqref{eq:testDefn} with $K=20,000$. OPT refers to \cite{otsu/pesendorfer/takayashi:2016}, whose results are generated via bootstrap following the description in their Section 5.}
\label{tab:kwmc_all}
\end{table}

\section{Conclusions}\label{sec:concl}

This paper proposes a hypothesis test for the ``homogeneity assumption'' in dynamic discrete games. Our test is implemented by an MCMC algorithm and does not rely on functional forms imposed by the researcher. We show that our test is valid as the (user-defined) number of MCMC draws diverges, regardless of the number of markets and time periods in the data. This result contrasts with that of available methods in the literature, which require the number of time periods to diverge. We establish our validity result by showing that our proposed test is an MCMC approximation to a computationally infeasible underlying randomization test, which is valid in finite samples.
Our Monte Carlo simulations reveal that our test has an excellent performance in finite samples, both in terms of size control and power.

\appendix
\setcounter{equation}{0}
\renewcommand{\theequation}{\Alph{section}-\arabic{equation}}
\begin{small}

\section{Appendix to Section \ref{sec:MCMC}}\label{app:MCMC}

To save on notation, this appendix treats ordered pairs such as $I = (I_1,I_2)$ and $I^{(k)} = (I_1,I_2)$ as the set $\{I_1,I_2\}$ whenever this does not generate confusion.

\subsection{Implementation of step 2 in our MCMC algorithm}\label{sec:euler_algorithm}

For any $k=2,\ldots ,K$, $S^{(k-1)} \in \mathcal{S}^{nT}$, and $I^{(k)} $ selected in step 1 of our MCMC algorithm, step 2 of our MCMC algorithm draws $S^{(k)}$ uniformly within $ R_{S}(I^{(k)},S^{(k-1)})$. To implement this step, we propose a modification of the Euler Algorithm. For a description of the Euler Algorithm, see \cite{kandel/yossi/unger/winkler:1996,besag/mondal:2013}.
We first describe the original Euler Algorithm in Algorithm \ref{alg:Euler} and then introduce our modification in Algorithm \ref{alg:S_k}. Throughout this section, we use $0$ to represent an auxiliary value for the state variable that does not belong to the observed values of the state variable, as $0 \not\in \mathcal{S} = \{1,2,\dots,|\mathcal{S}|\}$.

\begin{algorithm}[Euler Algorithm]
\label{alg:Euler}
Given a sequence $\breve{\xi}\in(\mathcal{S}\cup \{0\})^{V}$ with $V \equiv dim(\breve{\xi}) \geq 2$, this algorithm randomly generates a sequence $\tilde{\xi}=(\tilde{\xi}_1,\ldots,\tilde{\xi}_V)$ in the following fashion:
\begin{itemize}
\item Step 1: 
For every $s,s^{\prime}\in\mathcal{S}\cup \{0\}$, set
\begin{equation}
  N^{(0)}(s,s^{\prime})~=~1\{(\breve{\xi}_{V},\breve{\xi}_{1})=(s,s^{\prime})\}+\sum_{v=1}^{V-1}1\{(\breve{\xi}_{v},\breve{\xi}_{v+1})=(s,s^{\prime})\}.  \label{eq:EA1}
\end{equation}
Also, set $\zeta_{1}=\breve{\xi}_{V}$ and $v=1$. 
Then, do the following:
\begin{enumerate}
\item[(a)] Given $\zeta_{v}$, generate $ \zeta_{v+1}$ according to the following distribution:
\begin{equation}
    P(\zeta_{v+1}=s\mid \zeta_{v})~=~
\frac{N^{(0)}(s,\zeta_{v})}{ \sum_{s^{\prime}\in\mathcal{S}\cup \{0\}}N^{(0)}(s^{\prime},\zeta_{v})},
\label{eq:EA2}
\end{equation}
where $\sum_{s\in\mathcal{S}\cup \{0\}}N^{(0)}(s,\zeta_{v})\geq 1$ is guaranteed by step 1.
\item[(b)]
If $(\zeta_2,\ldots,\zeta_{v+1})$ does not exhaust all the values in $\breve{\xi}$, then increase $v$ by one and go back to (a).
Otherwise, set $\bar{v}=v+1$ and go to step 2.
\end{enumerate}
\item Step 2: 
Set ${\tilde{\xi}}_{1}=\breve{\xi}_{1}$. 
Also, for every $s,s^{\prime}\in
\mathcal{S}\cup \{0\}$, set
\begin{equation}
    N^{(1)}(s,s^{\prime})~=~\sum_{v=1}^{V-1}1\{(\breve{\xi}_{v},\breve{\xi}_{v+1})=(s,s^{\prime})\}-1\{s^{\prime}=\zeta_{(\min \{v=2,\ldots ,\bar{v}:\ \zeta_{v}=s\}-1)},s\ne\breve{\xi}_{V}\}.
    \label{eq:EA3}
\end{equation}
For $v=2,\ldots ,V$, repeat the following:
\begin{itemize}
\item[(a)] 
Given $\tilde{\xi}_{v-1}$, generate $\tilde{\xi}_{v}$ according to 
\begin{align}
    P(\tilde{\xi}_{v}=s^{\prime}\mid \tilde{\xi}_{v-1}=s)~=~\left\{\begin{array}{cc}
       \dfrac{N^{(v-1)}(s,s^{\prime})}{\sum_{s^{\prime\prime}\in\mathcal{S}\cup\{0\}}N^{(v-1)}(s,s^{\prime\prime})}  & \text{ if }\sum_{s^{\prime\prime}\in\mathcal{S}\cup\{0\}}N^{(v-1)}(s,s^{\prime\prime})\geq 1, \\
       1\{s^{\prime}=\zeta_{(\min \{v=2,\ldots ,\bar{v}:\ \zeta_{v}=s\}-1)}\}  & \text{ otherwise}.
    \end{array}\right.\label{eq:EA4}
\end{align}
\item[(b)] 
For every $s,s^{\prime}\in
\mathcal{S}\cup \{0\}$, set $N^{(v)}(s,s^{\prime})~=~N^{(v-1)}(s,s^{\prime})-1\{N^{(v-1)}(s,s^{\prime})>0, (\tilde{\xi}_{v-1},\tilde{ \xi}_{v})=(s,s^{\prime})\}$ for every $s,s^{\prime}\in\mathcal{S}\cup \{0\}$.
\end{itemize}
\end{itemize}
\end{algorithm}

\begin{example}[Applying the Euler Algorithm in a simple case]
For the sake of illustration, we now apply the Euler Algorithm \ref{alg:Euler} to the sequence $\breve{\xi}=(1,1,2,1,2)$. This sequence was chosen because it generates an application of the Euler Algorithm that is both non-trivial and easy to explain. Note that $\breve{\xi}$ has $\dim(\breve{\xi})=V=5$ and only two values: 1 and 2. Also, $(0 \cup \mathcal{S}) = \{0,1,2\}$. As we now demonstrate, the Euler Algorithm applied to $\breve{\xi}$ produces a new sequence $\tilde{\xi}$ that is uniformly distributed in the set $\{ (1,1,2,1,2),(1,2,1,1,2)\}$. 

\begin{itemize}
    \item Step 1: We start by definiting $N^{(0)}(s,s')$ for $s,s' \in \{0,1,2\}$. Following \eqref{eq:EA1}, $N^{(0)}(s,s')=0$ if $s=0$ or $s'=0$, $N^{(0)}(1,1)=1$, $N^{(0)}(1,2)=2$, $N^{(0)}(2,1)=2$, and $N^{(0)}(2,2)=0$. Also, we set $\zeta_1=\breve{\xi}_{5} = 2$ and $v=1$. 

    Next, consider (a). This generates $\zeta_2$ according to \eqref{eq:EA2}, i.e., $$P(\zeta_2 = s|\zeta_1=2) ~=~ \frac{N^{(0)}(s,2)}{N^{(0)}(0,2)+N^{(0)}(1,2)+N^{(0)}(2,2)} ~=~ 1\{s=1\},$$ and so $\zeta_2=1$. 

    We next consider (b). In this case, note that $\zeta_2=1$ does not include all the values in $\breve{\xi}=(1,1,2,1,2)$, i.e., 1 and 2. Thus, we increase $v$ by one, i.e., $v=2$, and return to (a).

    In the new instance of (a), we generate $\zeta_3$ according to \eqref{eq:EA2}, i.e., $$P(\zeta_3 = s|\zeta_2=1) ~=~ \frac{N^{(0)}(s,1)}{N^{(0)}(0,1)+N^{(0)}(1,1)+N^{(0)}(2,1)} ~=~ \frac{ 1\{s=1\} + 2 \times 1\{s=2\}}{3}.$$
    That is, $\zeta_{3}=1$ with probability $1/3$ and $\zeta_{3}=2$ with probability $2/3$. 
    If $\zeta_{3}=1$ occurs, then $(\zeta_2,\zeta_3)=(1,1)$ does not include 1 and 2. Thus, we increase $v$ by one, i.e., $v=2$, and we return to (a). In turn, if $\zeta_{3}=2$ occurs, then $(\zeta_2,\zeta_3)=(1,2)$ includes 1 and 2, and we move on to step 3.
    
    By construction, the repetition of (a)-(b) will continue until $(\zeta_2,\ldots,\zeta_{v+1})$ includes all the values in $\breve{\xi}=(1,1,2,1,2)$, i.e.,  1 and 2. As a result of this, step 1 of the Euler Algorithm generates $\zeta=(2,1,1,\ldots,1,2)$, where the length of the 1's in the middle of $\zeta$ denotes the number of times (a)-(b) have been repeated. By definition, $\bar{v}$ is the length of $\zeta$. Finally, it is relevant for step 2 that $\min \{v=2,\ldots ,\bar{v}:~ \zeta_{v}=1\}=2$ and $\min \{v=2,\ldots ,\bar{v}:~ \zeta_{v}=2\}=\bar{v}$.

\item Step 2: Set ${\tilde{\xi}}_{1}=\breve{\xi}_{1}=1$. Also, we set $N^{(0)}(s,s')$ for $s,s' \in \{0,1,2\}$. Following \eqref{eq:EA3}, $N^{(1)}(1,s^{\prime})= 1$ and $N^{(1)}(2,s^{\prime})=2 \times 1\{s^{\prime}=1\}$. Then, we move on to (a).

In (a), we then generate $\tilde{\xi}_2$ according to \eqref{eq:EA4}, i.e.,
$$P(\tilde{\xi}_2 = s'|\tilde{\xi}_1=s) ~=~ \frac{N^{(1)}(1,s')}{N^{(1)}(1,0)+N^{(1)}(1,1)+N^{(1)}(1,2)} ~=~ \frac{1\{s' \in \{1,2\}\}}{2}.$$
That is, the algorithm chooses $\tilde{\xi}_2$ equal to $1$ or $2$ with equal probability. We now divide the argument into two cases.

\begin{itemize}
    \item Case 1: $\tilde{\xi}_2=1$. Then, (b) sets $N^{(2)}(1,s^{\prime})= 1\{s^{\prime}=2\}$ and $N^{(2)}(2,s^{\prime})=2\times 1\{s^{\prime}=1\}$. Then, (a) generates $\tilde{\xi}_3$ according to \eqref{eq:EA4}. Since $\tilde{\xi}_2=1$ and $\sum_{s'' \in \{0,1,2\}} N^{(2)}(1,s'')=1$, this gives
$$P(\tilde{\xi}_3 = s'|\tilde{\xi}_2=1) ~=~ \frac{N^{(2)}(1,s')}{N^{(2)}(1,0)+N^{(2)}(1,1)+N^{(2)}(1,2)} ~=~1\{s'=2\},$$
and so $\tilde{\xi}_3=2$. Then, (b) sets $N^{(3)}(1,s^{\prime})=0$ and $N^{(3)}(2,s^{\prime})=2 \times 1\{s^{\prime}=1\}$. Then, (a) generates $\tilde{\xi}_4$ according to \eqref{eq:EA4}. Since $\tilde{\xi}_3=2$ and $\sum_{s'' \in \{0,1,2\}} N^{(3)}(2,s'')=2$, this gives
$$P(\tilde{\xi}_4 = s'|\tilde{\xi}_3=2) ~=~ \frac{N^{(3)}(2,s')}{N^{(3)}(2,0)+N^{(3)}(2,1)+N^{(3)}(2,2)} ~=~1\{s'=1\},$$
and so $\tilde{\xi}_4=1$. Then, (b) sets $N^{(4)}(1,s^{\prime})=0$ and $N^{(4)}(2,s^{\prime})=1\{s^{\prime}=1\}$. Then, (a) generates $\tilde{\xi}_5$ according to \eqref{eq:EA4}. Since $\tilde{\xi}_4=1$ and $\sum_{s'' \in \{0,1,2\}} N^{(4)}(1,s'')=0$, this gives
$$P(\tilde{\xi}_5 = s'|\tilde{\xi}_4=1) ~=~ 1\{s'=\zeta_{(\min \{v=2,\ldots ,\bar{v}:~ \zeta_{v}=1\}-1)}\} ~=~1\{s'=2\},$$
and so $\tilde{\xi}_5=2$, where we have used that $\min \{v=2,\ldots ,\bar{v}:~ \zeta_{v}=1\}=2$. In conclusion, the resulting sequence is $\tilde{\xi}=(1,1,2,1,2)$.
\item Case 2: $\tilde{\xi}_2=2$. By repeating the arguments in case 1, it is not hard to see that the resulting sequence in case 2 is $\tilde{\xi} =(1,2,1,1,2)$.
\end{itemize}
\end{itemize}
Since Case 1 (i.e., $\tilde{\xi}_2=1$) and Case 2 (i.e., $\tilde{\xi}_2=2$) in step 2 are equally likely, we conclude that the Euler Algorithm generates the sequences $\tilde{\xi} =(1,1,2,1,2)$ and $\tilde{\xi} =(1,2,1,1,2)$ with equal probability, as desired.
\end{example}

Before we describe the central property of the Euler Algorithm, we first introduce a relevant definition.
\begin{definition}
\label{def:R_S0_appendix}
For any $\breve{\xi}\in(\mathcal{S}\cup \{0\})^{V}$, let $R_{S0}(\breve{\xi})$ denote the set of all $\tilde{\xi}\in(\mathcal{S}\cup \{0\})^{V}$ that satisfy the following conditions:
\begin{enumerate}
\item[(a)] $\tilde{\xi}_{1}=\breve{\xi}_{1}$,
\item[(b)] $\sum_{v=1}^{V-1}1\{\tilde{\xi}_{v}=s,\tilde{\xi}_{v+1}=s^{\prime}\}=\sum_{v=1}^{V-1}1\{\breve{\xi}_{v}=s,\breve{\xi}_{v+1}=s^{\prime}\}$ for all $s,s^{\prime}\in\mathcal{S}\cup \{0\}$.
\end{enumerate}
\end{definition}
Note that $\breve{\xi}\in R_{S0}(\breve{\xi})$, and so $R_{S0}(\breve{\xi}) \neq \emptyset$. Next, we give the main property of the Euler Algorithm.
\begin{lemma} \label{lemma:kandal_uniformity}
For any $\breve{\xi}\in(\mathcal{S}\cup \{0\})^{V}$, the outcome of the Euler Algorithm given $\breve{\xi}$ (i.e., Algorithm \ref{alg:Euler}) is uniformly distributed over $R_{S0}(\breve{\xi})$ conditional on $\breve{\xi}$. 
\end{lemma}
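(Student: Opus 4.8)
The plan is to recast the statement in the language of Eulerian circuits and deduce it from the BEST theorem. First I would associate to $\breve{\xi}$ the directed multigraph $G(\breve{\xi})$ on vertex set $\mathcal{S}\cup\{0\}$ whose edge multiset consists of the consecutive transitions $(\breve{\xi}_v,\breve{\xi}_{v+1})$ together with the closing transition $(\breve{\xi}_V,\breve{\xi}_1)$, exactly as encoded by $N^{(0)}$ in Step~1 of Algorithm~\ref{alg:Euler}; by construction $G(\breve{\xi})$ is balanced at every vertex and connected on the vertices it touches, so it admits Eulerian circuits. The first task is to exhibit a bijection between $R_{S0}(\breve{\xi})$ and the set of Eulerian circuits of $G(\breve{\xi})$ that begin at $\breve{\xi}_1$ (reading each $\tilde{\xi}\in R_{S0}(\breve{\xi})$ cyclically): condition (a) of Definition~\ref{def:R_S0_appendix} pins the starting vertex, while condition (b) says exactly that $\tilde{\xi}$ traverses each directed edge of $G(\breve{\xi})$ the prescribed number of times. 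Uniformity over $R_{S0}(\breve{\xi})$ is then equivalent to uniformity over these Eulerian circuits, and the bookkeeping of the auxiliary value $0$ and of repeated occurrences of $\breve{\xi}_1$ must be done carefully at this stage.

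Next I would identify the two phases of Algorithm~\ref{alg:Euler} with the two ingredients of the standard ``random arborescence, then greedy traversal'' construction. Step~2 runs a random walk on the reverse of $G(\breve{\xi})$, started at $\breve{\xi}_V$, with transition probabilities proportional to edge multiplicities, until every vertex has been visited; its first-entrance edges (the ones removed from the counts in Step~3 via the indicator $1\{s'=\zeta_{\min\{v:\zeta_v=s\}-1}\}$) form a random spanning arborescence $\mathcal{T}$ of $G(\breve{\xi})$ oriented toward $\breve{\xi}_V$, and a Markov-chain-tree (Aldous--Broder) computation gives an explicit formula for $P(\mathcal{T}=\mathcal{T}_0)$ in terms of the out-degrees of $G(\breve{\xi})$ and the structure of $\mathcal{T}_0$. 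Steps~3--4 then perform a forward greedy walk from $\breve{\xi}_1$: at each vertex one traverses a uniformly random not-yet-used outgoing edge, but the $\mathcal{T}$-edge at that vertex is reserved for last (this is the role of the fallback clause in the displayed transition probability of Step~4). I would show, by the classical argument, that this greedy walk never stops prematurely---if it were stuck at a vertex $v\neq\breve{\xi}_V$, the reserved $\mathcal{T}$-edge of $v$ would be an unused edge out of $v$, a contradiction---so it produces a genuine Eulerian circuit of $G(\breve{\xi})$ starting at $\breve{\xi}_1$, and that, conditional on $\mathcal{T}$, each Eulerian circuit compatible with $\mathcal{T}$ is produced with probability $\prod_v\big((\deg^{+}_{G(\breve{\xi})}(v)-1)!\big)^{-1}$ (with the obvious adjustment at the start vertex), which depends only on the degree sequence.

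The proof then finishes by a cancellation. Every Eulerian circuit $C$ of $G(\breve{\xi})$ starting at $\breve{\xi}_1$ determines a unique arborescence $\mathcal{T}(C)$ toward $\breve{\xi}_V$ (its last-exit edges), so
\[
P(\tilde{\xi}=C)~=~P\big(\mathcal{T}=\mathcal{T}(C)\big)\cdot P\big(\text{Algorithm~\ref{alg:Euler} outputs }C\mid\mathcal{T}(C)\big),
\]
and I would verify that the product of the Phase-1 arborescence weight and the Phase-2 conditional probability is the same for every $C$; this is precisely the content of the BEST theorem, which says that each arborescence corresponds to exactly $\prod_v(\deg^{+}(v)-1)!$ Eulerian circuits, independently of the root. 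Hence $\tilde{\xi}$ is uniform over the Eulerian circuits of $G(\breve{\xi})$ that start at $\breve{\xi}_1$, and transporting through the bijection gives uniformity over $R_{S0}(\breve{\xi})$. The main obstacle I anticipate is exactly this cancellation step, together with the bookkeeping that matches the indices and fallback rules of Algorithm~\ref{alg:Euler} to the abstract reverse-walk-arborescence and greedy-traversal constructions; once those are aligned the conclusion is immediate. Alternatively, one may simply invoke \cite{kandel/yossi/unger/winkler:1996}, where essentially this computation is carried out, together with \cite{besag/mondal:2013}.
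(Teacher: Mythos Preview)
Your proposal is correct, and in fact goes well beyond what the paper does: the paper's entire proof of this lemma is a one-line citation to \citet[Theorem~2]{kandel/yossi/unger/winkler:1996}. Your sketch---identifying $R_{S0}(\breve{\xi})$ with Eulerian circuits of the transition multigraph, recognizing Step~2 as an Aldous--Broder-type reverse walk producing a uniform arborescence, Step~3--4 as the greedy last-edge traversal, and closing via the BEST-theorem cancellation---is exactly the content of that theorem in Kandel et al., so you have effectively reproduced the argument the paper delegates to the reference. The only caveat is that the bookkeeping you flag (matching the index conventions and fallback rule in Step~4 to the abstract construction, and handling vertices not visited by $\breve{\xi}$) does require care, but since your final sentence already notes that one may simply invoke the cited reference, your proposal strictly subsumes the paper's proof.
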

\begin{proof}
See \citet[Theorem 2]{kandel/yossi/unger/winkler:1996}.
\end{proof}

We now introduce our modification of the Euler Algorithm to construct $S^{(k)}$ for any $k=2,\ldots ,K$. 

\begin{algorithm}[Generation of $S^{(k)}$]
\label{alg:S_k}
For any $k=2,\ldots ,K$ and given $(X^{(1)},\ldots ,X^{(k-1)},I^{(k)})$, $S^{(k)}$ is randomly generated as follows:
\begin{itemize}
\item Case 1: $I_1^{(k)} \neq I_2^{(k)}$. 
\begin{itemize}
\item Step 1: Set $\xi ^{(k-1)}=(S_{I_1^{(k)},1}^{(k-1)},\ldots ,S_{I_1^{(k)},T}^{(k-1)},0,S_{I_2^{(k)},1}^{(k-1)},\ldots ,S_{I_2^{(k)},T}^{(k-1)},0).$
\item Step 2: Generate $\xi ^{(k)}$ as follows:
\begin{itemize}
\item[(a)] Generate a random draw of ${\xi}$ using the Euler Algorithm given $\xi ^{(k-1)}$.
\item[(b)] If ${\xi}_{T+1}=0$, set $\xi ^{(k)}=\xi$ and go to step 3. Otherwise, return to (a).
\end{itemize}
\item Step 3: Given $\xi ^{(k)}$, generate $S^{(k)}$ as follows:
\begin{itemize}
\item[(a)] For every $i\notin I^{(k)}$, generate $(S_{i,1}^{(k)},\ldots,S_{i,T}^{(k)})$ using the Euler Algorithm given $(S_{i,1}^{(k-1)},\ldots,S_{i,T}^{(k-1)})$.
\item[(b)]  $(S_{I_1^{(k)},1}^{(k)},\ldots,S_{I_1^{(k)},T}^{(k)})=(\xi ^{(k)}_{1},\dots,\xi ^{(k)}_{T})$.
\item[(c)] $(S_{I_2^{(k)},1}^{(k)},\ldots,S_{I_2^{(k)},T}^{(k)})=(\xi ^{(k)}_{T+2},\dots,\xi ^{(k)}_{2T+1})$.
\end{itemize}
\end{itemize}
\item Case 2: $I_1^{(k)} = I_2^{(k)}$. For every $i=1,\ldots,n$, generate $(S_{i,1}^{(k)},\ldots,S_{i,T}^{(k)})$ using the Euler Algorithm given $(S_{i,1}^{(k-1)},\ldots,S_{i,T}^{(k-1)})$.\hfill $\blacksquare$
\end{itemize}
\end{algorithm}
\smallskip

Lemma \ref{lem:EulerWorks} shows that $S^{(k)}$ generated by Algorithm \ref{alg:S_k} has the desired properties.

\begin{lemma}\label{lem:EulerWorks}
For any $k=2,\ldots ,K$, $S^{(k)}$ generated by Algorithm \ref{alg:S_k} satisfies the requirements of step 2 of our MCMC algorithm, i.e., \eqref{eq:dist_1} holds.
\end{lemma}
\begin{proof}
We fix $k=2,\ldots ,K$, $(X^{(1)},\ldots ,X^{(k-1)})$, and a generic $\breve{S}\in\mathcal{S}^{nT}$ arbitrarily throughout this proof. We divide the proof in two cases.

Case 1: $I_{1}^{(k)}\not=I_{2}^{(k)}$. For $S^{(k-1)}$ and $S^{(k)}$ determined by $X^{(k-1)} = (S^{(k-1)},A^{(k-1)})$ and $X^{(k)} = (S^{(k)},A^{(k)})$, and for a generic $\breve{S} \in \mathcal{S}^{nT}$, we set
\begin{align*}
{\xi}^{(k-1)} &~=~({S}_{I_{1}^{(k-1)},1},\ldots ,{S} _{I_{1}^{(k-1)},T},0,{S}_{I_{2}^{(k-1)},1},\ldots ,{S} _{I_{2}^{(k-1)},T},0), \\
{\xi}^{(k)} &~=~({S}_{I_{1}^{(k)},1},\ldots ,{S} _{I_{1}^{(k)},T},0,{S}_{I_{2}^{(k)},1},\ldots ,{S} _{I_{2}^{(k)},T},0), \\
\breve{\xi} &~=~(\breve{S}_{I_{1}^{(k)},1},\ldots ,\breve{S} _{I_{1}^{(k)},T},0,\breve{S}_{I_{2}^{(k)},1},\ldots ,\breve{S} _{I_{2}^{(k)},T},0).
\end{align*}
Step 3 of Algorithm \ref{alg:S_k} implies
\begin{align}
& P(S^{(k)} =\breve{S}\mid I^{(k)},X^{(1)},\ldots ,X^{(k-1)})~=~ \notag \\
&
P(\xi ^{(k)}=\breve{\xi}\mid {\xi}^{(k-1)}) \times 
\prod_{i\not\in I^{(k)}}P((S_{i,1}^{(k)},\ldots ,S_{i,T}^{(k)})=( \breve{S}_{i,1},\ldots ,\breve{S}_{i,T})\mid S_{i,1}^{(k-1)},\ldots ,S_{i,T}^{(k-1)}),\label{eq:step2_6}
\end{align}
Lemma \ref{lemma:kandal_uniformity} implies that 
\begin{equation}
P((S_{i,1}^{(k)},\ldots ,S_{i,T}^{(k)})=(\breve{S}_{i,1},\ldots ,\breve{S} _{i,T})\mid S_{i,1}^{(k-1)},\ldots ,S_{i,T}^{(k-1)})=\frac{1\{(\breve{S} _{i,1},\ldots ,\breve{S}_{i,T})\in R_{S0}(S_{i,1}^{(k-1)},\ldots ,S_{i,T}^{(k-1)})\}}{|R_{S0}(S_{i,1}^{(k-1)},\ldots ,S_{i,T}^{(k-1)})|} \label{eq:step2_7}
\end{equation}
for every $i\not\in I^{(k)}$.
In turn, Lemma \ref{lemma:kandal_uniformity2} implies that
\begin{equation}
P(\xi ^{(k)}=\breve{\xi}\mid \xi^{(k-1)})~=~\frac{1\{ \breve{\xi}\in R_{S0}(\xi ^{(k-1)}):\breve{\xi}_{T+1}=0\}}{|\{\tilde{\xi}\in R_{S0}(\xi ^{(k-1)}):\tilde{\xi}_{T+1}=0\}|}. \label{eq:step2_8}
\end{equation}
By combining \eqref{eq:step2_6}, \eqref{eq:step2_7}, and \eqref{eq:step2_8},
\begin{align}
&P(S^{(k)}=\breve{S}\mid I^{(k)},X^{(1)},\ldots ,X^{(k-1)})~=~\notag \\
&\frac{1\{\breve{\xi}\in R_{S0}(\xi ^{(k-1)}):\breve{\xi}_{T+1}=0\}\times \prod_{i\not\in I^{(k)}}1\{(\breve{S}_{i,1},\ldots ,\breve{S}_{i,T})\in R_{S0}(S_{i,1}^{(k-1)},\ldots ,S_{i,T}^{(k-1)})\}}{|\{\tilde{\xi}\in R_{S0}(\xi ^{(k-1)}):\tilde{\xi}_{T+1}=0\}|\times \prod_{i\not\in I^{(k)}}|R_{S0}(S_{i,1}^{(k-1)},\ldots ,S_{i,T}^{(k-1)})|}. \label{eq:step2_9}
\end{align}
To complete the proof, it suffices to show that the right-hand side of \eqref{eq:step2_9} is equal to the right-hand side of \eqref{eq:dist_1}. To this end, it suffices to show that
\begin{align}
&1\{\breve{\xi} \in R_{S0}(\xi ^{(k-1)}):\breve{\xi}_{T+1}=0\}\times \prod_{i\not\in I^{(k)}}1\{(\breve{S}_{i,1},\ldots ,\breve{S}_{i,T})\in R_{S0}(S_{i,1}^{(k-1)},\ldots ,S_{i,T}^{(k-1)})\}
\notag\\
&~=~1\{ \breve{S}\in R_{S}(I^{(k)},S^{(k-1)})\} \label{eq:step2_10} 
\end{align}
and
\begin{align}
|\{\tilde{\xi} &\in R_{S0}(\xi ^{(k-1)}):\tilde{\xi}_{T+1}=0\}|\times \prod_{i\not\in I^{(k)}}|R_{S0}(S_{i,1}^{(k-1)},\ldots ,S_{i,T}^{(k-1)})|~=~\vert R_{S}(I^{(k)},S^{(k-1)})\vert .\label{eq:step2_11}
\end{align}
To show \eqref{eq:step2_10}, consider the following derivation:
\begin{align*}
& 1\{\breve{\xi} \in R_{S0}(\xi ^{(k-1)}):\breve{\xi}_{T+1}=0\}\times \prod_{i\not\in I^{(k)}}1\{(\breve{S}_{i,1},\ldots ,\breve{S}_{i,T})\in R_{S0}(S_{i,1}^{(k-1)},\ldots ,S_{i,T}^{(k-1)})\} \\
& ~=~\left\{
\begin{array}{c}
1\{(\breve{S}_{I_{1}^{(k)},1},\ldots ,\breve{S}_{I_{1}^{(k)},T},0,\breve{S} _{I_{2}^{(k)},1},\ldots ,\breve{S}_{I_{2}^{(k)},T},0)\in R_{S0}(\xi ^{(k-1)})\} \\
\times \prod_{i\not\in I^{(k)}}1\{(\breve{S}_{i,1},\ldots ,\breve{S} _{i,T})\in R_{S0}(S_{i,1}^{(k-1)},\ldots ,S_{i,T}^{(k-1)})\}
\end{array}
\right\}  \\
& ~\overset{(1)}{=}~1\left\{ 
\begin{array}{c}
\breve{S}_{i,1}={S}_{i,1}^{(k-1)}~\text{for all}~ i=1,\ldots ,n, \\
\sum_{i\in I^{(k)}}\sum_{t=1}^{T-1}1\{\breve{S}_{i,t}=s,\breve{S} _{i,t+1}=s^{\prime }\}=\sum_{i\in I^{(k)}}\sum_{t=1}^{T-1}1\{S_{i,t}^{(k-1)}=s,S_{i,t+1}^{(k-1)}=s^{\prime }\}~\forall s,s^{\prime }\in \mathcal{S}, \\
\sum_{i\in I^{(k)}}1\{\breve{S}_{i,T}=s\}=\sum_{i\in I^{(k)}}1\{S_{i,T}^{(k-1)}=s\}~\text{for all}~ s\in \mathcal{S}, \\
\sum_{t=1}^{T-1}1\{\breve{S}_{i,t}=s,\breve{S}_{i,t+1}=s^{\prime }\}=\sum_{t=1}^{T-1}1\{S_{i,t}^{(k-1)}=s,S_{i,t+1}^{(k-1)}=s^{\prime }\}~\text{for all}~ s,s^{\prime }\in \mathcal{S},~i\not\in I^{(k)}
\end{array}
\right\}  \\
& ~\overset{(2)}{=}~1\left\{ 
\begin{array}{c}
\breve{S}_{i,1}={S}_{i,1}^{(k-1)}~\text{for all}~ i=1,\ldots ,n, \\
\sum_{i\in I^{(k)}}\sum_{t=1}^{T-1}1\{\breve{S}_{i,t}=s,\breve{S} _{i,t+1}=s^{\prime }\}=\sum_{i\in I^{(k)}}\sum_{t=1}^{T-1}1\{S_{i,t}^{(k-1)}=s,S_{i,t+1}^{(k-1)}=s^{\prime }\} ~\forall s,s^{\prime }\in \mathcal{S}, \\
\sum_{t=1}^{T-1}1\{\breve{S}_{i,t}=s,\breve{S}_{i,t+1}=s^{\prime }\}=\sum_{t=1}^{T-1}1\{S_{i,t}^{(k-1)}=s,S_{i,t+1}^{(k-1)}=s^{\prime }\} ~\text{for all}~ s,s^{\prime }\in \mathcal{S},~i\not\in I^{(k)}
\end{array}
\right\} \\
& ~\overset{(3)}{=}~1\{\breve{S}\in R_{S}(I^{(k)},S^{(k-1)})\},
\end{align*}
as desired, where (1) holds by $I_{1}^{(k)}\not=I_{2}^{(k)}$ and Definition \ref{def:R_S0_appendix}, (2) by Lemma \ref {lemma:RS_terminal}, and (3) by Definition \ref{def:RS}. To show \eqref{eq:step2_11}, consider the following argument.
\begin{align*}
1 &~=~\sum_{\breve{S}\in \mathcal{S}}P(S^{(k)}=\breve{S}\mid I^{(k)},X^{(1)},\ldots ,X^{(k-1)}) \\
&~\overset{(1)}{=}~\frac{\sum_{\breve{S}\in \mathcal{S}}1\{ \breve{S}\in R_{S}(I^{(k)},S^{(k-1)})\} }{|\{\tilde{\xi}\in R_{S0}(\xi ^{(k-1)}): \tilde{\xi}_{T+1}=0\}|\times \prod_{i\not\in I^{(k)}}|R_{S0}(S_{i,1}^{(k-1)},\ldots ,S_{i,T}^{(k-1)})|} \\
&~=~\frac{\vert R_{S}(I^{(k)},S^{(k-1)})\vert }{|\{\tilde{\xi}\in R_{S0}(\xi ^{(k-1)}):\tilde{\xi}_{T+1}=0\}|\times \prod_{i\not\in I^{(k)}}|R_{S0}(S_{i,1}^{(k-1)},\ldots ,S_{i,T}^{(k-1)})|},
\end{align*}
where (1) holds by combining \eqref{eq:step2_9} and \eqref{eq:step2_10}. From here, \eqref{eq:step2_11} follows.

Case 2: $I_{1}^{(k)}=I_{2}^{(k)}$. Algorithm \ref{alg:S_k} implies
\begin{equation}
P(S^{(k)}=\breve{S}\mid I^{(k)},X^{(1)},\ldots ,X^{(k-1)})=\prod_{i=1}^{n}P((S_{i,1}^{(k)},\ldots ,S_{i,T}^{(k)})=(\breve{S} _{i,1},\ldots ,\breve{S}_{i,T})\mid S_{i,1}^{(k-1)},\ldots ,S_{i,T}^{(k-1)}).
\label{eq:step2_1}
\end{equation}
Lemma \ref{lemma:kandal_uniformity} implies that for every $i=1,\ldots ,n$,
\begin{equation}
P((S_{i,1}^{(k)},\ldots ,S_{i,T}^{(k)})~=~(\breve{S}_{i,1},\ldots ,\breve{S} _{i,T})\mid S_{i,1}^{(k-1)},\ldots ,S_{i,T}^{(k-1)})=\frac{1\{(\breve{S} _{i,1},\ldots ,\breve{S}_{i,T})\in R_{S0}(S_{i,1}^{(k-1)},\ldots ,S_{i,T}^{(k-1)})\}}{|R_{S0}(S_{i,1}^{(k-1)},\ldots ,S_{i,T}^{(k-1)})|}.
\label{eq:step2_2}
\end{equation}
By combining \eqref{eq:step2_1} and \eqref{eq:step2_2}
\begin{align}
P(S^{(k)}=\breve{S}\mid I^{(k)},X^{(1)},\ldots ,X^{(k-1)})&~=~\prod_{i=1}^{n} \frac{1\{(\breve{S}_{i,1},\ldots ,\breve{S}_{i,T})\in R_{S0}(S_{i,1}^{(k-1)},\ldots ,S_{i,T}^{(k-1)})\}}{|R_{S0}(S_{i,1}^{(k-1)}, \ldots ,S_{i,T}^{(k-1)})|}\notag \\
&~=~\frac{\prod_{i=1}^{n}1\{(\breve{S}_{i,1},\ldots ,\breve{S}_{i,T})\in R_{S0}(S_{i,1}^{(k-1)},\ldots ,S_{i,T}^{(k-1)})\}}{ \prod_{i=1}^{n}|R_{S0}(S_{i,1}^{(k-1)},\ldots ,S_{i,T}^{(k-1)})|}. \label{eq:step2_3}
\end{align}
To complete the proof, it suffices to show that the right-hand side of \eqref{eq:step2_3} is equal to the right-hand side of \eqref{eq:dist_1}. To this end, it suffices to show that
\begin{align}
\prod_{i=1}^{n}1\{(\breve{S}_{i,1},\ldots ,\breve{S}_{i,T}) \in R_{S0}(S_{i,1}^{(k-1)},\ldots ,S_{i,T}^{(k-1)})\}&=1\{ \breve{S}\in R_{S}(I^{(k)},S^{(k-1)})\} \label{eq:step2_4} \\
\prod_{i=1}^{n}|R_{S0}(S_{i,1}^{(k-1)},\ldots ,S_{i,T}^{(k-1)})| &=\vert R_{S}(I^{(k)},S^{(k-1)})\vert .\label{eq:step2_5} 
\end{align}
To show \eqref{eq:step2_4}, consider the following derivation:
\begin{align*}
&\prod_{i=1}^{n}1\{(\breve{S}_{i,1},\ldots ,\breve{S}_{i,T}) \in R_{S0}(S_{i,1}^{(k-1)},\ldots ,S_{i,T}^{(k-1)})\} \\
&\overset{(1)}{=} 1\left\{
\begin{array}{c}
\breve{S}_{i,1}={S}_{i,1}^{(k-1)}\text{ for all }i =1,\dots,n,\\
\sum_{t=1}^{T-1}1\{\breve{S}_{i,t}=s,\breve{S}_{i,t+1}=s^{\prime }\}=\sum_{t=1}^{T-1}1\{S_{i,t}^{(k-1)}=s,S_{i,t+1}^{(k-1)}=s^{\prime }\}\text{ for all }s,s^{\prime }\in \mathcal{S}~\text{and} ~i =1,\dots,n
\end{array}
\right\}  \\
&\overset{(2)}{=}1\{ \breve{S}\in R_{S}(I^{(k)},S^{(k-1)})\} ,
\end{align*}
as desired, where (1) holds by $I_{1}^{(k)}=I_{2}^{(k)}$ and Definition \ref{def:R_S0_appendix}, and (2) by Definition \ref{def:RS}. Finally, \eqref{eq:step2_5} can be shown by an argument that is analogous to the one used to prove \eqref{eq:step2_11}. We omit this for brevity.
\end{proof}

\begin{lemma} \label{lemma:kandal_uniformity2}
For any $k=2,\ldots ,K$, if $\xi ^{(k)}$ is generated by Algorithm \ref{alg:S_k}, then $\xi^{(k)}$ is uniformly distributed over the set $\{\xi\in R_{S0}(\xi ^{(k-1)}):\xi_{T+1}=0)\}$ conditional on $(I^{(k)},X^{(1)},\ldots ,X^{(k-1)})$.
\end{lemma}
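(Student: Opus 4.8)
The plan is to recognize Step 2 of Algorithm \ref{alg:S_k} (Case 1, which is the only case in which $\xi^{(k)}$ is defined) as a textbook rejection-sampling scheme and to combine it with the uniformity property of the Euler algorithm in Lemma \ref{lemma:kandal_uniformity}. Fix $k$ and condition throughout on $(I^{(k)},X^{(1)},\ldots,X^{(k-1)})$. Step 1 of Algorithm \ref{alg:S_k} builds $\xi^{(k-1)}$ as a deterministic function of $I^{(k)}$ and $S^{(k-1)}$, so $\xi^{(k-1)}$ is fixed under this conditioning. Moreover, each repetition of Step 2(a) draws a vector $\xi$ from the Euler algorithm using fresh internal randomization, independent of everything generated so far; hence, conditional on $\xi^{(k-1)}$, these draws are i.i.d., and by Lemma \ref{lemma:kandal_uniformity} each is uniform on $R_{S0}(\xi^{(k-1)})$.

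First I would check that the acceptance region $A\equiv\{\xi\in R_{S0}(\xi^{(k-1)}):\xi_{T+1}=0\}$ is non-empty, so that the scheme is well posed. This is immediate: by the remark following Definition \ref{def:R_S0_appendix}, $\xi^{(k-1)}\in R_{S0}(\xi^{(k-1)})$, and by the construction in Step 1 the $(T+1)$-th coordinate of $\xi^{(k-1)}$ is $0$, so $\xi^{(k-1)}\in A$. Writing $q\equiv|A|/|R_{S0}(\xi^{(k-1)})|\in(0,1]$, each repetition of Step 2 is accepted with probability $q>0$, so the number of repetitions is geometric with parameter $q$ and hence finite almost surely; in particular, $\xi^{(k)}$ is well defined.

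Next, for any fixed $\xi^{*}\in A$, I would decompose the event $\{\xi^{(k)}=\xi^{*}\}$ according to the random number of repetitions $j$ needed until acceptance. Using the conditional independence and uniformity of the Euler draws,
\begin{equation*}
P\big(\xi^{(k)}=\xi^{*}\mid I^{(k)},X^{(1)},\ldots,X^{(k-1)}\big)
=\sum_{j=1}^{\infty}(1-q)^{\,j-1}\frac{1}{|R_{S0}(\xi^{(k-1)})|}
=\frac{1}{|R_{S0}(\xi^{(k-1)})|}\cdot\frac{1}{q}
=\frac{1}{|A|},
\end{equation*}
where $(1-q)^{j-1}$ is the probability that the first $j-1$ draws fall outside $A$ and $1/|R_{S0}(\xi^{(k-1)})|$ is the probability that the $j$-th draw equals $\xi^{*}$. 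Since this value is the same for every $\xi^{*}\in A$ and is zero otherwise, $\xi^{(k)}$ is uniformly distributed over $A=\{\xi\in R_{S0}(\xi^{(k-1)}):\xi_{T+1}=0\}$ conditional on $(I^{(k)},X^{(1)},\ldots,X^{(k-1)})$, which is the claim.

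This argument is elementary; the only points that require care are (i) verifying the non-emptiness of $A$ so that the rejection loop terminates and defines a proper distribution, and (ii) being explicit that the successive Euler draws are conditionally i.i.d.\ given $\xi^{(k-1)}$ (equivalently, given $(I^{(k)},X^{(1)},\ldots,X^{(k-1)})$), which is what licenses the geometric-series computation. I do not anticipate any further obstacle.
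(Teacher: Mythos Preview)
Your proposal is correct and takes essentially the same approach as the paper: both recognize Step 2 as rejection sampling from the uniform distribution on $R_{S0}(\xi^{(k-1)})$ provided by Lemma \ref{lemma:kandal_uniformity}, yielding the uniform distribution on the truncated set. The paper's proof simply invokes the general fact that truncating a discrete uniform distribution gives a uniform distribution on the truncated set, whereas you spell out the non-emptiness check and the geometric-series computation explicitly; these are useful details but not a different idea.
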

\begin{proof}
By Lemma \ref{lemma:kandal_uniformity}, $\tilde{\xi}$ in step 2(a) of Algorithm \ref{alg:S_k} follows the uniform distribution on $ R_{S0}(\xi^{(k-1)})$, conditional on $(I^{(k)},X^{(1)},\ldots ,X^{(k-1)})$. step 2(b) of Algorithm \ref{alg:S_k} truncates the variable to the set $\{\xi\in R_{S0}(\xi ^{(k-1)}):\xi_{T+1}=0\}$. The desired result then follows from the fact that a truncated version of a discrete uniform distribution is uniformly distributed on the truncated set. 
\end{proof}

\begin{lemma}
\label{lemma:RS_terminal} 
For any $I\in\mathcal{I} $, if $\breve{S},\tilde{S}\in\mathcal{S}^{nT}$ satisfy the following conditions:
\begin{enumerate}[(a)]
\item $\tilde{S}_{i,1}=\breve{S}_{i,1}$ for all $i \in I$,
\item $\sum_{i\in I}\sum_{t=1}^{T-1}1\{ \tilde{S}_{i,t}=s,\tilde{S}_{i,t+1}=s^{\prime}\} =\sum_{i\in I}\sum_{t=1}^{T-1}1\{ \breve{S}_{i,t}=s,\breve{S}_{i,t+1}=s^{\prime}\} $ for all $s,s^{\prime}\in\mathcal{S}$.
\end{enumerate}
Then, $\sum_{i\in I}1\{ \tilde{S}_{i,T}=s\} =\sum_{i\in I}1\{ \breve{S} _{i,T}=s\} $ for all $s\in\mathcal{S}$.
\end{lemma}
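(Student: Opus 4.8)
The plan is to exploit a flow-conservation (telescoping) identity for each market's state path and then aggregate it over $i\in I$. Fix a state $s\in\mathcal{S}$ and a market $i$. The key observation is purely combinatorial: for the path $(S_{i,1},\dots,S_{i,T})$, summing the transition indicator $1\{S_{i,t}=s,S_{i,t+1}=s'\}$ over $s'\in\mathcal{S}$ and $t=1,\dots,T-1$ counts the visits to $s$ among the positions $t=1,\dots,T-1$, since each such $t$ contributes exactly one term (the one with $s'=S_{i,t+1}$); symmetrically, summing $1\{S_{i,t}=s',S_{i,t+1}=s\}$ over $s'$ and $t$ counts the visits to $s$ among the positions $t=2,\dots,T$. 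Subtracting these two counts telescopes to $1\{S_{i,1}=s\}-1\{S_{i,T}=s\}$.

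Concretely, I would first record the identity
\[
1\{S_{i,T}=s\}~=~1\{S_{i,1}=s\}~-~\sum_{s'\in\mathcal{S}}\sum_{t=1}^{T-1}\Big(1\{S_{i,t}=s,S_{i,t+1}=s'\}-1\{S_{i,t}=s',S_{i,t+1}=s\}\Big),
\]
valid for every $i$ and every $s\in\mathcal{S}$, and then sum it over $i\in I$ to obtain
\[
\sum_{i\in I}1\{S_{i,T}=s\}~=~\sum_{i\in I}1\{S_{i,1}=s\}~-~\sum_{s'\in\mathcal{S}}\Big(\sum_{i\in I}\sum_{t=1}^{T-1}1\{S_{i,t}=s,S_{i,t+1}=s'\}-\sum_{i\in I}\sum_{t=1}^{T-1}1\{S_{i,t}=s',S_{i,t+1}=s\}\Big).
\]

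Finally I would apply this aggregated identity once with $S=\tilde S$ and once with $S=\breve S$. Condition (a) of the lemma gives $\tilde S_{i,1}=\breve S_{i,1}$ for all $i\in I$, so $\sum_{i\in I}1\{\tilde S_{i,1}=s\}=\sum_{i\in I}1\{\breve S_{i,1}=s\}$; condition (b), applied to the ordered pairs $(s,s')$ and $(s',s)$ for every $s'\in\mathcal{S}$, makes the double-sum term on the right-hand side identical for $\tilde S$ and $\breve S$. Hence the left-hand sides coincide, i.e.\ $\sum_{i\in I}1\{\tilde S_{i,T}=s\}=\sum_{i\in I}1\{\breve S_{i,T}=s\}$, and since $s\in\mathcal{S}$ was arbitrary this is exactly the claimed conclusion.

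There is no genuinely hard step; the only point requiring care is the bookkeeping behind the telescoping identity, namely the equalities $\sum_{s'\in\mathcal{S}}\sum_{t=1}^{T-1}1\{S_{i,t}=s,S_{i,t+1}=s'\}=\sum_{t=1}^{T-1}1\{S_{i,t}=s\}$ and $\sum_{s'\in\mathcal{S}}\sum_{t=1}^{T-1}1\{S_{i,t}=s',S_{i,t+1}=s\}=\sum_{t=2}^{T}1\{S_{i,t}=s\}$, after which $\sum_{t=1}^{T-1}1\{S_{i,t}=s\}-\sum_{t=2}^{T}1\{S_{i,t}=s\}=1\{S_{i,1}=s\}-1\{S_{i,T}=s\}$ is immediate. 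It is worth noting that condition (a) is only invoked for $i\in I$, which is all that the aggregation requires.
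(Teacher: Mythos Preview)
Your proposal is correct and follows essentially the same approach as the paper: both establish the per-market telescoping identity expressing $1\{S_{i,T}=s\}$ in terms of $1\{S_{i,1}=s\}$ and the transition counts, then sum over $i\in I$ and invoke conditions (a) and (b). The only cosmetic difference is that the paper writes the identity as $1\{S_{i,T}=s\}=\text{(incoming)}-\text{(outgoing)}+1\{S_{i,1}=s\}$ while you write the equivalent rearrangement $1\{S_{i,T}=s\}=1\{S_{i,1}=s\}-[\text{(outgoing)}-\text{(incoming)}]$.
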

\begin{proof}
For every $i\in I$ and $s\in \mathcal{S}$, note that
\begin{align}
1\{\breve{S}_{i,T}=s\}&~=~\sum_{t=1}^{T-1}1\{\breve{S}_{i,t+1}=s\}- \sum_{t=1}^{T-1}1\{\breve{S}_{i,t}=s\}+1\{\breve{S}_{i,1}=s\}\notag\\
 & ~=~\sum_{\bar{s}\in \mathcal{S}}\sum_{t=1}^{T-1}1\{\breve{S}_{i,t}=\bar{s}, \breve{S}_{i,t+1}=s\}-\sum_{\bar{s}\in \mathcal{S} }\sum_{t=1}^{T-1}1\{\breve{S}_{i,t}=s,\breve{S}_{i,t+1}=\bar{s}\}+1\{\breve{S}_{i,1}=s\}. \label{eq:RS_terminal_key_eq1}
\end{align}
By the same argument applied to $\tilde{S}\in R_{S}(I,\breve{S})$, we have that for every $i\in I$ and $s\in \mathcal{S}$,
\begin{equation}
1\{\tilde{S}_{i,T}=s\}~=~\sum_{\bar{s}\in \mathcal{S}}\sum_{t=1}^{T-1}1\{ \tilde{S}_{i,t}=\bar{s},\tilde{S}_{i,t+1}=s\}-\sum_{\bar{s}\in \mathcal{S}}\sum_{t=1}^{T-1}1\{\tilde{S}_{i,t}=s,\tilde{S}_{i,t+1}=\bar{s} ^{\prime }\}+1\{\tilde{S}_{i,1}=s\}. \label{eq:RS_terminal_key_eq2}
\end{equation}
To show this lemma, fix $s\in \mathcal{S}$ arbitrarily and consider the following argument.
\begin{align*}
\sum_{i\in I}1\{\breve{S}_{i,T}=s\} &~\overset{(1)}{=}~\sum_{\bar{s}\in \mathcal{S}}\sum_{i\in I}\sum_{t=1}^{T-1}1\{\breve{S}_{i,t}=\bar{s},\breve{S}_{i,t+1}=s\}-\sum_{ \bar{s}\in \mathcal{S}}\sum_{i\in I}\sum_{t=1}^{T-1}1\{\breve{S} _{i,t}=s,\breve{S}_{i,t+1}=\bar{s}\}+\sum_{i\in I}1\{\breve{S} _{i,1}=s\} \\
&~\overset{(2)}{=}~\sum_{\bar{s}\in \mathcal{S}}\sum_{i\in I}\sum_{t=1}^{T-1}1\{\tilde{S}_{i,t}=\bar{s},\tilde{S}_{i,t+1}=s\}-\sum_{ \bar{s}\in \mathcal{S}}\sum_{i\in I}\sum_{t=1}^{T-1}1\{\tilde{S} _{i,t}=s,\tilde{S}_{i,t+1}=\bar{s}\}+\sum_{i\in I}1\{\tilde{S} _{i,1}=s\} \\
&~\overset{(3)}{=}~\sum_{i\in I}1\{\tilde{S}_{i,T}=s\},
\end{align*}
where (1) holds by \eqref{eq:RS_terminal_key_eq1}, (2) by conditions (a)-(b), and (3) by \eqref{eq:RS_terminal_key_eq2}.
\end{proof}

\subsection{Implementation of step 3 in our MCMC algorithm}\label{sec:step3}

For any $k=2,\ldots ,K$, $X^{(k-1)} \in \mathcal{X}$, and $S^{(k)}\in \mathcal{S}^{nT}$, step 3 of our MCMC algorithm draws $A^{(k)}$ uniformly within $R_{A}(S^{(k)},X^{(k-1)})$. This can be implemented by the following algorithm.

\begin{algorithm}[Generation of $A^{(k)}$]
\label{alg:A_k} 
For any $k=2,\ldots ,K$ and given $(X^{(1)},\ldots ,X^{(k-1)},I^{k},S^{(k)})$,
$A^{(k)}$ is randomly generated as follows
\begin{itemize}
\item Step 1: 
For every $(s,s^{\prime})\in\mathcal{S}\times\mathcal{S}$, define
\begin{align*}
\mathrm{Index}^{(k-1)}(s,s^{\prime})
&~=~
\{(i,t)\in\{1,\dots,n\}\times \{1,\dots,T-1\}:(S^{(k-1)}_{i,t},S^{(k-1)}_{i,t+1})=(s,s^{\prime})\}
\\
\mathrm{Index}^{(k)}(s,s^{\prime})
&~=~
\{(i,t)\in\{1,\dots,n\}\times \{1,\dots,T-1\}:(S^{(k)}_{i,t},S^{(k)}_{i,t+1})=(s,s^{\prime})\}
\\
\mathrm{Index}^{(k-1)}(s)
&~=~
\{(i,T):i\in\{1,\dots,n\},S^{(k-1)}_{i,T}=s\}
\\
\mathrm{Index}^{(k)}(s)
&~=~
\{(i,T):i\in\{1,\dots,n\},S^{(k)}_{i,T}=s\}.
\end{align*}
\item Step 2: For every $(s,s^{\prime})\in\mathcal{S}\times\mathcal{S}$, we generate $(A_{i,t}^{(k)}:(i,t)\in\mathrm{Index}^{(k)}(s,s^{\prime}))$ by uniformly sampling from  $(A_{i,t}^{(k-1)}:(i,t)\in\mathrm{Index}^{(k-1)}(s,s^{\prime}))$  without replacement.
\item Step 3: 
For every $s\in\mathcal{S}$, we construct $(A_{i,T}^{(k)}:(i,T)\in\mathrm{Index}^{(k)}(s))$ by uniformly sampling from the discrete set $(A_{i,T}^{(k-1)}:(i,T)\in\mathrm{Index}^{(k-1)}(s))$ without replacement. \hfill $\blacksquare$
 \end{itemize}
\end{algorithm}

Lemma \ref{lem:Step3works} shows that $A^{(k)}$ generated by Algorithm \ref{alg:A_k} has the desired properties.

\begin{lemma} \label{lem:Step3works}
For any $k=2,\ldots ,K$, $A^{(k)}$ generated by Algorithm \ref{alg:A_k} satisfies the requirements of step 3 of our MCMC algorithm, i.e., \eqref{eq:dist_2} holds.
\end{lemma}
\begin{proof}
This follows from noting that any element of $R_{A}(S^{(k)},X^{(k-1)})$ corresponds to a restricted set of permutations of the action data, and Algorithm \ref{alg:A_k} chooses an element uniformly within this set.
\end{proof}

\section{Appendix to Section \ref{sec:validity}}
\begin{proof}[Proof of Theorem \ref{thm:sizeControl}]\label{appendix:proof_siz_cont}
By \eqref{eq:testDefn}, \eqref{eq:asySizeControl} is equivalent to $\lim \inf_{K\rightarrow \infty }(\alpha -P(\hat{p}_{K}\leq \alpha ))\geq 0$. In this proof, we show a stronger statement \citep[cf.][Eq. (15.6)]{lehmann/romano:2005}: 
\begin{equation*}
\underset{K\rightarrow \infty }{\lim \inf }~\inf_{u\in [0,1]}~(u-P(\hat{p}_{K}\leq u))~~\geq ~~0.
\end{equation*}

Fix $\varepsilon >0$ arbitrarily for the rest of the proof. It suffices to find $\bar{K}(\varepsilon )<\infty $ such that
\begin{equation}
\inf_{u\in [0,1]} (u-P(\hat{p}_{K}\leq u))~\geq ~-2\varepsilon ~~~~\text{for all $K\geq \bar{K} (\varepsilon )$.}  \label{eq:sizeControl_0}
\end{equation}%

For any $K\in \mathbb{N}$ and $X\in \mathcal{X}$, let
\begin{equation*}
\mathcal{E}_{K}~\equiv ~\Bigg\{\sup_{t\in \mathbb{R}}\Bigg\vert\frac{1}{K} \sum_{k=1}^{K}1\{\tau (X^{(k)})\geq t\}-\frac{1}{|\mathbf{G}|}\sum_{g\in  \mathbf{G}}1\{\tau (g(X))\geq t\}\Bigg\vert>\varepsilon \Bigg\},
\end{equation*}%
where $(X^{(k)}:k=1,\dots ,K)$ is produced by the MCMC algorithm \ref{alg:MCMC} and $\mathbf{G}$ is the transformation group defined in Definition \ref{def:G}. By Lemma \ref{lem:MCMCconv}, for all $X \in \mathcal{X}$, there is $\bar{K}(\varepsilon ,X)<\infty $ such that $P(\mathcal{E}_{K}\mid X)\leq \varepsilon $ for all $K\geq \bar{K}(\varepsilon ,X)$. Since $\mathcal{X}$ is a finite set, we can define $\bar{K}(\varepsilon )\equiv \max_{X\in \mathcal{X}}K(\varepsilon ,X)<\infty $. Therefore,  $P(\mathcal{E}_{K}\mid X)\leq \varepsilon $ for all $X \in \mathcal{X}$ and $K\geq \bar{K}(\varepsilon )$. By this and the law of total probability, 
\begin{equation}
P(\mathcal{E}_{K})~\leq ~\varepsilon ~~~~\text{for all $K\geq \bar{K}(\varepsilon )$.}  \label{eq:sizeControl_2}
\end{equation}%

For any $K\in \mathbb{N}$ and $u\in [0,1]$, consider the following derivation: 
\begin{align}
P(\hat{p}_{K}\leq u)& ~=~P\bigg(\frac{1}{K}\sum_{k=1}^{K}1\{\tau (X^{(k)})\geq \tau (X)\}\leq u\bigg)  \notag \\
& ~\leq ~P\bigg(\frac{1}{|\mathbf{G}|}\sum_{g\in \mathbf{G}}1\{\tau (g(X))\geq \tau (X)\}\leq u+\varepsilon \bigg)+P(\mathcal{E}_{K})  \notag \\
& ~\overset{(1)}{\leq }~u+\varepsilon +P(\mathcal{E}_{K}),
\label{eq:sizeControl_1}
\end{align}%
where (1) holds by Lemma \ref{lem:PvalueProperty}. By \eqref{eq:sizeControl_1}, we have that
\begin{align}
\inf_{u\in [0,1]} (u-P(\hat{p}_{K}\leq u))~\geq ~-\varepsilon -P(\mathcal{E}_{K})~~~~\text{for all $K \in \mathbb{N}$.}
\label{eq:sizeControl_3}
\end{align}%
By \eqref{eq:sizeControl_2} and \eqref{eq:sizeControl_3}, we conclude that \eqref{eq:sizeControl_0} holds, as desired.
\end{proof}

\begin{example}[Computing $\mathbf{G}$ in two simple cases]\label{ex:G_explicit}
For illustration, we compute $\mathbf{G}$ in purposely simple data configurations with $n=2$ markets, $T=2$ time periods, and binary actions and states. These examples illustrate that the restrictions that define the set $\mathbf{G}$ requires thoughtful consideration even in simple examples. 

First, consider state support equal to $\mathcal{S} =  \{1,2\}$ and a trivial actions, i.e., $\mathcal{A} =\{1\}$. In this case, we have $|\mathcal{X}| = 2^4 = 16$ possible data configurations, corresponding to $S_{i,t}=1$ or $S_{i,t}=2$ for each $(i,t) \in \{1,2\}^2$. In principle, this yields $16!$ possible transformations from $\mathcal{X}$ onto itself. With trivial actions, the only restrictions to the transformations on $\mathbf{G}$ are those generated by $R_S(I,\breve{S})$ in Definition \ref{def:RS}. For example, restriction (a) in $R_S(I,\breve{S})$ impedes any transformation from altering the states in the first period or within any market. By restriction (c) in $R_S(I,\breve{S})$, we can only interchange state information in the second market when their states in the first period coincide. From these restrictions, we deduce that there are only four elements in $\mathbf{G}$:
\begin{enumerate}
\item $g_{1}(X)=X$ for all $X\in\mathcal{X}$, i.e., the identity transformation. In this case, there is no interchange in the state in the second period.
\item $g_{2}(X)$ is defined as follows:
\begin{eqnarray*}
g_{2}\left(\left(
\begin{array}{cc}
1 & 1 \\
1 & 2
\end{array}
\right),\left(
\begin{array}{cc}
1 & 1 \\
1 & 1
\end{array}
\right)\right)&=&\left(\left(
\begin{array}{cc}
1 & 2 \\
1 & 1
\end{array}
\right),\left(
\begin{array}{cc}
1 & 1 \\
1 & 1
\end{array}
\right)\right), \\
g_{2}\left(\left(
\begin{array}{cc}
1 & 2 \\
1 & 1
\end{array}
\right),\left(
\begin{array}{cc}
1 & 1 \\
1 & 1
\end{array}
\right)\right)&=&\left(\left(
\begin{array}{cc}
1 & 1 \\
1 & 2
\end{array}
\right),\left(
\begin{array}{cc}
1 & 1 \\
1 & 1
\end{array}
\right)\right),
\end{eqnarray*}
and $g_{2}(X)=X$ for any other $X\in\mathcal{X}$. In this case, there is an interchange of states in the second period only when the states in the first period are equal to one.
\item $g_{3}(X)$ is defined as follows:
\begin{eqnarray*}
g_{3}\left(\left(
\begin{array}{cc}
2 & 1 \\
2 & 2
\end{array}
\right),\left(
\begin{array}{cc}
1 & 1 \\
1 & 1
\end{array}
\right)\right)&=&\left(\left(
\begin{array}{cc}
2 & 2 \\
2 & 1
\end{array}
\right),\left(
\begin{array}{cc}
1 & 1 \\
1 & 1
\end{array}
\right)\right), \\
g_{3}\left(\left(
\begin{array}{cc}
2 & 2 \\
2 & 1
\end{array}
\right),\left(
\begin{array}{cc}
1 & 1 \\
1 & 1
\end{array}
\right)\right)&=&\left(\left(
\begin{array}{cc}
2 & 1 \\
2 & 2
\end{array}
\right),\left(
\begin{array}{cc}
1 & 1 \\
1 & 1
\end{array}
\right)\right),
\end{eqnarray*}
and $g_{3}(X)=X$ for any other $X\in\mathcal{X}$.  In this case, there is an interchange of states in the second period only when the states in the first period are equal to two.
\item $g_{4}(X)$ is defined as follows:
\begin{eqnarray*}
g_{4}\left(\left(
\begin{array}{cc}
1 & 1 \\
1 & 2
\end{array}
\right),\left(
\begin{array}{cc}
1 & 1 \\
1 & 1
\end{array}
\right)\right)&=&\left(\left(
\begin{array}{cc}
1 & 2 \\
1 & 1
\end{array}
\right),\left(
\begin{array}{cc}
1 & 1 \\
1 & 1
\end{array}
\right)\right), \\
g_{4}\left(\left(
\begin{array}{cc}
1 & 2 \\
1 & 1
\end{array}
\right),\left(
\begin{array}{cc}
1 & 1 \\
1 & 1
\end{array}
\right)\right)&=&\left(\left(
\begin{array}{cc}
1 & 1 \\
1 & 2
\end{array}
\right),\left(
\begin{array}{cc}
1 & 1 \\
1 & 1
\end{array}
\right)\right), \\
g_{4}\left(\left(
\begin{array}{cc}
2 & 1 \\
2 & 2
\end{array}
\right),\left(
\begin{array}{cc}
1 & 1 \\
1 & 1
\end{array}
\right)\right)&=&\left(\left(
\begin{array}{cc}
2 & 2 \\
2 & 1
\end{array}
\right),\left(
\begin{array}{cc}
1 & 1 \\
1 & 1
\end{array}
\right)\right), \\
g_{4}\left(\left(
\begin{array}{cc}
2 & 2 \\
2 & 1
\end{array}
\right),\left(
\begin{array}{cc}
1 & 1 \\
1 & 1
\end{array}
\right)\right)&=&\left(\left(
\begin{array}{cc}
2 & 1 \\
2 & 2
\end{array}
\right),\left(
\begin{array}{cc}
1 & 1 \\
1 & 1
\end{array}
\right)\right),
\end{eqnarray*}
and $g_{4}(X)=X$ for any other $X\in\mathcal{X}$.  In this case, there is an interchange of states in the second period only when the states in the first period coincide.
\end{enumerate}
This example illustrates that the restrictions that define the set $R_S(I,\breve{S})$ in Definition \ref{def:RS} can significantly constrain the transformations in $\mathbf{G}$.

Second, we consider an action support equal to $\mathcal{A} =\{1,2\}$ and trivial states, i.e., $\mathcal{S} =  \{1\}$. As in the previous example, there are $|\mathcal{X}| = 16$ possible data configurations, now corresponding to $A_{i,t}=1$ or $A_{i,t}=2$ for each $(i,t) \in \{1,2\}^2$, which yields $16!$ possible transformations from $\mathcal{X}$ onto itself. With trivial states, the only restrictions to the transformations on $\mathbf{G}$ are those generated by $R_A(\tilde{S},(\breve{S},\breve{A}))$ in Definition \ref{def:RA}. Furthermore, since $\tilde{S}=\breve{S}$ equals a $4\times 4$ matrix of ones, any permutation of the actions is allowed. As a corollary, we conclude that any $g \in \mathbf{G}$ has to satisfy the following restrictions:
\begin{enumerate}
\item $g\left( \left( 
\begin{array}{cc}
1 & 1 \\ 
1 & 1%
\end{array}%
\right) ,\left( 
\begin{array}{cc}
1 & 1 \\ 
1 & 1%
\end{array}%
\right) \right) =\left( \left( 
\begin{array}{cc}
1 & 1 \\ 
1 & 1%
\end{array}%
\right) ,\left( 
\begin{array}{cc}
1 & 1 \\ 
1 & 1%
\end{array}%
\right) \right) $,
\item $g\left( \left( 
\begin{array}{cc}
1 & 1 \\ 
1 & 1%
\end{array}%
\right) ,\left( 
\begin{array}{cc}
2 & 2 \\ 
2 & 2%
\end{array}%
\right) \right) =\left( \left( 
\begin{array}{cc}
1 & 1 \\ 
1 & 1%
\end{array}%
\right) ,\left( 
\begin{array}{cc}
2 & 2 \\ 
2 & 2%
\end{array}%
\right) \right) $,
\item Let $\mathcal{A}_{1}=\left\{ \left( 
\begin{array}{cc}
2 & 1 \\ 
1 & 1%
\end{array}%
\right) ,\left( 
\begin{array}{cc}
1 & 2 \\ 
1 & 1%
\end{array}%
\right) ,\left( 
\begin{array}{cc}
1 & 1 \\ 
1 & 2%
\end{array}%
\right) ,\left( 
\begin{array}{cc}
1 & 1 \\ 
2 & 1%
\end{array}%
\right) \right\} $. For any $A \in \mathcal{A}_{1}$, $g\left( \left( 
\begin{array}{cc}
1 & 1 \\ 
1 & 1%
\end{array}%
\right) ,A\right) =\left( \left( 
\begin{array}{cc}
1 & 1 \\ 
1 & 1%
\end{array}%
\right) ,\tilde{A}\right) $ with $\tilde{A}\in \mathcal{A}_{1}$. Since $\mathcal{A}_{1}$ has 4 elements, this restriction produces $4!$ permutations.
\item Let $\mathcal{A}_{2}=\left\{ \left( 
\begin{array}{cc}
2 & 2 \\ 
1 & 1%
\end{array}%
\right) ,\left( 
\begin{array}{cc}
2 & 1 \\ 
2 & 1%
\end{array}%
\right) ,\left( 
\begin{array}{cc}
2 & 1 \\ 
1 & 2%
\end{array}%
\right) ,\left( 
\begin{array}{cc}
1 & 2 \\ 
2 & 1%
\end{array}%
\right) ,\left( 
\begin{array}{cc}
1 & 2 \\ 
1 & 2%
\end{array}%
\right) ,\left( 
\begin{array}{cc}
1 & 1 \\ 
2 & 2%
\end{array}%
\right) \right\} $. For any $A \in \mathcal{A}_{2}$,  $g\left( \left( 
\begin{array}{cc}
1 & 1 \\ 
1 & 1%
\end{array}%
\right) ,A\right) =\left( \left( 
\begin{array}{cc}
1 & 1 \\ 
1 & 1%
\end{array}%
\right) ,\tilde{A}\right) $ with $\tilde{A}\in \mathcal{A}_{2}$.  Since $\mathcal{A}_{2}$ has 6 elements, this restriction produces $6!$ permutations.
\item Let $\mathcal{A}_{3}=\left\{ \left( 
\begin{array}{cc}
1 & 2 \\ 
2 & 2%
\end{array}%
\right) ,\left( 
\begin{array}{cc}
2 & 1 \\ 
2 & 2%
\end{array}%
\right) ,\left( 
\begin{array}{cc}
2 & 2 \\ 
1 & 2%
\end{array}%
\right) ,\left( 
\begin{array}{cc}
2 & 2 \\ 
2 & 1%
\end{array}%
\right) \right\} $. For any $A \in \mathcal{A}_{3}$, $g\left( \left( 
\begin{array}{cc}
1 & 1 \\ 
1 & 1%
\end{array}%
\right) ,A\right) =\left( \left( 
\begin{array}{cc}
1 & 1 \\ 
1 & 1%
\end{array}%
\right) ,\tilde{A}\right) $ with $\tilde{A}\in \mathcal{A}_{3}$.  Since $\mathcal{A}_{3}$ has 4 elements, this restriction produces $4!$ permutations.
\end{enumerate}
These five configurations are mutually exclusive and exhaust all possible action data. The set $\mathbf{G}$ has  $4! \times 4! \times 6! = 414,720$ elements, and is obtained by selecting one permutation from these five configurations.

To conclude, we note that these examples feature a setup in which either the action or the state data are trivial. In situations where both of these are nontrivial, the computation of $\mathbf{G}$ can become challenging, even when the number of markets and time periods remains low. Furthermore, as either of these increases, our experience is that the computation of $\mathbf{G}$ quickly becomes unmanageable.
\end{example}

\begin{proof}[Proof of Lemma \ref{thm:Lik2}]
Note that 
\begin{equation}\label{eq:S_lik2}
P(S=\tilde{S})~=~\prod_{i=1}^{n}\left(m_{i}(\tilde{S}_{i,1})\prod_{t=1}^{T-1}\left(\sum_{ a\in\mathcal{A}}f(\tilde{S}_{i,t+1}|a,\tilde{S}_{i,t})\sigma(a |\tilde{S}_{i,t})\right)\right).
\end{equation}
This equation follows from the next derivation:
\begin{align*}
P(S=\tilde{S})
&~\overset{(1)}{=}~
\prod_{i=1}^{n}\left(P(S_{i,1}=\tilde{S}_{i,1})\prod_{t=2}^{T}P(S_{i,t}=\tilde{S}_{i,t}|(S_{i,1},\ldots,S_{i,t-1})=(\tilde{S}_{i,1},\ldots,\tilde{S}_{i,t-1}))\right)\\
&~\overset{(2)}{=}~
\prod_{i=1}^{n}\left(P(S_{i,1}=\tilde{S}_{i,1})\prod_{t=2}^{T}P(S_{i,t}=\tilde{S}_{i,t}|S_{i,t-1}=\tilde{S}_{i,t-1})\right)\\
&~=~
\prod_{i=1}^{n}\left( P(S_{i,1}=\tilde{S}_{i,1})
     \prod_{t=2}^{T}\left(\sum_{a\in\mathcal{A}}
     \left(\begin{array}{c}
         P(S_{i,t}=\tilde{S}_{i,t}|A_{i,t-1}=a,S_{i,t-1}=\tilde{S}_{i,t-1})\\
         \times P(A_{i,t-1}=a|S_{i,t-1}=\tilde{S}_{i,t-1})
     \end{array}\right)
     \right)
\right)\\
&~\overset{(3)}{=}~
\prod_{i=1}^{n}\left(m_{i}(\tilde{S}_{i,1})\prod_{t=2}^{T}\left(\sum_{ a\in\mathcal{A}}f(\tilde{S}_{i,t+1}|a,\tilde{S}_{i,t})\sigma(a |\tilde{S}_{i,t})\right)\right),
\end{align*}
where (1) holds by Assumption \ref{ass:M}(a), (2) by Lemma \ref{lem:S_markov}, and (3) by $H_{0}$ in \eqref{eq:HT}.

To conclude the proof, it suffices to show \eqref{eq:Lik2_a_s} and \eqref{eq:Lik2_s}. 
To this end, consider the following derivation:
\begin{align}
P(X=\tilde{X})
&~\overset{(1)}{=}~
\prod_{i=1}^{n}\left(m_{i}(\tilde{S}_{i,1})\sigma(\tilde{A}_{i,T}|\tilde{S}_{i,T})\prod_{t=1}^{T-1}\left(\sigma(\tilde{A}_{i,t}|\tilde{S}_{i,t})f(\tilde{S}_{i,t+1}|\tilde{S}_{i,t},\tilde{A}_{i,t})\right)\right)\notag\\
&~\overset{(2)}{=}~
P(S =\tilde{S})\left(\sigma(\tilde{A}_{i,T}|\tilde{S}_{i,T})\left(\prod_{t=1}^{T-1}\frac{\sigma(\tilde{A}_{i,t}|\tilde{S}_{i,t})f(\tilde{S}_{i,t+1}|\tilde{S}_{i,t},\tilde{A}_{i,t})}{ \sum_{a\in\mathcal{A}}f(\tilde{S}_{i,t+1}|a,\tilde{S}_{i,t})\sigma(a|\tilde{S}_{i,t})} \right)\right), \label{eq:Lik2_1}
\end{align}
where (1) holds by \eqref{eq:lik1}, which is shown in Lemma \ref{lem:Lik1}, and (2) by \eqref{eq:S_lik2}. By combining \eqref{eq:Lik2} and \eqref{eq:Lik2_1}, we conclude that
$$
P(A=\tilde{A}|S=\tilde{S})
~=~
\prod_{i=1}^{n}\left(\sigma(\tilde{A}_{i,T}|\tilde{S}_{i,T})\left(\prod_{t=1}^{T-1}\frac{\sigma(\tilde{A}_{i,t}|\tilde{S}_{i,t})f(\tilde{S}_{i,t+1}|\tilde{S}_{i,t},\tilde{A}_{i,t})}{\sum_{a\in\mathcal{A} }f(\tilde{S}_{i,t+1}|a,\tilde{S}_{i,t})\sigma(a|\tilde{S}_{i,t})}\right)\right).
$$
By re-expressing this equation in terms of counts of $(s,a,s^{\prime})\in\mathcal{S}\times\mathcal{A}\times\mathcal{S}$, \eqref{eq:Lik2_a_s} follows.
Moreover, \eqref{eq:Lik2_s} follows from re-expressing \eqref{eq:S_lik2} in terms of individual counts for each $(s,s^{\prime})\in\mathcal{S}\times\mathcal{S}$. 
\end{proof}

\begin{proof}[Proof of Lemma \ref{lemma:Gdefn}]
We first show that $\mathbf{G}$ is a collection of transformations from $\mathcal{X}$ onto itself. Consider any $g\in \mathbf{G}$. By definition, $g$ is the composition of a finite number of transformations in $\bigcup_{I\in \mathcal{I}}\mathbf{G}(I)$, i.e., $g=g^{(K)}\circ \cdots \circ g^{(1)}$ with $(g^{(1)},\ldots ,g^{(K)})\in \mathbf{G}(I^{(1)})\times \cdots \times \mathbf{G}(I^{(K)})$ with $I^{(j)}\in \mathcal{I}$ for $j=1,\ldots ,K$. By Lemma \ref{lem:GI_is_group}, $g^{(j)}\in \mathbf{G}(I^{(j)})$ are onto transformations from $\mathcal{X}$ to itself. From this, we can conclude that $g=g^{(K)}\circ \cdots \circ g^{(1)}$ is an onto transformation from $ \mathcal{X}$ to itself, as desired.

Second, we show that $\mathbf{G}$ is a transformation group. To this end, it suffices to verify conditions (i)-(iv) in \citet[Section A.1]{lehmann/romano:2005}. To verify condition (i), consider arbitrary $g_{1},g_{2}\in \mathbf{G}$. By definition, this implies $g_{1}$ and $g_{2}$ are compositions of a finite number of transformations in $\bigcup_{I\in \mathcal{I}}\mathbf{G}(I)$. Then, $g_{2}\circ g_{1}$ is a composition of a finite number of elements in $ \bigcup_{I\in \mathcal{I}}\mathbf{G}(I)$, and so $g_{2}\circ g_{1}\in \mathbf{G}$.
Condition (ii) follows from the argument in \citet[page 693]{lehmann/romano:2005}. 
Condition (iii) follows from the fact that $\mathbf{G}(I)$ is a transformation group for any $I\in\mathcal{I} $ (shown in Lemma \ref{lem:GI_is_group}), and so it includes the identity transformation.
To verify condition (iv), consider the following argument for any arbitrary $g\in \mathbf{G}$. By definition, $g$ is the composition of a finite number of transformations in $\bigcup_{I\in \mathcal{I}}\mathbf{G}(I)$, i.e., $g=g^{(K)}\circ \cdots \circ g^{(1)}$ with $(g^{(1)},\ldots ,g^{(K)})\in \mathbf{G}(I^{(1)})\times \cdots \times \mathbf{G}(I^{(K)})$ with $I^{(j)}\in \mathcal{I}$ for $j=1,\ldots ,K$. By Lemma \ref{lem:GI_is_group}, $\mathbf{G}(I^{(j)})$ is a transformation group for each $j=1,\ldots ,K$. From this, we can conclude that $\exists ( g^{(j)}) ^{-1}\in \mathbf{G}(I^{(j)})$ for each $j=1,\ldots ,K$. Since $g\circ \tilde{g}$ and $\tilde{g}\circ g$ are equal to the identity transformation, $\tilde{g}=g^{-1}$.  Finally, note that $g^{-1}=( g^{(1)}) ^{-1}\circ \cdots \circ ( g^{(K)}) ^{-1}$ is the compositions of a finite number of transformations in $\bigcup_{I\in \mathcal{I}}\mathbf{G}(I)$ and so $ g^{-1}\in \mathbf{G}$, as desired.

To complete the proof, it suffices to show that, for any $\tilde{X}\in \mathcal{X}$ and $g\in \mathbf{G}$, $\tilde{X}$ and $g\tilde{X}$ have the same sufficient statistics in \eqref{eq:SuffStatistics}. $g$ is the composition of a finite number of transformations in $\bigcup_{I\in \mathcal{ I}}\mathbf{G}(I)$, i.e., $g=g^{(K)}\circ \cdots \circ g^{(1)}$ with $ (g^{(1)},\ldots ,g^{(K)})\in \mathbf{G}(I^{(1)})\times \cdots \times \mathbf{ G}(I^{(K)})$ with $I^{(j)}\in \mathcal{I}$ for $j=1,\ldots ,K$. Therefore, $g \tilde{X}=g^{(K)}\circ \cdots \circ g^{(1)}\tilde{X}$. For each $j=1,\ldots ,K$, Lemma \ref{lem:GI_keeps_suff_stat}  implies that, for any $\breve{X}\in \mathcal{X}$ and $g^{( j) }\in \mathbf{G}(I^{(j)})$, $g^{( j) }\breve{X}$ and $\breve{X}$ have the same sufficient statistic in \eqref{eq:SuffStatistics}. From these observations and by finite induction, it follows that $\tilde{X}$ and $g\tilde{X}$ have the same sufficient statistics in \eqref{eq:SuffStatistics}, as desired.
\end{proof}

\begin{proof}[Proof of Lemma \ref{lem:PvalueProperty}]
By Lemma \ref{lemma:Gdefn}, we know that: (i) $\mathbf{G}$ is a finite group of transformations of $\mathcal{X}$ onto itself, and (ii) if Assumption \ref{ass:M} and $H_{0}$ in \eqref{eq:HT} hold, then $X$ and $gX$ have the same sufficient statistics in \eqref{eq:SuffStatistics} for any $g \in \mathbf{G}$. 
The second statement, together with Lemma \ref{thm:Lik2}, implies that the randomization hypothesis holds (see \citealt[Definition 15.2.1]{lehmann/romano:2005}), i.e., if Assumption \ref{ass:M} and $H_{0}$ hold, its distribution is invariant under the transformations in $\mathbf{G}$. 
Under these conditions, the result follows from \citet[Eq. (15.6) and Problem 15.2]{lehmann/romano:2005}.
\end{proof}

\begin{proof}[Proof of Lemma \ref{lem:MCMCconv}] Let $(G^{(1)},\ldots ,G^{(K)})$ be as in Definition \ref{def:couplingG}. Note that this definition implies that $X$ and $(G^{(1)},\ldots ,G^{(K)})$ are independent. We condition on $X \in \mathcal{X}$ throughout this proof.  By Lemma \ref{lem:equidist}, it suffices to show that
\begin{equation} 
\sup_{t\in\mathbb{R}}\left\vert \frac{1}{K}\sum_{k=1}^{K}1\{ \tau(G^{(k)}X)\geq t\} -\frac{1}{|\mathbf{G}|} \sum_{g\in\mathbf{G}}1\{\tau(gX)\geq t\}\right\vert \overset{a.s.}{\to }0~~\text{ as }K\to\infty. \label{eq:ASconv}
\end{equation}

For any $k=1,\ldots ,K$, Definition \ref{def:couplingG} implies that $G^{(k)}X \in\mathcal{X}$. Thus, $\tau(G^{(k)}X)$ takes values in the finite set $\{\tau(\tilde{X}):\tilde{X}\in\mathcal{X}\}$. It then suffices to show the pointwise version of \eqref{eq:ASconv}, i.e.,
$$
\frac{1}{K}\sum_{k=1}^{K}1\{\tau(G^{(k)}X)\geq t\}~~\overset{a.s.}{\to } ~~\frac{1}{|\mathbf{G}|}\sum_{g\in\mathbf{G}}1\{\tau(gX)\geq t\}~~\text{ as } K\to\infty .
$$

By Definition \ref{def:couplingG}, $(G^{(1)},\ldots,G^{(K)})$ is the result of a Markov chain with transition probability given in \eqref{eq:MarkovChain_2}. By \citet[Algorithm A-24 and pages 270-1]{robert/casella:2004}, we can equivalently interpret $(G^{(1)},\ldots,G^{(K)})$ as the outcome of a Metropolis-Hastings algorithm. 
For any $g,\breve{g}\in\mathbf{G}$, this Metropolis-Hastings algorithm has a conditional density $q(\breve{g}\mid g)\equiv P(G^{(k+1)}=\breve{g}|G^{(k)}=g)$, a target probability $p$ defined by
\begin{equation}
p(g)~\equiv~ \frac{1\{g\in\mathbf{G}\}}{|\mathbf{G}|}, \label{eq:MCMCconv_1}
\end{equation}
and Metropolis-Hastings acceptance probability equal to one. To show the latter, note that, for every $g,\breve{g}\in\mathbf{G}$,
$$
\rho(g,\breve{g})~=~\min \left\{ ~\frac{p(\breve{g})}{p(g)}\times \frac{q(g\mid \breve{g})}{q(\breve{g}\mid g)},~1~\right\} ~\overset{(1)}{=}~1,
$$
where (1) uses that $p(\breve{g})=p(g)=1/|\mathbf{G}|$ and $q(g\mid \breve{g})=q(\breve{g}\mid g)$ by \eqref{eq:MCMCconv_1} and Lemma \ref{lem:q_symmetric}, respectively. 
By this and \citet[Theorem 7.4]{robert/casella:2004}, it suffices to show that the conditional density $q(\breve{g}\mid g)$ is $p$-irreducible. By \citet[Theorem 6.15, part 
(i)]{robert/casella:2004}, this follows from showing that, for any $g,\breve{g}\in\mathbf{G}$ (and so $p(g)>0$ and $p(\breve{g})>0$), the Markov chain has a positive probability of transitioning from $g$ to $\breve{g}$ after a sufficient number of steps. We devote the rest of the proof to show this fact.

Consider any arbitrary choice of $g,\breve{g}\in\mathbf{G}$. Since $\mathbf{G}$ is the transformation group generated by finitely many compositions of elements in $\bigcup_{I\in\mathcal{I} }\mathbf{G}(I)$, there are $(g^{(1)},\ldots,g^{(K_{1}+K_{2})})\in\mathbf{G}(I^{(1)})\times\cdots\times\mathbf{G}(I^{(K_{1}+K_{2})})$ with $I^{(j)}\in\mathcal{I}$ for $j=1,\dots,K$ such that $ g=g^{(K_{1})}\circ \cdots \circ g^{(1)}$ and $\breve{g}=g^{(K_{1}+K_{2})}\circ \cdots \circ g^{(K_{1}+1)}$. By Lemma \ref{lem:GI_is_group}, $\mathbf{G}(I^{(j)})$ is a transformation group for all $j=1,\ldots ,K_{1}+K_{2}$, and so $(g^{(j)})^{-1}\in\mathbf{G}(I^{(j)})$ for every $j=1,\ldots ,K_{1}+K_{2}$. Then, note that
\begin{align}
\breve{g}
& ~=~
\breve{g}\circ g^{-1}\circ g\nonumber\\ 
& ~\overset{(1)}{=}~
g^{(K_{1}+K_{2})}\circ \cdots \circ g^{(K_{1}+1)}\circ(g^{(1)})^{-1}\circ \cdots \circ(g^{(K_{1})})^{-1}\circ g\nonumber\\
& ~\overset{(2)}{=}~
\breve{g}^{(K_{1}+K_{2})}\circ \cdots \circ \breve{g}^{(K_{1}+1)}\circ \breve{g}^{(K_{1})}\circ \cdots \circ \breve{g}^{(1)}\circ g,
\label{eq:path}
\end{align}
where (1) holds by setting $\breve{g}=g^{(K_{1}+K_{2})}\circ \cdots \circ g^{(K_{1}+1)} $ and $g^{-1}=(g^{(1)})^{-1}\circ \cdots \circ(g^{(K_{1})})^{-1}$, and (2) by defining $(\breve{g}^{(1)},\ldots,\breve{g}^{(K_1+K_2)})=((g^{(K_{1})})^{-1},\ldots,(g^{(1)})^{-1},g^{(K_{1}+1)},\ldots,g^{(K_{1}+K_2)})$.
Note that \eqref{eq:path} provides a specific path for transitioning from $g$ to $ \breve{g}$ after $K_{1}+K_{2}$ steps. We complete the proof by showing that $P(G^{(K_{1}+K_{2}+k)}=\breve{g}|G^{(k)}=g)>0$ for any positive integer $k$. 
To this end, we define $(\breve{I}^{(1)},\ldots,\breve{I}^{(K_1+K_2)})=(I^{(K_{1})},\ldots,I^{(1)},I^{(K_{1}+1)},\ldots,I^{(K_{1}+K_2)})$, and consider the following argument:
\begin{align}
P(G^{(K_{1}+K_{2}+k)}=\breve{g}|G^{(k)}=g) ~&\overset{(1)}{\geq }~
q(\breve{ g}^{(1)}\circ g|g)\prod\limits_{k=2}^{K}q(\breve{g}^{(k)}\circ \cdots \circ \breve{g}^{(1)}\circ g\mid \breve{g}^{(k-1)}\circ \cdots \circ \breve{g} ^{(1)}\circ g)\nonumber\\
 ~&\overset{(2)}{\geq }~
\prod_{j=1}^{K_{1}+K_{2}}\frac{1}{|\mathcal{I}| |\mathbf{G}(\breve{I}^{(j)})|}
~\overset{(3)}{>}
~0,\nonumber
\end{align}
where (1) holds by the fact that the conditional distribution of $G^{(j+1)}$ given $G^{(j)}$ is $q$ for all $j=1,\dots ,K_{1}+K_{2}$, (2) by \eqref{eq:MarkovChain_2} and $\breve{g}^{(j)}\in\mathbf{G}(\breve{I}^{(j)})$ for all $j=1,\dots ,K_{1}+K_{2}$, and (3) by $\breve{I}^{(j)}\in\mathcal{I}$ for all $j=1,\dots ,K_{1}+K_{2}$.
\end{proof}

\subsection{Auxiliary results}\label{sec:app-lemmas}

\begin{lemma} \label{lem:Lik1}
Under Assumptions \ref{ass:M} and $H_{0}$ in \eqref{eq:HT}, \eqref{eq:lik1} holds.
\end{lemma}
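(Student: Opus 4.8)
The plan is to establish the factorization one market at a time and then multiply, peeling off the three parts of Assumption~\ref{ass:M} in sequence. Write $W_{i,t}\equiv(S_{i,t},A_{i,t})$ for the period-$t$ state--action pair. By Assumption~\ref{ass:M}(a) the markets are mutually independent, so
\[
P(X=\tilde X)=\prod_{i=1}^{n}P\bigl((W_{i,1},\ldots,W_{i,T})=(\tilde W_{i,1},\ldots,\tilde W_{i,T})\bigr),
\]
and it suffices to evaluate the $i$-th factor. Fixing $i$ and restricting without loss of generality to realizations with $P(X=\tilde X)>0$ (so that every prefix event has positive probability; both sides of \eqref{eq:lik1} vanish otherwise), the chain rule applied to the sequence $W_{i,1},\ldots,W_{i,T}$ gives
\[
P\bigl((W_{i,1},\ldots,W_{i,T})=(\tilde W_{i,1},\ldots,\tilde W_{i,T})\bigr)=P(W_{i,1}=\tilde W_{i,1})\prod_{t=2}^{T}P\bigl(W_{i,t}=\tilde W_{i,t}\mid W_{i,1}=\tilde W_{i,1},\ldots,W_{i,t-1}=\tilde W_{i,t-1}\bigr).
\]

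Next I would simplify the generic conditional factor. Assumption~\ref{ass:M}(b) is exactly the statement that $W_{i,t}$ is conditionally independent of $(W_{i,1},\ldots,W_{i,t-2})$ given $W_{i,t-1}$, so for $t=3,\ldots,T$ (and trivially for $t=2$) the conditional above collapses to $P(W_{i,t}=\tilde W_{i,t}\mid W_{i,t-1}=\tilde W_{i,t-1})$. I would then split this one-step transition into its action and state components and use Assumption~\ref{ass:M}(c) to drop $W_{i,t-1}$ from the conditioning set of $A_{i,t}$:
\[
P(W_{i,t}\mid W_{i,t-1})=P(A_{i,t}\mid S_{i,t},W_{i,t-1})\,P(S_{i,t}\mid W_{i,t-1})=P(A_{i,t}\mid S_{i,t})\,P(S_{i,t}\mid W_{i,t-1}),
\]
where I suppress ``$=\tilde W$'' for readability; evaluated at the realization, the right-hand side is $\sigma_{i,t}(\tilde A_{i,t}\mid\tilde S_{i,t})\,g_{i,t}(\tilde S_{i,t}\mid\tilde S_{i,t-1},\tilde A_{i,t-1})$ by the definitions of $\sigma_{i,t}$ and $g_{i,t}$. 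For $t=1$ there is nothing to reduce: $P(W_{i,1}=\tilde W_{i,1})=P(A_{i,1}=\tilde A_{i,1}\mid S_{i,1}=\tilde S_{i,1})\,P(S_{i,1}=\tilde S_{i,1})=\sigma_{i,1}(\tilde A_{i,1}\mid\tilde S_{i,1})\,m_i(\tilde S_{i,1})$.

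Assembling these pieces and reindexing the transition product via $t\mapsto t+1$ yields
\[
P\bigl((W_{i,1},\ldots,W_{i,T})=(\tilde W_{i,1},\ldots,\tilde W_{i,T})\bigr)=m_i(\tilde S_{i,1})\Bigl(\prod_{t=1}^{T}\sigma_{i,t}(\tilde A_{i,t}\mid\tilde S_{i,t})\Bigr)\Bigl(\prod_{t=1}^{T-1}g_{i,t+1}(\tilde S_{i,t+1}\mid\tilde S_{i,t},\tilde A_{i,t})\Bigr).
\]
Finally, under $H_0$ in \eqref{eq:HT} we may drop the subscripts, replacing $\sigma_{i,t}$ by $\sigma$ and $g_{i,t+1}$ by $g$, and multiplying over $i=1,\ldots,n$ then gives precisely \eqref{eq:lik1}. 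I do not expect a genuine obstacle here---the result is essentially bookkeeping---so the only points requiring care are (i) the order of operations, namely using Assumption~\ref{ass:M}(b) first to reduce the conditioning to $W_{i,t-1}$ and only afterward Assumption~\ref{ass:M}(c) to remove $W_{i,t-1}$ from the conditioning of $A_{i,t}$ (applying (c) directly to the full chain-rule conditional would not be justified), and (ii) the null-event conventions mentioned above, which are harmless because $\mathcal S$ and $\mathcal A$ are finite.
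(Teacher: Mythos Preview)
Your proof is correct and follows essentially the same route as the paper: factor across markets by Assumption~\ref{ass:M}(a), apply the Markov property from Assumption~\ref{ass:M}(b) to reduce each chain-rule factor to a one-step transition, split that transition via Assumption~\ref{ass:M}(c), and finally invoke $H_0$ to drop the $(i,t)$ subscripts. Your version is slightly more explicit about the chain rule and null-event conventions, but the structure and sequence of assumptions match the paper's proof exactly.
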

\begin{proof}
Consider the following derivation:
\begin{align*}
P(X=\tilde{X})
& ~\overset{(1)}{=}~
\prod_{i=1}^{n}P((S_{i,t},A_{i,t})=(\tilde{S}_{i,t},\tilde{A}_{i,t}):t=1,\ldots ,T)\\
& ~\overset{(2)}{=}~\prod_{i=1}^{n}
\left[\begin{array}{c}
P(S_{i,1}=\tilde{S}_{i,1},A_{i,1}=\tilde{A}_{i,1})\times\\
\prod_{t=2}^{T}P((S_{i,t},A_{i,t})=(\tilde{S}_{i,t},\tilde{A}_{i,t})|(S_{i,t-1},A_{i,t-1})=(\tilde{S}_{i,t-1},\tilde{A}_{i,t-1}))
\end{array}\right]\\
& ~\overset{(3)}{=}~
\prod_{i=1}^{n}\left[ 
\begin{array}{c}
     P(S_{i,1}=\tilde{S}_{i,1})\left(\prod_{t=1}^{T}P(A_{i,t}=\tilde{A}_{i,t}|S_{i,t}=\tilde{S}_{i,t})\right)\\
     \times \left(\prod_{{t} =1}^{T-1}P(S_{i,t+1}=\tilde{S}_{i,t+1}|S_{i,t}=\tilde{S}_{i,t},A_{i,t}=\tilde{A}_{i,t})\right) 
\end{array}
\right] \\
& ~\overset{(4)}{=}~
\prod_{i=1}^{n}\Big[ m_{i}(\tilde{S}_{i,1})\Big(\prod_{t=1}^{T}\sigma(\tilde{A}_{i,t}|\tilde{S}_{i,t})\Big)\Big(\prod_{{t} =1}^{T-1}f(\tilde{S}_{i,{t}+1}|\tilde{S}_{i,{t}},\tilde{A}_{i,{t}})\Big)\Big] ,
\end{align*}
where (1) holds by Assumption \ref{ass:M}(a), (2) by Assumption \ref{ass:M}(b), (3) by Assumption \ref{ass:M}(c), and (4) by $ H_{0}$ in \eqref{eq:HT}.
\end{proof}

\begin{lemma} \label{lem:S_markov}
Under Assumptions \ref{ass:M}(b)-(c), the state variable is Markovian, i.e., for every $i=1,\ldots ,n$ and $t=2,\ldots T$ and every $\tilde{S}\in\mathcal{S}^{nT}$,
\begin{equation}
P(S_{i,t}=\tilde{S}_{i,t}|S_{i,t-1}=\tilde{S}_{i,t-1})=P(S_{i,t}=\tilde{S}_{i,t}|(S_{i,1},\ldots,S_{i,t-1})=(\tilde{S}_{i,1},\ldots,\tilde{S}_{i,t-1})).
\label{eq:S_markov_0}
\end{equation}
\end{lemma}
\begin{proof}
Fix $i=1,\ldots ,n$, $t=2,\ldots T$, and $\tilde{S}\in\mathcal{S}^{nT}$ arbitrarily. Consider the following argument.
\begin{align*}
& P((S_{i,1},\ldots,S_{i,t})=(\tilde{S}_{i,1},\ldots,\tilde{S}_{i,t}))\\
&= \sum_{(a_1,\ldots ,a_{t-1})\in\mathcal{A}^{t-1}}
\left(
\begin{array}{c}P((S_{i,1},A_{i,1},\ldots,S_{i,t-1},A_{i,t-1})=(\tilde{S}_{i,1},a_{1},\ldots,\tilde{S}_{i,t-1},a_{t-1})) \times\\ P(S_{i,t}=\tilde{S}_{i,t}|(S_{i,1},A_{i,1},\ldots,S_{i,t-1},A_{i,t-1})=(\tilde{S}_{i,1},a_{1},\ldots,\tilde{S}_{i,t-1},a_{t-1}))\end{array}\right)\\
& \overset{(1)}{=}
\sum_{(a_1,\ldots ,a_{t-1})\in\mathcal{A}^{t-1}}
\left(
\begin{array}{c}P((S_{i,1},A_{i,1},\ldots,S_{i,t-1},A_{i,t-1})=(\tilde{S}_{i,1},a_{1},\ldots,\tilde{S}_{i,t-1},a_{t-1}))\\ \times P(S_{i,t}=\tilde{S}_{i,t}|(S_{i,t-1},A_{i,t-1})=(\tilde{S}_{i,t-1},a_{t-1}))\end{array}\right)\\
&=
\sum_{a_{t-1}\in\mathcal{A}}\left(
\begin{array}{c}P((S_{i,1},\ldots,S_{i,t-1})=(\tilde{S}_{i,1},\ldots,\tilde{S}_{i,t-1}),A_{i,t-1}=a_{t-1})\\ \times P(S_{i,t}=\tilde{S}_{i,t}|(S_{i,t-1},A_{i,t-1})=(\tilde{S}_{i,t-1},a_{t-1}))\end{array}\right)\\
&=
\sum_{a_{t-1}\in\mathcal{A}}\left(
\begin{array}{c}
P(A_{i,t-1}=a_{t-1}|(S_{i,1},\ldots,S_{i,t-1})=(\tilde{S}_{i,1},\ldots,\tilde{S}_{i,t-1})) \\
\times P((S_{i,1},\ldots,S_{i,t-1})=(\tilde{S}_{i,1},\ldots,\tilde{S}_{i,t-1})) \\
\times P(S_{i,t}=\tilde{S}_{i,t}|(S_{i,t-1},A_{i,t-1})=(\tilde{S}_{i,t-1},a_{t-1}))
\end{array}\right)\\
& \overset{(2)}{=}
\sum_{a_{t-1}\in\mathcal{A}}\left(
\begin{array}{c}
P(A_{i,t-1}=a_{t-1}|S_{i,t-1}=\tilde{S}_{i,1})
P((S_{i,1},\ldots,S_{i,t-1})=(\tilde{S}_{i,1},\ldots,\tilde{S}_{i,t-1}))\\ \times P(S_{i,t}=\tilde{S}_{i,t}|(S_{i,t-1},A_{i,t-1})=(\tilde{S}_{i,t-1},a_{t-1}))
\end{array}\right)\\
&=
P(S_{i,t}=\tilde{S}_{i,t}|S_{i,t-1}=\tilde{S}_{i,t-1})P((S_{i,1},\ldots,S_{i,t-1})=(\tilde{S}_{i,1},\ldots,\tilde{S}_{i,t-1})),
\end{align*}
where (1) holds by Assumption \ref{ass:M}(b) and (2) by Assumption \ref{ass:M}(c). From here, \eqref{eq:S_markov_0} follows from dividing both sides by $P((S_{i,1},\ldots,S_{i,t-1})=(\tilde{S}_{i,1},\ldots,\tilde{S}_{i,t-1}))>0$.
\end{proof}

\begin{lemma} \label{lem:GI_is_group} 
For any $I\in\mathcal{I}$, $\mathbf{G}(I)$ is a transformation group.
\end{lemma}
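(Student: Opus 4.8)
The plan is to verify the four group axioms (i)--(iv) of \citet[Section A.1]{lehmann/romano:2005} directly from Definition \ref{def:GI}. The starting observation is that $\mathcal{X}=\mathcal{S}^{nT}\times\mathcal{A}^{nT}$ is finite, so any element of $\mathbf{G}(I)$ --- being by definition a transformation of $\mathcal{X}$ \emph{onto} itself --- is automatically injective, hence a bijection, and therefore has a well-defined inverse $g^{-1}:\mathcal{X}\to\mathcal{X}$. Associativity (condition (ii)) is immediate since function composition is always associative. It therefore remains to check that $\mathbf{G}(I)$ contains the identity transformation (condition (iii)), is closed under composition (condition (i)), and is closed under inversion (condition (iv)).

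All three reduce to a handful of elementary properties of the ``restriction relations'' implicit in Definitions \ref{def:RS} and \ref{def:RA}. First I would record that $R_{S}(I,\cdot)$ behaves like an equivalence relation on $\mathcal{S}^{nT}$: $\breve{S}\in R_{S}(I,\breve{S})$ (reflexivity, taking $\tilde{S}=\breve{S}$ in Definition \ref{def:RS}); $\tilde{S}\in R_{S}(I,\breve{S})$ implies $\breve{S}\in R_{S}(I,\tilde{S})$ (symmetry, since conditions (a)--(c) are equalities); and $\hat{S}\in R_{S}(I,\tilde{S})$ together with $\tilde{S}\in R_{S}(I,\breve{S})$ implies $\hat{S}\in R_{S}(I,\breve{S})$ (transitivity, chaining the equalities). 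Second, I would record the analogous facts for $R_{A}$: $\breve{A}\in R_{A}(\breve{S},(\breve{S},\breve{A}))$; the ``swap'' symmetry that $\tilde{A}\in R_{A}(\tilde{S},(\breve{S},\breve{A}))$ holds if and only if $\breve{A}\in R_{A}(\breve{S},(\tilde{S},\tilde{A}))$, which works because conditions (a)--(b) of Definition \ref{def:RA} are symmetric equalities between counts; and the transitivity-type property that $\tilde{A}\in R_{A}(\tilde{S},(\breve{S},\breve{A}))$ and $\hat{A}\in R_{A}(\hat{S},(\tilde{S},\tilde{A}))$ imply $\hat{A}\in R_{A}(\hat{S},(\breve{S},\breve{A}))$, again by chaining the count equalities.

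Given these, the axioms follow quickly. For (iii), the identity map lies in $\mathbf{G}(I)$ since $\breve{S}\in R_{S}(I,\breve{S})$ and $\breve{A}\in R_{A}(\breve{S},(\breve{S},\breve{A}))$. For (i), if $g_{1},g_{2}\in\mathbf{G}(I)$, write $(\tilde{S},\tilde{A})=g_{1}(\breve{S},\breve{A})$ and $(\hat{S},\hat{A})=g_{2}(\tilde{S},\tilde{A})$; then $\hat{S}\in R_{S}(I,\breve{S})$ by transitivity of $R_{S}$, and $\hat{A}\in R_{A}(\hat{S},(\breve{S},\breve{A}))$ by the transitivity-type property of $R_{A}$, so $g_{2}\circ g_{1}\in\mathbf{G}(I)$ (it is onto as a composition of onto maps). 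For (iv), if $g\in\mathbf{G}(I)$ and $(\tilde{S},\tilde{A})=g^{-1}(\breve{S},\breve{A})$, then $(\breve{S},\breve{A})=g(\tilde{S},\tilde{A})$ gives $\breve{S}\in R_{S}(I,\tilde{S})$ and $\breve{A}\in R_{A}(\breve{S},(\tilde{S},\tilde{A}))$; the symmetry of $R_{S}$ and the swap symmetry of $R_{A}$ then yield $\tilde{S}\in R_{S}(I,\breve{S})$ and $\tilde{A}\in R_{A}(\tilde{S},(\breve{S},\breve{A}))$, i.e.\ $g^{-1}\in\mathbf{G}(I)$.

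The only mildly delicate point --- and the one I would be most careful to state precisely --- is the swap symmetry of $R_{A}$ used in the inversion step: the two state arguments of $R_{A}$ play formally different roles in Definition \ref{def:RA}, so one must check that interchanging the roles of $(\tilde{S},\tilde{A})$ and $(\breve{S},\breve{A})$ is exactly what the inverse requires. This goes through precisely because conditions (a)--(b) of Definition \ref{def:RA} assert equalities of counts that are unchanged under swapping the two sides. Everything else is routine bookkeeping with indicator-function counts.
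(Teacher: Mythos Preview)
Your proposal is correct and follows essentially the same route as the paper: both verify the four group axioms of \citet[Section A.1]{lehmann/romano:2005} by (a) using finiteness of $\mathcal{X}$ and the pigeonhole principle to get bijectivity, and (b) exploiting that the defining conditions in Definitions \ref{def:RS} and \ref{def:RA} are equalities of counts, hence reflexive, symmetric, and transitive. The only cosmetic difference is that you make the equivalence-relation structure of $R_S$ and $R_A$ explicit, whereas the paper invokes it more tersely (``all the conditions \ldots\ satisfy the transitive property'' for closure and ``treat $(\tilde{S},\tilde{A})$ and $(\breve{S},\breve{A})$ symmetrically'' for inverses).
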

\begin{proof}
We fix $I\in \mathcal{I}$ arbitrarily. It suffices to verify conditions (i)-(iv) in \citet[Section A.1]{lehmann/romano:2005}. We can verify condition (ii) using the same argument as in \citet[page 693]{lehmann/romano:2005}, so we focus the rest of the proof on conditions (i), (iii), and (iv).

We begin with condition (i). First, for any arbitrary $g_{1},g_{2}\in \mathbf{G}(I)$, we now verify that $g_{2}\circ g_{1}\in \mathbf{G}(I)$. Since $g_{1},g_{2}\in \mathbf{G}(I)$, $g_{1}$ and $g_{2}$ are both onto transformations of $\mathcal{X}$ onto itself, then $g_{2}\circ g_{1}$ is an onto transformation of $\mathcal{X}$ onto itself. Now we will show that, for any $(\breve{S},\breve{A})\in\mathcal{X}$, the data configuration $(\tilde{S},\tilde{A})=(g_{2}\circ g_{1})(\breve{S},\breve{A})$ satisfies $\tilde{S} \in R_{S}(I,\breve{S})$ and $\tilde{A} \in R_{A}(\tilde{S},(\breve{S},\breve{A}))$. 
Since $g_{1},g_{2}\in \mathbf{G}(I)$ and the conditions in Definitions \ref{def:RS} and \ref{def:RA} are expressed as equalities, it is easy to see that $\tilde{S} \in R_{S}(I,\breve{S})$ and $\tilde{A} \in R_{A}(\tilde{S},(\breve{S},\breve{A}))$. By combining these results, we conclude that $g_{2}\circ g_{1}\in \mathbf{G}(I)$, as desired.

Condition (iii) says that the identity transformation belongs to $\mathbf{G}(I)$. To this end, note that the identity transformation is an onto transformation of $\mathcal{X}$ onto itself and that $\breve{S} \in R_{S}(I,\breve{S})$ and $\breve{A} \in R_{A}(\breve{S},(\breve{S},\breve{A}))$.

To verify condition (iv), we now show that for any $g\in \mathbf{G}(I)$, $ g^{-1}\in \mathbf{G}(I)$ holds. Fix $g\in \mathbf{G}(I)$ arbitarily. We first show that $ g^{-1}$ is well defined. Recall that $\mathbf{G}(I)$ is a collection of transformations that map a finite set $\mathcal{X}$ onto itself, and, thus, $g(\mathcal{X})=\mathcal{X}$. To show $ g^{-1}$ is well defined, it suffices to show that $g$ is not many-to-one. If $g$ were many-to-one, then there would be two distinct elements $\tilde{X}_1,\tilde{X}_2 \in \mathcal{X}$ such that $g(\tilde{X}_1)=g(\tilde{X}_2)$. In such case, $|\{\tilde{X}_1,\tilde{X}_2\}|=2>1=|g(\{\tilde{X}_1,\tilde{X}_2\})|$, and so $|\mathcal{X}|>|g(\mathcal{X})|$, contradicting that $g$ is onto.
Having shown that $ g^{-1}$ is well defined, we will  now show that $ g^{-1}\in \mathbf{G}(I)$. For this purpose, pick $\tilde{X}\in \mathcal{X}$ arbitrarily. It suffices to show that, for any $(\breve{S},\breve{A})\in\mathcal{X}$, the data configuration $(\tilde{S},\tilde{A})=g^{-1}(\breve{S},\breve{A})$ satisfies $\tilde{S} \in R_{S}(I,\breve{S})$ and $\tilde{A} \in R_{A}(\tilde{S},(\breve{S},\breve{A}))$.
Since $g\in \mathbf{G}(I)$ and $g(\tilde{S},\tilde{A})=g(g^{-1}(\breve{S},\breve{A}))=(\breve{S},\breve{A})$, we have 
$\breve{S} \in R_{S}(I,\tilde{S})\mbox{ and }\breve{A} \in R_{A}(\breve{S},(\tilde{S},\tilde{A}))$.
Note that all the conditions in Definitions \ref{def:RS} and \ref{def:RA} treat $(\tilde{S},\tilde{A})$ and $(\breve{S},\breve{A})$ symmetrically.
Therefore, we have $\tilde{S} \in R_{S}(I,\breve{S})$ and $\tilde{A} \in R_{A}(\tilde{S},(\breve{S},\breve{A}))$, as desired.
\end{proof}

\begin{lemma} \label{lem:GI_keeps_suff_stat} 
For any $I\in\mathcal{I}$ and any $g\in\mathbf{G}(I)$, 
$\breve{X}$ and $g\breve{X}$ have the same sufficient statistic in \eqref{eq:SuffStatistics}, i.e., $U(\breve{X})= U(g\breve{X})$.
\end{lemma}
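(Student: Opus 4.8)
The plan is to prove the lemma by a direct, coordinate-by-coordinate matching of the sufficient statistic $U$ in \eqref{eq:SuffStatistics} against the defining conditions of the sets $R_S$ and $R_A$. Fix $I\in\mathcal{I}$, a transformation $g\in\mathbf{G}(I)$, and an arbitrary $\breve{X}=(\breve{S},\breve{A})\in\mathcal{X}$, and write $(\tilde{S},\tilde{A})=g\breve{X}$. By Definition \ref{def:GI}, the image satisfies $\tilde{S}\in R_{S}(I,\breve{S})$ and $\tilde{A}\in R_{A}(\tilde{S},(\breve{S},\breve{A}))$; these two membership relations are the only inputs I will need.

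Recall that $U(X)$ has three blocks: the vector of initial states $(S_{i,1}:i=1,\dots,n)$; the pooled transition counts $\sum_{i=1}^{n}\sum_{t=1}^{T-1}1\{S_{i,t}=s,A_{i,t}=a,S_{i,t+1}=s'\}$ over $(s,a,s')\in\mathcal{S}\times\mathcal{A}\times\mathcal{S}$; and the pooled terminal counts $\sum_{i=1}^{n}1\{S_{i,T}=s,A_{i,T}=a\}$ over $(s,a)\in\mathcal{S}\times\mathcal{A}$. For the first block I would invoke Definition \ref{def:RS}(a), which gives $\tilde{S}_{i,1}=\breve{S}_{i,1}$ for every $i$. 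For the second block I would invoke Definition \ref{def:RA}(a), which states precisely that the pooled transition counts of $(\tilde{S},\tilde{A})$ and of $(\breve{S},\breve{A})$ coincide for every $(s,a,s')$. For the third block I would invoke Definition \ref{def:RA}(b), which is exactly the equality of pooled terminal counts for every $(s,a)$. Concatenating the three equalities yields $U(\breve{X})=U(g\breve{X})$, which is the claim.

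I expect no substantive obstacle here: Definitions \ref{def:RS} and \ref{def:RA} were evidently engineered so that their conditions reproduce the coordinates of $U$ essentially one-to-one, and the proof is a bookkeeping exercise. The only point requiring minor care is to confirm that every coordinate of $U$ is accounted for; in particular, conditions (b) and (c) of Definition \ref{def:RS}, which constrain the within-market and pooled-over-$I$ \emph{state} transition counts, are not used in this argument — they matter for preserving the full likelihood (cf.\ Lemma \ref{thm:Lik2}) and for the correctness of Step 2 of the MCMC algorithm, but sufficient-statistic invariance follows from Definition \ref{def:RS}(a) together with both parts of Definition \ref{def:RA} alone.
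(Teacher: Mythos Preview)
Your proof is correct and follows essentially the same approach as the paper's own proof: both unpack $g\breve{X}$ via Definition~\ref{def:GI} to obtain $\tilde{S}\in R_{S}(I,\breve{S})$ and $\tilde{A}\in R_{A}(\tilde{S},(\breve{S},\breve{A}))$, and then match the three blocks of $U$ in \eqref{eq:SuffStatistics} to Definition~\ref{def:RS}(a) and Definition~\ref{def:RA}(a)--(b), respectively. Your additional remark that conditions (b) and (c) of Definition~\ref{def:RS} play no role in this particular lemma is accurate and not stated explicitly in the paper's proof.
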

\begin{proof}
Let $\breve{X}=(\breve{S},\breve{A})$ and $\tilde{X}=(\tilde{S},\tilde{A})=g( \breve{S},\breve{A})$. By definition \ref{def:GI}, this implies that $\tilde{S}\in R_{S}( \breve{S}) $ and $\tilde{A}\in R_{A}( \tilde{S},(\breve{S},\breve{ A})) $. By \eqref{eq:SuffStatistics}, it then suffices to show the following statements:
\begin{enumerate}
\item $\breve{S}_{i,1}=\tilde{S}_{i,1}$ for all $i=1,\ldots ,n$,
\item $\sum_{i=1}^{n}\sum_{t=1}^{T-1}1\{\tilde{S}_{i,t}=s,\tilde{A}_{i,t}=a, \tilde{S}_{i,t+1}=s^{\prime }\}=\sum_{i=1}^{n}\sum_{t=1}^{T-1}1\{\breve{S} _{i,t}=s,\breve{A}_{i,t}=a,\breve{S}_{i,t+1}=s^{\prime }\}$ for all $ s,s^{\prime }\in \mathcal{S}$ and $a\in \mathcal{A}$,
\item $\sum_{i=1}^{n}1\{\tilde{S}_{i,T}=s,\tilde{A}_{i,T}=a\}= \sum_{i=1}^{n}1\{\breve{S}_{i,T}=s,\breve{A}_{i,T}=a\}$ for all $s\in \mathcal{S}$ and $a\in \mathcal{A}$.
\end{enumerate}
The first statement follows from $\tilde{S}\in R_{S}( \breve{S}) $ and condition (a) in Definition \ref{def:RS}. The second and third statements follow from $\tilde{A}\in R_{A}( \tilde{S},(\breve{S},\breve{A})) $ and conditions (a) and (b) in Definition \ref{def:RA}, respectively.
\end{proof}

Several upcoming results involve the Markov chain of transformations in $\mathbf{G}$, specified in Definition \ref{def:couplingG}.

\begin{definition}\label{def:couplingG}
Let $(G^{(1)},\ldots ,G^{(K)})$ denote a Markov chain of transformations of $\mathcal{X}$ onto itself that is defined as follows: 
\begin{itemize}
    \item $G^{(1)}:\mathcal{X} \to \mathcal{X}$ be equal to the identity transformation, i.e., $x =G^{(1)}x$ for any $x \in  \mathcal{X}$.
    \item For any $k=2,\ldots ,K$ and given $(G^{(1)},\ldots,G^{(k-1)},X)$, $G^{(k)}:\mathcal{X} \to \mathcal{X}$ is a random transformation distributed according to the following transition probability:
\begin{align}
P(G^{(k)}=\tilde{g}\mid G^{(1)},\ldots,G^{(k-1)},X) 
=P(G^{(k)}=\tilde{g}\mid G^{(k-1)})
=\sum_{I\in\mathcal{I}}\sum_{{g}\in\mathbf{G}(I)}\frac{1\{\tilde{g}={g}\circ(G^{(k-1)})\}}{|\mathcal{I}| \times |\mathbf{G}(I)|}. \label{eq:MarkovChain_2}
\end{align}
\end{itemize}
\end{definition}

\begin{lemma}\label{lem:equidist} 
Conditional on $X$, $(X^{(1)},\ldots ,X^{(K)})$ generated by our MCMC algorithm and $(G^{(1)}X,\ldots ,G^{(K)}X)$ with $(G^{(1)},\ldots ,G^{(K)})$ as in Definition \ref{def:couplingG} have the same distribution.
\end{lemma}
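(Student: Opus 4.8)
The plan is to exhibit both $(X^{(1)},\dots,X^{(K)})$ and $(G^{(1)}X,\dots,G^{(K)}X)$ (conditional on $X$) as Markov chains on $\mathcal{X}$ with the same starting point and the same one-step transition kernel; equality in law then follows. The starting points agree: $X^{(1)}=X$ by the initiation step of Algorithm \ref{alg:MCMC}, and $G^{(1)}X=X$ since $G^{(1)}$ is the identity. The first chain is Markov with transition \eqref{eq:dist_X}. For the second, note that for the random $g$ drawn in \eqref{eq:MarkovChain_2} we have $G^{(k)}X=(g\circ G^{(k-1)})X=g(G^{(k-1)}X)$, so $P(G^{(k)}X=\tilde X\mid G^{(k-1)})$ depends on $G^{(k-1)}$ only through $G^{(k-1)}X$; hence $(G^{(k)}X)_{k}$ is itself Markov, with kernel $\psi(\breve X,\tilde X)=\sum_{I\in\mathcal{I}}\frac{1}{|\mathcal{I}|}\,\frac{|\{g\in\mathbf{G}(I):g\breve X=\tilde X\}|}{|\mathbf{G}(I)|}$.

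It remains to check $\psi$ equals the kernel \eqref{eq:dist_X}. By \eqref{eq:dist_0}--\eqref{eq:dist_2} and Lemmas \ref{lem:EulerWorks} and \ref{lem:Step3works}, the latter is also $\sum_{I}\frac{1}{|\mathcal{I}|}$ times the conditional law of Steps 2--3 given $I^{(k)}=I$, so it suffices to prove, for each $I\in\mathcal{I}$ and each $\breve X=(\breve S,\breve A)$, that
\[
\frac{|\{g\in\mathbf{G}(I):g\breve X=\tilde X\}|}{|\mathbf{G}(I)|}
=\frac{1\{\tilde S\in R_{S}(I,\breve S)\}}{|R_{S}(I,\breve S)|}\cdot\frac{1\{\tilde A\in R_{A}(\tilde S,\breve X)\}}{|R_{A}(\tilde S,\breve X)|}
\qquad\text{for all }\tilde X=(\tilde S,\tilde A).
\]
Define $\breve X'\sim_{I}\tilde X'$ to mean $\tilde S'\in R_{S}(I,\breve S')$ and $\tilde A'\in R_{A}(\tilde S',\breve X')$. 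All conditions in Definitions \ref{def:RS} and \ref{def:RA} are equalities of initial states and of counts, hence symmetric in $\breve{\cdot},\tilde{\cdot}$ and transitive, so $\sim_{I}$ is an equivalence relation whose class through $\breve X$ is $\mathcal{R}(I,\breve X):=\{(\tilde S,\tilde A):\tilde S\in R_{S}(I,\breve S),\ \tilde A\in R_{A}(\tilde S,\breve X)\}$. By Definition \ref{def:GI}, $\mathbf{G}(I)$ is precisely the set of bijections of the finite set $\mathcal{X}$ that send every point into its own $\sim_{I}$-class, i.e.\ $\mathbf{G}(I)\cong\prod_{C}\mathrm{Sym}(C)$ over the $\sim_{I}$-classes $C$. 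Consequently $g\breve X$ for uniform $g\in\mathbf{G}(I)$ is uniform on $\mathcal{R}(I,\breve X)$, so the left-hand side above equals $1\{\tilde X\in\mathcal{R}(I,\breve X)\}/|\mathcal{R}(I,\breve X)|$.

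To finish, I would show $|R_{A}(\tilde S,\breve X)|$ is the same for every $\tilde S\in R_{S}(I,\breve S)$: summing conditions (b)--(c) of Definition \ref{def:RS} over all markets shows such a $\tilde S$ has the same pairwise transition counts as $\breve S$, and with $\tilde S_{i,1}=\breve S_{i,1}$ and Lemma \ref{lemma:RS_terminal} also the same terminal-state counts, so $|R_{A}(\tilde S,\breve X)|$ is a product of multinomial coefficients determined by these counts and by $\breve X$ alone. Writing $c$ for the common value, $|\mathcal{R}(I,\breve X)|=|R_{S}(I,\breve S)|\cdot c$, and the right-hand side of the display indeed reduces to $1\{\tilde X\in\mathcal{R}(I,\breve X)\}/|\mathcal{R}(I,\breve X)|$. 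Summing over $I$ then gives $\psi=$ \eqref{eq:dist_X}, and the two Markov chains coincide in law. The step I expect to require the most care is the characterization $\mathbf{G}(I)\cong\prod_{C}\mathrm{Sym}(C)$ — in particular the inclusion showing $\mathbf{G}(I)$ is rich enough to carry $\breve X$ onto every member of its class $\mathcal{R}(I,\breve X)$ (one must produce, for each target, a concrete class-preserving bijection of $\mathcal{X}$) — along with the combinatorial bookkeeping behind the constancy of $|R_{A}(\tilde S,\breve X)|$.
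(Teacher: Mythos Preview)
Your proposal is correct and follows the paper's overall strategy exactly: both sequences are exhibited as Markov chains on $\mathcal{X}$ with the common starting point $X$, and the task is reduced to matching the one-step kernels, which in turn reduces to the per-$I$ identity
\[
\frac{|\{g\in\mathbf{G}(I):g\breve X=\tilde X\}|}{|\mathbf{G}(I)|}
=\frac{1\{\tilde S\in R_{S}(I,\breve S),\ \tilde A\in R_{A}(\tilde S,\breve X)\}}{|R_{S}(I,\breve S)|\,|R_{A}(\tilde S,\breve X)|}.
\]
This is precisely the paper's equation \eqref{eq:equidist_3_new}.

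The difference is only in how that identity is verified. You observe that $\sim_I$ is an equivalence relation and then characterize $\mathbf{G}(I)\cong\prod_C\mathrm{Sym}(C)$ over the $\sim_I$-classes, which immediately yields that $g\breve X$ is uniform on $\mathcal{R}(I,\breve X)$; for the constancy of $|R_A(\tilde S,\breve X)|$ on $R_S(I,\breve S)$ you write it as a product of multinomial coefficients depending only on the aggregate transition and terminal counts of $\tilde S$, which are fixed by Definitions~\ref{def:RS}(b)--(c) and Lemma~\ref{lemma:RS_terminal}. The paper instead establishes uniformity via a fiber-counting argument (Lemma~\ref{lem:card1}: all fibers $\{g:g\breve X=\tilde X\}$ have the same size, by composing with a transformation carrying $\breve X$ to $\tilde X$) and then computes $|\mathbf{G}(I)|$ by summing over the class (equation \eqref{eq:equidist_5}); for the constancy of $|R_A|$ it builds an explicit bijection $R_A(\tilde S,\breve X)\to R_A(\breve S,\breve X)$ via the permutation of Lemma~\ref{lem:RA_by_transformation} (Lemma~\ref{lem:card2}). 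Your route is more conceptual and avoids the explicit permutation construction; the paper's route is more hands-on and makes each cardinality claim a separate lemma. Both are complete; the step you flagged as delicate (that $\mathbf{G}(I)$ is rich enough to act transitively on each class) is exactly what the paper isolates as Lemma~\ref{lem:card1}, and your $\prod_C\mathrm{Sym}(C)$ argument handles it cleanly once $\sim_I$ is seen to partition $\mathcal{X}$.
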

\begin{proof}
We condition on $X$ throughout this proof. First, note that our MCMC algorithm and Definition \ref{def:couplingG} imply that $X = X^{(1)} = G^{(1)}X $. Second, note that $(X^{(1)},\ldots ,X^{(K)})$ and $(G^{(1)}X,\ldots ,G^{(K)}X)$ are both Markov chains in $\mathcal{X}$. To complete the proof, it suffices to show that they have the same transition probabilities. As implied by equations \eqref{eq:dist_1} and \eqref{eq:dist_2}, the transition probability of $(X^{(1)},\ldots ,X^{(K)})$ is:
\begin{align}
&P(X^{(k)}=\tilde{X}\mid X^{(1)},\ldots,X^{(k-1)}) ~=~
\notag\\
&
\begin{cases}
\sum_{I\in\mathcal{I}}\dfrac{1\{\tilde{S}\in R_{S}(I,S^{(k-1)}),\tilde{A}\in R_{A}(\tilde{S},X^{(k-1)})\}}{|\mathcal{I}| \times   |R_{S}(I,S^{(k-1)})|\times |R_{A}(\tilde{S}, X^{(k-1)})|} 
&\text{if}~|R_{S}(I,S^{(k-1)})| \times  |R_{A}(\tilde{S}, X^{(k-1)})|>0,\\
0&\text{otherwise.}
\end{cases}\label{eq:dist_X}
\end{align}
It then suffices to show that, for any $k=2,\dots,K$, $\tilde{X}= (\tilde{S},\tilde{A}) \in \mathcal{X}$, and $G^{(k-1)}X = \breve{X}=(\breve{S}, \breve{A})  \in \mathcal{X}$, 
\begin{align}
&P(G^{(k)}X=\tilde{X}\mid G^{(1)}X,\ldots,G^{(k-2)}X,G^{(k-1)}X=\breve{X},X)\notag\\
&~=~ P(G^{(k)}X=\tilde{X}\mid G^{(k-1)}X=\breve{X},X) \notag\\
&~=~
\begin{cases}
\sum_{I\in\mathcal{I}}\dfrac{1\{\tilde{S}\in R_{S}(I,\breve{S}),\tilde{A}\in R_{A}(\tilde{S},\breve{X})\}}{|\mathcal{I}| \times   |R_{S}(I,\breve{S})|\times |R_{A}(\tilde{S}, \breve{X})|} 
&\mbox{ if }|R_{S}(I,\breve{S})| \times  |R_{A}(\tilde{S}, \breve{X})|>0,\\
0&\mbox{ otherwise.}
\end{cases}\label{eq:equidist_1}
\end{align}

For the rest of the proof, we fix $k=2,\ldots ,K$, and $\tilde{X}=(\tilde{S},\tilde{A}),\breve{X}=(\breve{S}, \breve{A})\in\mathcal{X}$ arbitrarily. To show \eqref{eq:equidist_1}, consider the following derivation:
\begin{align}
& P(G^{(k)}X=\tilde{X}\mid G^{(1)}X,\ldots,G^{(k-2)}X,G^{(k-1)}X=\breve{X},X)\nonumber\\
& ~\overset{(1)}{=}~
E[P(G^{(k)}X=\tilde{X}\mid G^{(1)},\ldots,G^{(k-1)},X)\mid G^{(1)}X,\ldots,G^{(k-2)}X,G^{(k-1)}X=\breve{X},X]\nonumber\\
& ~\overset{(2)}{=}~E[P(G^{(k)}X=\tilde{X}\mid G^{(k-1)},X)\mid G^{(1)}X,\ldots,G^{(k-2)}X,G^{(k-1)}X=\breve{X},X], \label{eq:equidist_2}
\end{align}
where (1) holds by the law of total probability and (2) by \eqref{eq:MarkovChain_2}. 
From \eqref{eq:equidist_2}, \eqref{eq:equidist_1} follows if we show that, for $G^{(k-1)}X=\breve{X}$,
\begin{align}
P(G^{(k)}X=\tilde{X}\mid G^{(k-1)},X)=\begin{cases}
\sum_{I\in\mathcal{I}} \dfrac{1\{\tilde{S}\in R_{S}(I,\breve{S}),\tilde{A}\in R_{A}(\tilde{S}, \breve{X})\}}{|\mathcal{I}| \times |R_{S}(I,\breve{S})|\times |R_{A}(\tilde{S},\breve{X})|}&\mbox{ if }|R_{S}(I,\breve{S})|\times |R_{A}(\tilde{S},\breve{X})|>0,\\ 
0&\mbox{ otherwise.}
\end{cases}  \label{eq:equidist_3}
\end{align}
To show \eqref{eq:equidist_3}, consider the following derivation:
\begin{align}
P(G^{(k)}X=\tilde{X}\mid G^{(k-1)},X)
& ~\overset{(1)}{=}~
P(G^{(k)}(G^{(k-1)})^{-1}\breve{X}=\tilde{X}\mid G^{(k-1)},X) \nonumber \\
& ~=~\sum_{g\in \mathbf{G}}P(G^{(k)}=g\mid G^{(k-1)},X)1\{g(G^{(k-1)})^{-1} \breve{X}=\tilde{X}\} \nonumber \\
& ~\overset{(2)}{=}~\sum_{g\in \mathbf{G}}\frac{\sum_{I\in \mathcal{I}}\frac{1 }{|\mathcal{I}|}\sum_{\tilde{g}\in \mathbf{G}(I)}1\{g=\tilde{g}\circ (G^{(k-1)})^{-1}\}}{|\mathbf{G}(I)|}1\{g(G^{(k-1)})^{-1}\breve{X}=\tilde{X}\} \nonumber \\
& ~=~\frac{1}{|\mathcal{I}|}\sum_{I\in \mathcal{I}}\frac{\sum_{\tilde{g}\in \mathbf{G}(I)}\sum_{g\in \mathbf{G}}1\{g=\tilde{g}\circ (G^{(k-1)})\}1\{g(G^{(k-1)})^{-1}\breve{X}=\tilde{X}\}}{|\mathbf{G}(I)|} \nonumber \\
& \overset{(3)}{=}\frac{1}{|\mathcal{I}|}\sum_{I\in \mathcal{I}}\frac{\sum_{ \tilde{g}\in \mathbf{G}(I)}\sum_{g\in \mathbf{G}}1\{g=\tilde{g}\circ (G^{(k-1)})\}1\{\tilde{g}\breve{X}=\tilde{X}\}}{|\mathbf{G}(I)|} \nonumber \\
& ~=~\frac{1}{|\mathcal{I}|}\sum_{I\in \mathcal{I}}\frac{\sum_{\tilde{g}\in \mathbf{G}(I)}1\{\tilde{g}\breve{X}=\tilde{X}\}\sum_{g\in \mathbf{G}}1\{g= \tilde{g}\circ (G^{(k-1)})\}}{|\mathbf{G}(I)|} \nonumber \\
&~ \overset{(4)}{=}~\frac{1}{|\mathcal{I}|}\sum_{I\in \mathcal{I}}\frac{\sum_{ \tilde{g}\in \mathbf{G}(I)}1\{\tilde{g}\breve{X}=\tilde{X}\}}{|\mathbf{G}(I)| }, \label{eq:equidist_4}
\end{align}
where (1) holds by $G^{(k-1)}X=\breve{X}$ and the fact that $ (G^{(k-1)})^{-1}\in \mathbf{G}$ since $\mathbf{G}$ is a transformation group (by Lemma \ref{lemma:Gdefn}), (2) by \eqref{eq:MarkovChain_2}, (3)  by the fact that $\{g=\tilde{g}\circ (G^{(k-1)})\}$ occurs if and only if $ \{g(G^{(k-1)})^{-1}=\tilde{g}\}$, and (4) by the fact that $\sum_{g\in \mathbf{G }}1\{g=\tilde{g}\circ (G^{(k-1)})\}=1$, which is shown in the next paragraph. 

We now show that $\sum_{g\in \mathbf{G}}1\{g=\tilde{g}\circ (G^{(k-1)})\}=1$. Since $g,\tilde{g},G^{(k-1)}\in \mathbf{G}$, and $\mathbf{G}$ is a transformation group, $\tilde{g}\circ (G^{(k-1)})\in \mathbf{G}$, and so  $\exists g\in \mathbf{G}$ s.t.\ $g=\tilde{g}\circ (G^{(k-1)})$, i.e., $\sum_{g\in \mathbf{G}}1\{g=\tilde{g}\circ (G^{(k-1)})\}\geq 1$. Now, suppose that $ \sum_{g\in \mathbf{G}}1\{g=\tilde{g}\circ (G^{(k-1)})\}>1$. This implies that $\exists g_{1},g_{2}\in \mathbf{G}$ with $g_{1}\not=g_{2}$ s.t.\ $g_{1}= \tilde{g}\circ (G^{(k-1)})=g_{2}$. But using again that $\mathbf{G}$ is a transformation group, $\exists g_{1}^{-1}\in \mathbf{G}$ and so $ g_{1}^{-1}g_{2}=g_{1}^{-1}g_{1}$ and $g_{2}g_{1}^{-1}=g_{1}g_{1}^{-1}$ and both equal to the identity transformation. This would imply that $ g_{1}^{-1}=g_{2}^{-1}$, and since the inverse transformation is unique, we reach a contradiction.

Fix $I\in \mathcal{I}$ arbitrarily. By \eqref{eq:equidist_4}, \eqref{eq:equidist_3} then follows from showing that
\begin{equation}
\sum_{\tilde{g}\in \mathbf{G}(I)}\frac{1\{\tilde{g}\breve{X}=\tilde{X}\}}{| \mathbf{G}(I)|}~=~
\begin{cases} 
\dfrac{1\{\tilde{S}\in R_{S}(I,\breve{S}),\tilde{A}\in R_{A}(\tilde{S}, \breve{X})\}}{|R_{S}(I,\breve{S})|\times |R_{A}(\tilde{S},\breve{X})|}&\mbox{ if }|R_{S}(I,\breve{S})|\times|R_{A}(\tilde{S},\breve{X})|>0,\\
0&\mbox{ otherwise.}
\end{cases} \label{eq:equidist_3_new}
\end{equation}
We divide our argument into two cases. First, consider $|R_{S}(I,\breve{S} )|\times |R_{A}(\tilde{S},\breve{X})|=0$. In this case, we have ${\not{\exists}}g\in \mathbf{G}(I)$ s.t.\ $g\breve{X}=\tilde{X}$, and therefore
\[
\sum_{g\in \mathbf{G}(I)}\frac{1\{g\breve{X}=\tilde{X}\}}{|\mathbf{G}(I)|}~=~0,
\]
which verifies \eqref{eq:equidist_3_new}. 

Second, consider $|R_{S}(I,\breve{S} )|\times |R_{A}(\tilde{S},\breve{X})|>0$. Then, consider the following derivation:
\begin{align}
\frac{\sum_{g\in \mathbf{G}(I)}1\{g\breve{X}=\tilde{X}\}}{|\mathbf{G}(I)|}& ~\overset{(1)}{=}~\frac{\sum_{{g}\in \mathbf{G}(I)}1\{g\breve{X}=\tilde{X}\}}{| \mathbf{G}(I)|}1\{\tilde{S}\in R_{S}(I,\breve{S}),\tilde{A}\in R_{A}(\tilde{S },\breve{X})\} \nonumber \\
& ~\overset{(2)}{=}~\frac{\sum_{g\in \mathbf{G}(I)}1\{g\breve{X}=\breve{X}\}}{| \mathbf{G}(I)|}1\{\tilde{S}\in R_{S}(I,\breve{S}),\tilde{A}\in R_{A}(\tilde{S },\breve{X})\} \nonumber \\
& ~\overset{(3)}{=}~\frac{1\{\tilde{S}\in R_{S}(I,\breve{S}),\tilde{A}\in R_{A}(\tilde{S},\breve{X})\}}{|R_{S}(I,\breve{S})|\times|R_{A}(\breve{S},\breve{X} )|} \nonumber \\
& ~\overset{(4)}{=}~\frac{1\{\tilde{S}\in R_{S}(I,\breve{S}),\tilde{A}\in R_{A}(\tilde{S},\breve{X})\}}{|R_{S}(I,\breve{S})|\times|R_{A}(\tilde{S},\breve{X} )|}, \label{eq:equidist_6}
\end{align}
which verifies \eqref{eq:equidist_3_new}, where (1) holds by Definition \ref{def:GI}, as it implies that $\{\tilde{g}\breve{X}=\tilde{X}\}$ with $ \tilde{g}\in \mathbf{G}(I)$, $\breve{X}=(\breve{S},\breve{A})$, and $\tilde{ X}=(\tilde{S},\tilde{A})$ implies that $\{\tilde{S}\in R_{S}(I,\breve{S})\}$ and $\{\tilde{A}\in R_{A}(\tilde{S},\breve{X})\}$, (2) by Lemma \ref{lem:card1}, (3) by \eqref{eq:equidist_5}, and (4) by Lemma \ref{lem:card2} (which applies because the expression is multiplied by $1\{\tilde{S}\in R_{S}(I,\breve{S})\}$). 

To show (3) in \eqref{eq:equidist_6}, consider the following argument.
\begin{align}
|\mathbf{G}(I)|& \overset{(1)}{=}\sum_{\tilde{X}\in \mathcal{X}}\left( \sum_{g\in \mathbf{G}(I)}1\{g\breve{X}=\tilde{X}\}\right) \nonumber \\ & \overset{(2)}{=}\sum_{\tilde{S}\in R_{S}(I,\breve{S})}\sum_{\tilde{A}\in R_{A}(\tilde{S},\breve{X})}\left( \sum_{g\in \mathbf{G}(I)}1\{g\breve{X}= \tilde{X}\}\right) \nonumber \\
& \overset{(3)}{=}
\left( \sum_{g\in \mathbf{G}(I)}1\{g\breve{X}=\breve{X}\}\right) \sum_{ \tilde{S}\in R_{S}(I,\breve{S})}|R_{A}(\tilde{S},\breve{X})| \nonumber \\
& \overset{(4)}{=}
\left( \sum_{g\in \mathbf{G}(I)}1\{g\breve{X}=\breve{X}\}\right)~ |R_{A}( \breve{S},\breve{X})|~|R_{S}(I,\breve{S})|, \label{eq:equidist_5}
\end{align}
where (1) holds by partitioning $\mathbf{G}(I)$ into its possible range of outcomes when applied to $\breve{X}\in \mathcal{X}$, (2) by Definition \ref{def:GI}, as it implies that $\{\tilde{g}\breve{X}=\tilde{X}\}$ with $\tilde{g}\in \mathbf{G}(I)$, $ \breve{X}=(\breve{S},\breve{A})$, and $\tilde{X}=(\tilde{S},\tilde{A})$ if and only if $\{\tilde{S}\in R_{S}(I,\breve{S})\}$ and $\{\tilde{A}\in R_{A}( \tilde{S},\breve{X})\}$, (3) by Lemma \ref{lem:card1}, and (4) by Lemma \ref{lem:card2}.
\end{proof}

\begin{lemma} \label{lem:card1}
Fix $\breve{X}=(\breve{S},\breve{A})\in\mathcal{X}$, $\tilde{X}=(\tilde{S},\tilde{A})\in\mathcal{X}$, and $I\in\mathcal{I} $ arbitrarily. 
Then, $\tilde{S}\in R_{S}(I,\breve{S})$ and $\tilde{A}\in R_{A}(\tilde{S},\breve{X})$ implies that $\sum_{g\in\mathbf{G}(I)}1\{g\breve{X}=\tilde{X}\}=\sum_{g\in\mathbf{G}(I)}1\{g\breve{X}=\breve{X}\}$.
\end{lemma}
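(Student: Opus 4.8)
The plan is to run an orbit–stabilizer argument: the set of group elements mapping $\breve{X}$ to $\tilde{X}$ is a coset of the stabilizer of $\breve{X}$ whenever it is nonempty, so the two counts coincide as soon as \emph{some} element of $\mathbf{G}(I)$ carries $\breve{X}$ to $\tilde{X}$.

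Concretely, I would first dispose of the trivial case $\tilde{X}=\breve{X}$ and then set $H\equiv\{g\in\mathbf{G}(I):g\breve{X}=\breve{X}\}$, which is a subgroup of the finite group $\mathbf{G}(I)$ by Lemma \ref{lem:GI_is_group}; the claimed identity is then $|\{g\in\mathbf{G}(I):g\breve{X}=\tilde{X}\}|=|H|$. Given any $g_{0}\in\mathbf{G}(I)$ with $g_{0}\breve{X}=\tilde{X}$, the map $g\mapsto g_{0}\circ g$ sends $\mathbf{G}(I)$ into itself (closure under composition), is injective ($g_{0}$ is invertible in the group), and carries $H$ exactly onto $\{g\in\mathbf{G}(I):g\breve{X}=\tilde{X}\}$: the inclusion ``$\subseteq$'' is immediate, and ``$\supseteq$'' follows by writing $g'=g_{0}\circ(g_{0}^{-1}\circ g')$ with $g_{0}^{-1}\circ g'\in H$. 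This delivers the equality of cardinalities, so the whole proof reduces to producing one such $g_{0}$.

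For that, I would take $g_{0}:\mathcal{X}\to\mathcal{X}$ to be the transposition that swaps $\breve{X}$ and $\tilde{X}$ and fixes every other point of $\mathcal{X}$; this is a bijection of $\mathcal{X}$ with $g_{0}\breve{X}=\tilde{X}$, so I only need to verify $g_{0}\in\mathbf{G}(I)$, i.e., the conditions of Definition \ref{def:GI}, on every input $\breve{X}'=(\breve{S}',\breve{A}')$. On inputs $\breve{X}'\notin\{\breve{X},\tilde{X}\}$ the map is the identity and the conditions $\breve{S}'\in R_{S}(I,\breve{S}')$, $\breve{A}'\in R_{A}(\breve{S}',(\breve{S}',\breve{A}'))$ hold because each defining relation in Definitions \ref{def:RS} and \ref{def:RA} reduces to an equality of an expression with itself. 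On the input $\breve{X}'=\breve{X}$ the required relations $\tilde{S}\in R_{S}(I,\breve{S})$ and $\tilde{A}\in R_{A}(\tilde{S},(\breve{S},\breve{A}))$ are precisely the hypotheses of the lemma. The only delicate input is $\breve{X}'=\tilde{X}$, where $g_{0}$ returns $\breve{X}$ and I must check $\breve{S}\in R_{S}(I,\tilde{S})$ and $\breve{A}\in R_{A}(\breve{S},(\tilde{S},\tilde{A}))$; here I would invoke the symmetry already noted in the proof of Lemma \ref{lem:GI_is_group}, namely that all conditions in Definitions \ref{def:RS} and \ref{def:RA} are invariant under interchanging the roles of $(\tilde{S},\tilde{A})$ and $(\breve{S},\breve{A})$, so these relations follow from the lemma's hypotheses. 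That establishes $g_{0}\in\mathbf{G}(I)$ and closes the argument.

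The main obstacle — really the only nonroutine point — is verifying that the swap $g_{0}$ belongs to $\mathbf{G}(I)$: Definition \ref{def:GI} must be checked on \emph{all} of $\mathcal{X}$, and the input $\breve{X}'=\tilde{X}$ is exactly the place where one needs the symmetry of the constraints defining $R_{S}$ and $R_{A}$. The orbit–stabilizer step itself is entirely standard.
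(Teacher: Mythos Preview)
Your argument is correct and follows essentially the same orbit--stabilizer route as the paper: both proofs reduce to finding one $g_{0}\in\mathbf{G}(I)$ with $g_{0}\breve{X}=\tilde{X}$ and then use that left multiplication by $g_{0}$ (or by $g_{0}^{-1}$) is a bijection of $\mathbf{G}(I)$ carrying the stabilizer of $\breve{X}$ onto $\{g:g\breve{X}=\tilde{X}\}$. The paper simply asserts the existence of such a $g_{0}$ ``by definition of $\mathbf{G}(I)$, $R_{S}$, and $R_{A}$,'' whereas you supply the explicit construction (the transposition) and verify membership via the symmetry of Definitions~\ref{def:RS} and~\ref{def:RA}; in that sense your write-up fills in a step the paper leaves to the reader.
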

\begin{proof}
Fix $\breve{X}=(\breve{S},\breve{A})\in\mathcal{X}$, $\tilde{X}\in\mathcal{X}$, and $I\in\mathcal{I} $ arbitrarily, and assume that $\tilde{S}\in R_{S}(I, \breve{S})$ and $\tilde{A}\in R_{A}(\tilde{S},\breve{X})$. 
By definition of $\mathbf{G}(I)$, $R_{S}(I,\breve{S})$, and $R_{A}(\tilde{S}, \breve{X})$, $\tilde{S}\in R_{S}(I,\breve{S})$ and $\tilde{A}\in R_{A}(\tilde{S},\breve{X})$ implies that $\exists \breve{g}\in\mathbf{G}(I)$ s.t.\ $\breve{g}\breve{X}=\tilde{X}$. Therefore,
$$
\sum_{g\in\mathbf{G}(I)}1\{g\breve{X}=\tilde{X}\} 
~\overset{(1)}{=}~
\sum_{g\in\mathbf{G}(I)}1\{g\breve{X}=\breve{g}\breve{ X}\} 
~\overset{(2)}{=}~
\sum_{g\in\mathbf{G}(I)}1\{\breve{g}^{-1}g\breve{X}= \breve{X}\} 
~\overset{(3)}{=}~
\sum_{g\in\mathbf{G}(I)}1\{g\breve{X}=\breve{X}\},
$$
where (1) holds by $\breve{g}\breve{X}=\tilde{X}$, (2) by $\breve{ g}\in\mathbf{G}(I)$ and the fact that $\mathbf{G}(I)$ is a transformation group (by Lemma \ref{lem:GI_is_group}), (3) by $\{\breve{g}^{-1}g:g\in\mathbf{G}(I)\}=\mathbf{G}(I)$, as $\mathbf{G}(I)$ is a transformation group (again, by Lemma \ref{lem:GI_is_group}).
\end{proof}

\begin{lemma}\label{lem:card2} 
Fix $\breve{X}=(\breve{S},\breve{A})\in\mathcal{X}$ and $I\in\mathcal{I} $ arbitrarily. Then, $\tilde{S}\in R_{S}(I,\breve{S})$ implies that $|R_{A}(\tilde{S}, \breve{X})|=|R_{A}(\breve{S},\breve{X})|$.
\end{lemma}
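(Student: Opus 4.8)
The plan is to show that $|R_{A}(\tilde{S},\breve{X})|$ depends on the state configuration $\tilde{S}$ only through its transition-count profile
\[
\Big(\textstyle\sum_{i=1}^{n}\sum_{t=1}^{T-1}1\{\tilde{S}_{i,t}=s,\tilde{S}_{i,t+1}=s^{\prime}\}\Big)_{(s,s^{\prime})\in\mathcal{S}\times\mathcal{S}}\quad\text{together with}\quad\Big(\textstyle\sum_{i=1}^{n}1\{\tilde{S}_{i,T}=s\}\Big)_{s\in\mathcal{S}},
\]
and then to verify that the condition $\tilde{S}\in R_{S}(I,\breve{S})$ forces this profile to coincide with the one of $\breve{S}$. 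Since $\breve{S}\in R_{S}(I,\breve{S})$ trivially, combining the two observations delivers $|R_{A}(\tilde{S},\breve{X})|=|R_{A}(\breve{S},\breve{X})|$.

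For the first step, I would fix $\tilde{S}$ and partition the $nT$ positions $(i,t)$ into the blocks $P_{s,s^{\prime}}\equiv\{(i,t):t\le T-1,\ \tilde{S}_{i,t}=s,\ \tilde{S}_{i,t+1}=s^{\prime}\}$, for $(s,s^{\prime})\in\mathcal{S}\times\mathcal{S}$, and $Q_{s}\equiv\{(i,T):\tilde{S}_{i,T}=s\}$, for $s\in\mathcal{S}$; these blocks are pairwise disjoint and exhaust all positions. An element $\tilde{A}\in\mathcal{A}^{nT}$ assigns an action to each position, and conditions (a)--(b) of Definition \ref{def:RA} prescribe, separately within each block, exactly how many of its positions must receive each action $a\in\mathcal{A}$ (namely $\sum_{i=1}^{n}\sum_{t=1}^{T-1}1\{\breve{S}_{i,t}=s,\breve{A}_{i,t}=a,\breve{S}_{i,t+1}=s^{\prime}\}$ within $P_{s,s^{\prime}}$, and $\sum_{i=1}^{n}1\{\breve{S}_{i,T}=s,\breve{A}_{i,T}=a\}$ within $Q_{s}$). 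Since the constraints decouple across blocks, $|R_{A}(\tilde{S},\breve{X})|$ is a product over all blocks of multinomial coefficients whose top argument is the block size ($|P_{s,s^{\prime}}|$ or $|Q_{s}|$) and whose remaining arguments depend only on $\breve{X}$ (a factor being $0$ whenever the prescribed counts fail to sum to the block size). In particular, $|R_{A}(\tilde{S},\breve{X})|$ is a function of $(|P_{s,s^{\prime}}|)_{s,s^{\prime}}$, $(|Q_{s}|)_{s}$, and $\breve{X}$ only.

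For the second step, suppose $\tilde{S}\in R_{S}(I,\breve{S})$. Summing condition (b) of Definition \ref{def:RS} over $i\in I^{c}$ and adding condition (c) yields $|P_{s,s^{\prime}}|=\sum_{i=1}^{n}\sum_{t=1}^{T-1}1\{\breve{S}_{i,t}=s,\breve{S}_{i,t+1}=s^{\prime}\}$ for every $(s,s^{\prime})$, i.e., the first half of the profile matches that of $\breve{S}$. For the terminal counts $|Q_{s}|$, the telescoping identity already used in the proof of Lemma \ref{lemma:RS_terminal},
\[
\textstyle\sum_{i=1}^{n}1\{\tilde{S}_{i,T}=s\}=\sum_{i=1}^{n}1\{\tilde{S}_{i,1}=s\}+\sum_{\bar{s}\in\mathcal{S}}\sum_{i=1}^{n}\sum_{t=1}^{T-1}1\{\tilde{S}_{i,t}=\bar{s},\tilde{S}_{i,t+1}=s\}-\sum_{\bar{s}\in\mathcal{S}}\sum_{i=1}^{n}\sum_{t=1}^{T-1}1\{\tilde{S}_{i,t}=s,\tilde{S}_{i,t+1}=\bar{s}\},
\]
reduces the claim to condition (a) of Definition \ref{def:RS} together with the transition-count equality just established. (Alternatively, one may invoke Lemma \ref{lemma:RS_terminal} marketwise for $i\in I^{c}$ and once with the pair $I$ itself.) Combining the two steps gives the lemma.

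The only step that needs genuine care is the first one: recognizing that the constraints defining $R_{A}$ act blockwise on the partition of positions induced by $\tilde{S}$, so that the cardinality is a product of multinomial coefficients and hence depends on $\tilde{S}$ only through its block-size profile. The rest is routine bookkeeping with indicator sums, of the same kind carried out in the proof of Lemma \ref{lemma:RS_terminal}.
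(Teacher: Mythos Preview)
Your proof is correct and takes a genuinely different route from the paper's. The paper establishes the equality by constructing explicit injections in both directions: it invokes Lemma \ref{lem:RA_by_transformation} to obtain a permutation $\pi$ with $\tilde{S}=\breve{S}_\pi$, applies $\pi$ to each element of $R_A(\breve{S},\breve{X})$, and then uses the group structure of $\mathbf{G}(I)$ (Lemma \ref{lem:GI_is_group}) to argue that the permuted action configurations land in $R_A(\tilde{S},\breve{X})$; the reverse inequality is handled symmetrically. Your argument instead computes $|R_A(\tilde{S},\breve{X})|$ directly as a product of multinomial coefficients over the blocks $P_{s,s'}$ and $Q_s$, observes that this product depends on $\tilde{S}$ only through the block sizes, and then checks that membership in $R_S(I,\breve{S})$ preserves those sizes. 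Your approach is more elementary---it bypasses both Lemma \ref{lem:RA_by_transformation} and the group-theoretic machinery---and makes the combinatorial content of the lemma transparent. The paper's approach, on the other hand, yields an explicit bijection rather than a mere equality of cardinalities, which is closer in spirit to how the result is actually used in the proof of Lemma \ref{lem:equidist}.
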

\begin{proof}
Fix $\breve{X}=(\breve{S},\breve{A})\in\mathcal{X}$ and $I\in\mathcal{I} $ arbitrarily, and assume that $\tilde{S}\in R_{S}(I,\breve{S})$.

We first show that $|R_{A}(\breve{S},(\breve{S},\breve{A}))|\leq|R_{A}(\tilde{S},(\breve{S},\breve{A}))|$. 
Let $(A^{1},\ldots ,A^{C})$ enumerate the (distinct) elements in $R_{A}(\breve{S},(\breve{S},\breve{A}))$. 
By $\tilde{S}\in R_{S}(I, \breve{S})$ and Lemma \ref{lem:RA_by_transformation}, there is a permutation $ \pi $ s.t\ $\tilde{S}=\breve{S}_{\pi }$ and $A_{\pi }^{c}\in R_{A}(\tilde{S} ,(\breve{S},A^{c}))$ for each $c=1,\ldots ,C$. 
We now show that $(A_{\pi }^{1},\ldots ,A_{\pi }^{C})$ are all distinct elements. 
To this end, suppose that $\exists c_{1},c_{2}\in\{ 1,\ldots ,C\} $ s.t.\ $A_{\pi }^{c_{1}}=A_{\pi }^{c_{2}}$. 
If that were the case, and by the fact that a permutation is a bijective relationship, we conclude that $A^{c_{1}}=A^{c_{2}}$. Since $(A^{1},\ldots ,A^{C})$ are distinct, we conclude that $c_{1}=c_{2}$, as desired. 
To conclude the argument, it suffices to show that $A_{\pi }^{c}\in R_{A}(\tilde{S} ,(\breve{S},\breve{A}))$ for all $c=1,\ldots ,C$. To this end, choose $c=1,\ldots ,C$ arbitrarily. 
Since $\breve{S}\in R_{S}(I,\breve{S})$ (trivially) and $A^{c}\in R_{A}(\breve{S},(\breve{S},\breve{A}))$, Definition \ref{def:GI} implies that $\exists g_{1}\in\mathbf{G}(I)$ s.t.\ $g_{1}(\breve{S},\breve{A})=(\breve{S} ,A^{c})$. 
Since $\tilde{S}\in R_{S}(I,\breve{S})$ and $A_{\pi }^{c}\in R_{A}(\tilde{S},(\breve{S},A^{c}))$, Definition \ref{def:GI} implies that $\exists g_{2}\in\mathbf{G}(I)$ s.t.\ $ g_{2}(\breve{S},A^{c})=(\tilde{S},A_{\pi }^{c})$. 
Since $\mathbf{G}(I)$ is a transformation group (by Lemma \ref{lem:GI_is_group}), we conclude that $g_{3}=g_{2}\circ g_{1}\in\mathbf{G}(I)$. 
Since $g_{3}(\breve{S},\breve{A})=(\tilde{S},A_{\pi }^{c})$ and $ g_{3}\in\mathbf{G}(I)$, Definition \ref{def:GI} implies that $ A_{\pi }^{c}\in R_{A}(\tilde{S},(\breve{S},\breve{A}))$, as desired.

We next show that $|R_{A}(\breve{S},(\breve{S},\breve{A}))|\geq |R_{A}(\tilde{S},(\breve{S},\breve{A}))|$. Let $(A^{1},\ldots ,A^{C})$ enumerate the (distinct) elements in $R_{A}(\tilde{S},(\breve{S},\breve{A}))$. Since $\tilde{S}\in R_{S}(I, \breve{S})$ and by the fact that the Definition \ref{def:RS} treats $\tilde{S}$ and $\breve{S}$ symmetrically, we conclude that $\breve{S}\in R_{S}(I, \tilde{S})$. In turn, by $\breve{S}\in R_{S}(I,\tilde{S})$ and Lemma \ref{lem:RA_by_transformation}, there is a permutation $\pi $ s.t.\ $ \breve{S}=\tilde{S}_{\pi }$ and $A_{\pi }^{c}\in R_{A}(\breve{S},(\tilde{S},A^{c}))$ for each $c=1,\ldots ,C$. By repeating the previous argument, we can show that $(A_{\pi }^{1},\ldots ,A_{\pi }^{C})$ are all distinct elements. To conclude the proof, it suffices to show that $A_{\pi }^{c}\in R_{A}(\breve{S},(\breve{S},\breve{A}))$ for all $c=1,\ldots ,C$. To this end, choose $c=1,\ldots ,C$ arbitrarily. Since $\tilde{S}\in R_{S}(I,\breve{S})$ and $A^{c}\in R_{A}(\tilde{S},(\breve{S},\breve{A}))$, Definition \ref{def:GI} implies that $\exists g_{1}\in\mathbf{G}(I)$ s.t.\ $g_{1}(\breve{S} ,\breve{A})=(\tilde{S},A^{c})$. Since $\breve{S}\in R_{S}(I,\tilde{S})$ and $A_{\pi }^{c}\in R_{A}(\breve{S},(\tilde{S} ,A^{c}))$, Definition \ref{def:GI} implies that $\exists g_{2}\in\mathbf{G}(I)$ s.t.\ $g_{2}(\tilde{S},A^{c})=(\breve{S} ,A_{\pi }^{c})$. Since $\mathbf{G}(I)$ is a transformation group (by Lemma \ref{lem:GI_is_group}), we conclude that $g_{3}=g_{2}\circ g_{1}\in\mathbf{G}(I)$. Since $g_{3}(\breve{S},\breve{A})=(\breve{S},A_{\pi }^{c})$ and $g_{3}\in\mathbf{G}(I)$, Definition \ref{def:GI} implies that $A_{\pi }^{c}\in R_{A}(\breve{S},(\breve{S},\breve{A}))$, as desired.
\end{proof}

\begin{lemma}
\label{lem:RA_by_transformation} 
For any $\breve{S}\in\mathcal{S}^{nT}$, $I\in\mathcal{I}$ and $\tilde{S}\in R_{S}(I, \breve{S})$, there exists a permutation $\pi :\{1,\dots ,n\}\times \{1,\dots ,T\}\to\{1,\dots ,n\}\times \{1,\dots ,T\}$ such that $ \tilde{S}=\breve{S}_{\pi }$ and $\breve{A}_{\pi }\in R_{A}(\tilde{S},(\breve{S},\breve{A}))$ for every $\breve{A}\in\mathcal{A}^{nT}$.
\end{lemma}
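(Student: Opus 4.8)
The plan is to build the permutation $\pi$ from $\breve S$ and $\tilde S$ alone — it cannot be allowed to depend on $\breve A$, since the same $\pi$ must work for every $\breve A$ — by separately matching the ``edge positions'' and the ``terminal positions'' of the two state arrays. Call $(i,t)$ with $t\le T-1$ an \emph{edge position}; in $\breve S$ it carries the directed edge $(\breve S_{i,t},\breve S_{i,t+1})$ and in $\tilde S$ it carries $(\tilde S_{i,t},\tilde S_{i,t+1})$. Call $(i,T)$ a \emph{terminal position}; it carries the single label $\breve S_{i,T}$, resp.\ $\tilde S_{i,T}$.

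I would first record two multiset identities that follow from $\tilde S\in R_S(I,\breve S)$. First, the multiset of edges of $\tilde S$ equals that of $\breve S$: for $i\in I^c$ this holds market by market by Definition \ref{def:RS}(b), for the markets in $I$ the aggregate is preserved by Definition \ref{def:RS}(c), and summing over $i$ gives the global identity. Second, the multiset $\{\tilde S_{i,T}:i=1,\dots,n\}$ equals $\{\breve S_{i,T}:i=1,\dots,n\}$: for each $i\in I^c$, Lemma \ref{lemma:RS_terminal} applied to the pair $\{i,i\}$ (using Definition \ref{def:RS}(a),(b)) gives $\tilde S_{i,T}=\breve S_{i,T}$, while for the markets in $I$, Lemma \ref{lemma:RS_terminal} applied to $I$ preserves $\sum_{i\in I}1\{\tilde S_{i,T}=s\}$; summing over $i$ gives the identity.

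Next I would define $\pi$: choose any bijection from the edge positions of $\tilde S$ onto the edge positions of $\breve S$ that sends each edge position carrying the labeled edge $(s,s')$ to one carrying the same $(s,s')$ — possible by the first identity — together with any bijection from the terminal positions onto the terminal positions matching labels, i.e.\ with $\breve S_{\pi(i,T)}=\tilde S_{i,T}$ — possible by the second identity. Since both sides have $n(T-1)$ edge positions and $n$ terminal positions, the union is a permutation of $\{1,\dots,n\}\times\{1,\dots,T\}$. By construction $\breve S_{\pi(i,t)}$ is the tail of the matched $\breve S$-edge and hence equals $\tilde S_{i,t}$ at each edge position, and equals $\tilde S_{i,T}$ at each terminal position; so $\tilde S=\breve S_\pi$, which is the first claim.

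Finally I would verify $\breve A_\pi\in R_A(\tilde S,(\breve S,\breve A))$ for an arbitrary $\breve A$. For Definition \ref{def:RA}(a): if $\pi(i,t)=(j,r)$ at an edge position, then by construction $(\breve S_{j,r},\breve S_{j,r+1})=(\tilde S_{i,t},\tilde S_{i,t+1})$ and $(\breve A_\pi)_{i,t}=\breve A_{j,r}$, so reindexing the sum over the edge positions of $\tilde S$ by $(j,r)=\pi(i,t)$ — a bijection onto the edge positions of $\breve S$ — turns the left-hand count of $(s,a,s')$-triples into the right-hand one. Definition \ref{def:RA}(b) follows the same way by reindexing the sum over terminal positions and using $\breve S_{\pi(i,T)}=\tilde S_{i,T}$. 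The point that makes all of this go through — and the one to be careful about — is that the constraints defining $R_A$ only count (state, action, next-state) triples at edge positions and (state, action) pairs at terminal positions; consequently $\pi$ only needs to match edge labels position-by-position and to respect the edge/terminal split, and in particular it does \emph{not} need to preserve the path connectivity of the walks (one need not have $\pi(i,t+1)$ equal to the time-successor of $\pi(i,t)$). Recognizing this weaker requirement is what makes such a $\pi$ exist and is the crux of the argument.
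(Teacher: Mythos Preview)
Your proposal is correct and follows essentially the same route as the paper: both partition the index set $\{1,\dots,n\}\times\{1,\dots,T\}$ into ``edge positions'' classified by $(s,s')$ and ``terminal positions'' classified by $s$, observe that $\tilde S\in R_S(I,\breve S)$ forces the class sizes to coincide, and define $\pi$ by any bijection that matches within classes; the verification of Definition~\ref{def:RA}(a)--(b) via reindexing is identical. You are in fact slightly more explicit than the paper in one place: the equality $|\{i:\tilde S_{i,T}=s\}|=|\{i:\breve S_{i,T}=s\}|$ is not part of Definition~\ref{def:RS} and genuinely requires Lemma~\ref{lemma:RS_terminal}, which you invoke, whereas the paper just attributes it to $\tilde S\in R_S(I,\breve S)$.
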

\begin{proof}
Fix $\breve{S},\in\mathcal{S}^{nT}$ and $I\in\mathcal{I}$ arbitrarily and assume that $\tilde{S}\in R_{S}(I,\breve{S})$. For every $s,s^{\prime}\in\mathcal{S}$, let
\begin{align*}
\mathrm{Index}_{1}(s,s^{\prime})& ~=~\{(i,t)\in\{1,\dots,n\}\times \{1,\dots,T-1\}:(\breve{S}_{i,t},\breve{S}_{i,t+1})=(s,s^{\prime})\}, \\
\mathrm{Index}_{2}(s,s^{\prime})& ~=~\{(i,t)\in\{1,\dots,n\}\times \{1,\dots,T-1\}:(\tilde{S}_{i,t},\tilde{S}_{i,t+1})=(s,s^{\prime})\}, \\
\mathrm{Index}_{1}(s)&~=~\{(i,T):i\in\{1,\dots,n\},\breve{S}_{i,T}=s\}, \\
\mathrm{Index}_{2}(s)& ~=~\{(i,T):i\in\{1,\dots,n\},\tilde{S}_{i,T}=s\}.
\end{align*}
We use
\begin{align*}
C(s,s^{\prime})& ~\equiv~ |\mathrm{Index}_{1}(s,s^{\prime})|~\overset{(1)}{=}~| \mathrm{Index}_{2}(s,s^{\prime})|, \\
C(s)& ~\equiv~ |\mathrm{Index}_{1}(s)|~\overset{(2)}{=}~|\mathrm{Index}_{2}(s)|,
\end{align*}
where (1) and (2) hold by $\tilde{S}\in R_{S}(I,\breve{S})$. 
 
For every $s,s^{\prime}\in\mathcal{S}$, we can enumerate $\mathrm{Index}_{1}(s,s^{\prime})$ by $(\nu_{1}(1,s,s^{\prime}),\ldots ,\nu_{1}(C(s,s^{\prime}),s,s^{\prime}))$, $ \mathrm{Index}_{2}(s,s^{\prime})$ by $(\nu_{2}(1,s,s^{\prime}),\ldots ,\nu_{2}(C(s,s^{\prime}),s,s^{\prime}))$, $\mathrm{Index}_{1}(s)$ by $(\nu_{1}(1,s),\dots,\nu_{1}(C(s),s))$, and $ \mathrm{Index}_{2}(s)$ by $(\nu_{2}(1,s),\dots,\nu_{2}(C(s),s))$. By definition, $(\nu_{1}(1,s,s^{\prime}),\ldots,\nu_{1}(C(s,s^{\prime}),s,s^{\prime}))$ represent the $(i,t)$ indices that satisfy $(\breve{S}_{i,t},\breve{S}_{i,t+1})=(s,s^{\prime})$ and $(\nu_{1}(1,s),\ldots ,\nu_{1}(C(s),s))$ represent the $(i,T)$ indices that satisfy $\breve{S}_{i,T}=s$, $(\nu_{2}(1,s,s^{\prime}),\ldots ,\nu_{2}(C(s,s^{\prime}),s,s^{\prime}))$ represent the $(i,t)$ indices that satisfy $(\tilde{S}_{i,t},\tilde{S}_{i,t+1})=(s,s^{\prime})$, and $(\nu_{2}(1,s),\dots ,\nu_{2}(C(s),s))$ represents the $(i,T)$ indices that satisfy $\tilde{S}_{i,T}=s$.

These enumerations allow us to interpret $\breve{S}$ as a permutation of the values of $\tilde{S}$. 
We denote this permutation by $\pi :\{1,\dots ,n\}\times \{1,\dots ,T\}\to \{1,\dots ,n\}\times \{1,\dots ,T\}$, and characterize it next. 
For any $(i,t)\in\{1,\dots ,n\}\times \{1,\dots ,T-1\}$, there exists $(s,s^{\prime})\in\mathcal{S}$ and $c=1,\ldots ,C(s,s^{\prime})$ s.t.\ $(i,t)=\nu_{1}(c,s,s^{\prime})\in\mathrm{Index}_{1}(s,s^{\prime})$. 
In this case, set $\pi(i,t)=\nu_{2}(c,s,s^{\prime})$. By this construction,
$$
\breve{S}_{i,t}~=~\breve{S}_{\nu_{1}(c,s,s^{\prime})}~=~\tilde{S}_{\nu_{2}(c,s,s^{\prime})}~=~\tilde{S}_{\pi(i,t)},
$$
Similarly, for any $i\in\{1,\dots ,n\}$, there exists $s\in\mathcal{S}$ and $c=1,\ldots ,C(s)$ s.t.\ $(i,T)=\nu_{1}(c,s)\in\mathrm{Index}_{1}(s)$. In this case, set $\pi(i,T)=\nu_{2}(c,s)$. By this construction,
$$
\breve{S}_{i,T}~=~\breve{S}_{\nu_{2}(c,s,)}~=~\tilde{S}_{\nu_{2}(c,s)}~=~\tilde{S }_{\pi(i,T)}.
$$

To show the second part, for any $\breve{A}\in\mathcal{A}^{nT}$, consider $\tilde{A}=\breve{A}_{\pi }$. For each $s,s^{\prime}\in\mathcal{S}$, note that
\begin{align}
\tilde{A}_{\nu_{2}(c,s,s^{\prime})}& ~=~\breve{A}_{\nu_{1}(c,s,s^{\prime})}~~ \text{ for }c=1,\ldots ,C(s,s^{\prime})\nonumber\\
\tilde{A}_{\nu_{2}(c,s)}& ~=~\breve{A}_{\nu_{1}(c,s)}~~\text{ for }c=1,\ldots ,C(s). \label{eq:permut1}
\end{align}
To complete the proof, it suffices to show that $\tilde{A}\in R_{A}(\tilde{S} ,\breve{X})$. To this end, it suffices to verify conditions (a)-(b) in Definition \ref{def:RA}.
We only show condition (a), as condition (b) can be shown using an analogous argument. For any $s,s^{\prime}\in\mathcal{S}$ and $a\in\mathcal{A}$, consider the following derivation:
\begin{align}
\sum_{i=1}^{n}\sum_{t=1}^{T-1}1\{\breve{S}_{i,t}=s,\breve{A}_{i,t}=a,\breve{S}_{i,t+1}=s^{\prime}\}
& ~\overset{(1)}{=}~
\sum_{(i,t)\in\mathrm{Index}_{1}(s,s^{\prime})}1\{\breve{A}_{i,t}=a\}\nonumber\\
& ~=~\sum_{c=1}^{C(s,s^{\prime})}1\{\breve{A}_{\nu_{1}(c,s,s^{\prime})}=a\}\nonumber\\
& ~\overset{(2)}{=}~
\sum_{c=1}^{C(s,s^{\prime})}1\{\tilde{A}_{\nu_{2}(c,s,s^{\prime})}=a\}\nonumber\\
& ~=\sum_{(i,t)\in\mathrm{Index}_{2}(s,s^{\prime})}1\{\tilde{A}_{i,t}=a\}\nonumber\\
& ~\overset{(3)}{=}~
\sum_{i=1}^{n}\sum_{t=1}^{T-1}1\{\tilde{S}_{i,t}=s,\tilde{A}_{i,t}=a,\tilde{S}_{i,t+1}=s^{\prime}\}, \label{eq:permut2}
\end{align}
where (1) holds by fact that $\mathrm{Index}_{1}(s,s^{\prime})$ is the collection of all indices $(i,t)\in\{1,\dots ,n\}\times \{1,\dots ,T-1\}$ s.t.\ $(\breve{S}_{i,t},\breve{S}_{i,t+1})=(s,s^{\prime})$, (2) by \eqref{eq:permut1}, and (3) by the fact that $\mathrm{Index}_{2}(s,s^{\prime})$ is the collection of all indices $(i,t)\in\{1,\dots ,n\}\times \{1,\dots ,T-1\}$ s.t.\ $(\breve{S}_{i,t},\breve{S}_{i,t+1})=(s,s^{\prime})$.
\end{proof}

\begin{lemma}\label{lem:q_symmetric}
The transition probability in \eqref{eq:MarkovChain_2} is symmetric, i.e., for any $g,\breve{g}\in\mathbf{G}$, $P(G^{(k+1)}=\breve{g}|G^{(k)}=g)=P(G^{(k+1)}=g|G^{(k)}=\breve{g})$.
\end{lemma}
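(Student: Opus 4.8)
The plan is to write both transition probabilities in a common ``difference of transformations'' form and observe that, term by term in the sum over $I\in\mathcal{I}$, the two expressions coincide because each $\mathbf{G}(I)$ is a group and hence closed under inversion. First I would fix $g,\breve{g}\in\mathbf{G}$ and unpack the definition in \eqref{eq:MarkovChain_2}, writing
$$
P(G^{(k+1)}=\breve{g}\mid G^{(k)}=g)~=~\sum_{I\in\mathcal{I}}\frac{1}{|\mathcal{I}|\,|\mathbf{G}(I)|}\sum_{\bar{g}\in\mathbf{G}(I)}1\{\breve{g}=\bar{g}\circ g\}.
$$
Since $\mathbf{G}$ is a group (Lemma \ref{lemma:Gdefn}), $g^{-1}\in\mathbf{G}$, so the event $\{\breve{g}=\bar{g}\circ g\}$ is equivalent to $\{\bar{g}=\breve{g}\circ g^{-1}\}$. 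Therefore the inner sum collapses to $\sum_{\bar{g}\in\mathbf{G}(I)}1\{\bar{g}=\breve{g}\circ g^{-1}\}=1\{\breve{g}\circ g^{-1}\in\mathbf{G}(I)\}$, giving
$$
P(G^{(k+1)}=\breve{g}\mid G^{(k)}=g)~=~\sum_{I\in\mathcal{I}}\frac{1\{\breve{g}\circ g^{-1}\in\mathbf{G}(I)\}}{|\mathcal{I}|\,|\mathbf{G}(I)|}.
$$

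The symmetric computation, swapping the roles of $g$ and $\breve{g}$, yields $P(G^{(k+1)}=g\mid G^{(k)}=\breve{g})=\sum_{I\in\mathcal{I}}1\{g\circ\breve{g}^{-1}\in\mathbf{G}(I)\}/(|\mathcal{I}|\,|\mathbf{G}(I)|)$. To finish, I would note that $g\circ\breve{g}^{-1}=(\breve{g}\circ g^{-1})^{-1}$, and invoke Lemma \ref{lem:GI_is_group}: each $\mathbf{G}(I)$ is a group, hence closed under taking inverses, so $\breve{g}\circ g^{-1}\in\mathbf{G}(I)$ if and only if $(\breve{g}\circ g^{-1})^{-1}\in\mathbf{G}(I)$. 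Consequently the two sums agree summand by summand, which proves the claimed equality.

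There is essentially no serious obstacle here; the only points requiring care are purely bookkeeping—ensuring $g^{-1}\in\mathbf{G}$ so that the reindexing $\bar g\mapsto \breve g\circ g^{-1}$ is legitimate (from Lemma \ref{lemma:Gdefn}), and that closure under inversion is applied to the small groups $\mathbf{G}(I)$ rather than to $\mathbf{G}$ itself (from Lemma \ref{lem:GI_is_group}). Both facts are already available in the paper, so the argument is a few lines.
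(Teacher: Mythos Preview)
Your proof is correct and essentially the same as the paper's: both arguments reduce to the fact that each $\mathbf{G}(I)$ is closed under inversion (Lemma \ref{lem:GI_is_group}). The only cosmetic difference is that you first collapse the inner sum to the membership indicator $1\{\breve{g}\circ g^{-1}\in\mathbf{G}(I)\}$ and then invoke closure under inverses, whereas the paper reindexes the inner sum via the bijection $\tilde g\mapsto\tilde g^{-1}$ directly.
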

\begin{proof} Fix $g,\breve{g}\in\mathbf{G }$ arbitrarily and consider the following argument.
\begin{align*}
P(G^{(k+1)}=\breve{g}|G^{(k)}=g)
&~=~\sum_{I\in\mathcal{I}}\frac{1}{|\mathcal{I}|}\sum_{\tilde{g}\in\mathbf{G}(I)}\frac{1\{\breve{g}=\tilde{g}\circ g\}}{|\mathbf{G}(I)|} \\
&~\overset{(1)}{=}~\sum_{I\in\mathcal{I}}\frac{1}{|\mathcal{I}|}\sum_{\tilde{g}\in\mathbf{G}(I)}\frac{1\{g=\tilde{g} ^{-1}\circ \breve{g}\}}{|\mathbf{G}(I)|} \\
&~\overset{(2)}{=}~\sum_{I\in\mathcal{I}}\frac{1}{|\mathcal{I}|}\sum_{\tilde{g}\in\mathbf{G}(I)}\frac{1\{g=\tilde{g}\circ \breve{g}\}}{|\mathbf{ G}(I)|}\\
&~=~P(G^{(k+1)}=g|G^{(k)}=\breve{g}),
\end{align*}
where (1) holds by the fact that $\mathbf{G}(I)$ is a transformation group (by Lemma \ref{lem:GI_is_group}), and so $\exists \tilde{g}^{-1}\in\mathbf{G }(I)$ for any $\tilde{g}\in\mathbf{G}(I)$, and that $1\{\breve{g}=\tilde{g}\circ g\}=1\{\tilde{g} ^{-1}\circ \breve{g}=g\}$, and (2) by defining $\mathbf{G}(I)=\{ \tilde{g}^{-1} :\tilde{g}\in\mathbf{G}(I)\} $, which holds because $\mathbf{G}(I)$ is a transformation group (again, by Lemma \ref{lem:GI_is_group}).
\end{proof}

\end{small}
\bibliography{BIBLIOGRAPHY}
\end{document}